\newtheorem{definition}{Definition}
\newtheorem{theorem}{Theorem}
\newtheorem{lemma}{Lemma}
\DeclareMathOperator{\msg}{MSG_3}
\DeclareMathOperator{\msgt}{MSG_2}
\def\Pax{{P_a^x}}
\def\Qby{{Q_b^y}}
\def\Fcxy{{F_c^{xy}}}
\def\Pxy{{\Pi^{xy}}}
\DeclareMathOperator{\otwo}{\omega_2}
\DeclareMathOperator{\othree}{\omega_3}
\DeclareMathOperator{\oexp}{\omega_\textup{exp}}
\DeclareMathOperator{\Prtext}{Pr}
\newcommand{\kettext}[1]{| #1 \rangle}
\newcommand{\ketbratext}[2]{\left|#1\right\rangle\kern-0.3em\left\langle#2\right|}
\def\lkey{{l_\textup{key}}}
\def\dtol{{\delta_\textup{tol}}}
\def\epscorr{{\varepsilon_\textup{corr}}}
\def\epssou{{\varepsilon_\textup{sou}}}
\def\epssec{{\varepsilon_\textup{sec}}}
\def\epscom{{\varepsilon_\textup{com}}}
\DeclareMathOperator{\EC}{EC}
\DeclareMathOperator{\PE}{PE}
\DeclareMathOperator{\PA}{PA}
\DeclareMathOperator{\key}{key}
\DeclareMathOperator{\g}{g}
\DeclareMathOperator{\hon}{hon}
\def\WAB{{W_\textup{AB}}}
\def\GE{{G_\textup{E}}}
\def\GEc{{G^c_\textup{E}}}
\DeclareMathOperator{\Max}{Max}
\DeclareMathOperator{\Min}{Min}
\DeclareMathOperator{\argmin}{arg\,min}
\DeclareMathOperator{\supp}{supp}
\DeclareMathOperator{\CPTP}{CPTP}
\begin{document}

\title{Device independent security of quantum key distribution from monogamy-of-entanglement games}

\author{Enrique Cervero-Mart\'{i}n}
\email{enrique.cervero@u.nus.edu}
\affiliation{Centre for Quantum Technologies, National University of Singapore}

\author{Marco Tomamichel}
\email{marco.tomamichel@nus.edu}
\affiliation{Centre for Quantum Technologies, National University of Singapore and\\
Department of Electrical and Computer Engineering, National University of Singapore}


\begin{abstract}
We analyse two party non-local games whose predicate requires Alice and Bob to generate matching bits, and their three party extensions where a third player receives all inputs and is required to output a bit that matches that of the original players.
We propose a general device-independent quantum key distribution protocol for the subset of such non-local games that satisfy a monogamy-of-entanglement property characterised by a gap in the maximum winning probability between the bipartite and tripartite versions of the game.
This gap is due to the optimal strategy for two players requiring entanglement, which due to its monogamy property cannot be shared with any additional players.
Based solely on the monogamy-of-entanglement property, we provide a simple proof of information theoretic security of our protocol.
Lastly, we numerically optimize the finite and asymptotic secret key rates of our protocol using the magic square game as an example, for which we provide a numerical bound on the maximal tripartite quantum winning probability which closely matches the bipartite classical winning probability. 
Further, we show that our protocol is robust for depolarizing noise up to about $2.88\%$, providing the first such bound for general attacks for magic square based quantum key distribution.
\end{abstract}

\maketitle
\tableofcontents

\section{Introduction}
Device independent quantum key distribution (DIQKD) allows two honest parties (commonly called Alice and Bob) to establish an information theoretic secure key with minimal assumptions on the inner workings of their lab equipment, and in the presence of an eavesdropper (aptly named Eve) who in particular might have distributed, or is otherwise in control of the devices of the honest parties. 
Even when using such uncharacterised devices, the security of DIQKD can be guaranteed through a large enough violation of a \emph{Bell inequality} \cite{BHK05, AMP06}, as this ensures that the correlations shared amongst the honest parties arise from a quantum behaviour.
Such violations are used in DIQKD to certify that the honest parties share an entangled state, which may not be further entangled to an eavesdropper due to the \emph{monogamy of entanglement}, a property of multi-partite quantum systems which asserts that strongly correlated two party states may only be weakly correlated to a third party \cite{Werner89, DPS04}.
Crucially, the {monogamy-of-entanglement} property asserts that meddling eavesdroppers may not be strongly correlated to the honest parties measurement results if the latter violate a Bell inequality, which implies that the eavesdropper's available information about the honest users' exchanged keys in a DIQKD protocol may be bounded.

{To make the latter statement more precise, secrecy of a QKD protocol in the finite round regime requires a lower bound on the smooth min-entropy of an honest party's raw key conditioned on the eavesdropper's side information \cite{Renner05, TL17}.
Roughly speaking, the smooth min-entropy measures the worst-case uncertainty of Alice's raw key from the perspective of Eve.
Lower bounding this quantity is difficult in practice, but dependant on the assumptions of the operations across the rounds of a QKD protocol that produce the raw key, some simplifications are possible.}
For example, using the \emph{Asymptotic Equipartition Property} \cite{TCR09} if the channels are identical and memoryless, the \emph{Entropy Accumulation Theorem} (EAT) \cite{DFR20, DF19} if the channels satisfy a Markov chain condition, or the more recent \emph{Generalised Entropy Accumulation Theorem} (GEAT) \cite{MFSR22} if the channels satisfy a no-signalling condition.
The EAT and GEAT provide the highest security guarantee as they allow the eavesdropper to hold a transcript of any public discussion and a quantum memory when performing their attack.
This is known as a \emph{coherent} or \emph{general} attack.
In contrast, \emph{memoryless} (also known as \emph{collective}) attacks force the eavesdropper to apply her attack after each round of measurements by Alice and Bob without any additional classical or quantum side information.
Security of DIQKD protocols for general attacks in the finite round regime has been analysed extensively.
See for example the framework in \cite{Renner05, TL17, T21}, the first security proofs in the finite regime \cite{VR08, SPV10, TLGR12, TFKW13, VV14}, security proofs via EAT \cite{AFDF+18, AFRV19, T21, TSG+21, STP+21, TSB+22, NDN+22}, and via the more recent GEAT in \cite{MR22}.
See also \cite{PAB+20, ZLetal23, PGTKGL23} for recent reviews on other tools and techniques, and \cite{NDN+22, Zetal22} for experimental realizations of DIQKD.

We remark that monogamy of entanglement can be used directly to prove the secrecy of QKD protocols by bounding the information available to the eavesdropper.
To do this, monogamy-of-entanglement games were introduced in \cite{TFKW13} to bound the probability with which an eavesdropper is able to make a successful guess of the key sifted by the honest parties in a semi-DIQKD protocol. 
Informally, monogamy-of-entanglement games are protocols which require the participating parties to produce the same output, but whose success probability decreases as more parties are added.
This gap in winning probabilities is due to the optimal strategy for two parties requiring entanglement, which due to monogamy-of-entanglement property cannot be shared with additional parties.
In fact, the use of monogamy of-entanglement in security proofs has recently received attention for QKD and various cryptographic primitives.
For example, in \cite{V17, JMS20, JK21} the authors use the magic square game, whose bipartite optimal strategy self-tests maximal entanglement \cite{WBMS16}, to establish secure cryptographic keys. 
Likewise, the authors in \cite{CV22, CVV22} introduce a monogamy-of-entanglement game based on coset states and use it as primitive in a continuous variable QKD protocol.
Furthermore, monogamy-of-entanglement games are used in \cite{PK18} for oblivious transfer, and in \cite{CLLZ22, KT22} for uncloneable encryption.

This begs the question: is it possible to explicitly prove secrecy \emph{of all} DIQKD protocols using arbitrary games which exhibit monogamy of entanglement?
Typically, the Clauser-Horne-Shimony-Holt (CHSH) game is chosen due to its simplicity and relative ease of implementation \cite{AMP06, PAB+09, VV14, TSG+21, TSB+22}, although more general CHSH games \cite{SBV+21} and even other games like the magic square game \cite{MS16, JMS20, V17, JK21} have been considered.
In this work, we use GEAT to show it is indeed possible to construct device independent quantum key distributions protocols from any two player non-local game which exhibits the monogamy-of-entanglement property characterised by a sufficiently large gap in the winning probability between two party and three party instances of the game.
For this class of games, we show in \thref{th_secrecy} that the entropy of Alice's key bit conditioned on Eve's side information in one instance of the non-local game is bounded below by a family of simple affine functions $g:[0,1]\rightarrow\mathbb{R}$.
One such example is given by
\begin{align}
    {g}(p):=\frac{p-\otwo}{\ln2\cdot(1-\otwo+\othree)} - \log\big(1-\otwo+\othree\big),
\end{align}
where $\omega_2, \omega_3$ are the optimal quantum winning probabilities of the bipartite and tripartite instances of the non-local game.
Further, we highlight that in some monogamy-of-entanglement games, the tripartite quantum strategy does not outperform the classical strategy. 
For example, this is the case for the monogamy-of-entanglement game in \cite{TFKW13}, for the CHSH game, and as we show in this work, for the magic square game as well.
This is because the optimal bipartite strategies require sharing maximally entangled states which due to the monogamy of entanglement, may not be additionally shared by a third party.
The collapse of the tripartite quantum value onto the classical value is of independent theoretical interest but is in fact not a requirement for our security proof, which requires only a gap between the tripartite and bipartite quantum values of the underlying game.
There might exist monogamy-of-entanglement games in which the tripartite quantum value exceeds the classical value, for example if the optimal strategy requires a (non-maximally entangled) state which remains entangled after tracing out one of the three parties.
We leave the existence of these monogamy-of-entanglement games for future work.

As a result, our framework provides an alternative viewpoint for the security of QKD protocols which relies directly on monogamy of entanglement via a gap between the bipartite-tripartite gap in winning probability.
In contrast, the security of QKD protocols often seen in the recent literature \cite{AFRV19, TSG+21, TSB+22} relies on the robustness of the shared correlations, which is more indirectly dependent on the monogamy of entanglement.
In contrast to these works, we focus mainly on spot-checking QKD protocols with identical testing and generation rounds. 
An advantage of this approach is that it is simpler to obtain entropy lower bounds for the achievable key rates. 
Unfortunately, this comes at the cost of a larger error correction cost which in some cases (see Figure \ref{fig:nlg_pos_keyrate}) irrevocably yields a trivial key rate.
To rectify this, it is possible to follow the framework in \cite{AMP06, PAB+09, AFRV19, TSB+22} and add additional measurement settings which differentiate testing rounds from key generation rounds in the QKD protocol. 
Even though we do not explore additional measurement settings in this work, we remark that our analysis and security proofs largely extends to that framework, and that it suffices to use the entropy lower bounds (specifically \emph{min-tradeoff functions}) found in \cite{AMP06, PAB+09, AFRV19, TSB+22} in place of the ones derived in this work to prove the security of the QKD protocol with additional measurement settings.

We showcase our results with the \emph{magic square game} (also known as the \emph{Mermin-Peres game}) of \cite{Mermin90, Peres90}.
For this game, we compute an analytical upper bound on the optimum tripartite winning probability which is arbitrarily close to the maximum bipartite classical winning probability.
Further, we use the NPA hierarchy of \cite{NPA08, PNA10, Wittek15} and the tools from \cite{BFF23} to compute tighter bounds on the min-entropy and von Neumann entropy, respectively, of Alice's key bit conditioned on the eavesdroppers information in one instance of the magic square game.
We use these improved bounds to numerically optimize the finite and asymptotic key rates of magic square game based DIQKD, in particular showing robustness for depolarizing noise of up to $2.88\%$.

We remark that previous works \cite{JMS20, V17, JK21} also explore the security of MSG based DIQKD from the viewpoint of parallel repetition of monogamy-of-entanglement games. 
This article differs from these previous works by extending the analysis to arbitrary monogamy-of-entanglement games and including more detailed analysis of security in the finite regime of sequential MSG based DIQKD, including on error correction, completeness and numerically optimized key rates.
Moreover, the recent work \cite{ZMZ+23} which we only became aware of while concluding our study also explores security of MSG based DIQKD using the numerical tools from \cite{BFF23} and under the memoryless attack assumption. 
We use the same numerical tools for the computation of the von Neumann entropies, but we generalise on their work by not limiting to security against memoryless eavesdroppers in the asymptotic regime, instead considering security in the finite round regime under general attacks.
Further, we also improve over the asymptotic key rates and robustness against depolarizing noise reported in \cite{ZMZ+23}.

The remainder of the paper is structured as follows.
We start Section \ref{sec_qkd} by introducing some basic definitions on quantum information, non-local games and quantum key distribution that will be used in the text. 
In this section we also present and discuss our DIQKD protocol, protocol $\mathcal{P}$ in algorithm \ref{prot_qkd_actual}.
In Section \ref{sec_main_results} we present our main theorems, \thref{th_completeness}, \thref{th_correctness} and \thref{th_secrecy} for the completeness, correctness and secrecy of $\mathcal{P}$.
The proof of these theorems is deferred to Section \ref{sec_finite_proof}, where we also include a second preliminaries section with more involved definitions and results that are needed in the proofs.
Of particular interest is \thref{th_max_entropy_bound} in Section \ref{sec_secrecy_proof}, a new result which provides an efficient bound to the smooth max-entropy terms appearing in EAT-based secrecy proofs.
Lastly, in Section \ref{sec_msg_keyrates} we present the numerically optimized finite and asymptotic key rates for the magic square game based DIQKD protocol using the tailored entropy bounds constructed in Section \ref{sec_msg}.

\section{Quantum key distribution from non-local games}\label{sec_qkd}

\subsection{Preliminaries} \label{sec_prelims_1}
In this section we introduce the basic concepts necessary for understanding our main results and refer to Section \ref{sec_prelims_2} for the definitions required in the proofs.
Table \ref{tab:notation} contains some notation that will be used through this text and Table \ref{tab:state_notation} specifies the greek letters used to denote the quantum states at different stages of a QKD protocol.

\begin{table}[h!]
\caption{Notation}
\def\arraystretch{1.5}
\setlength\tabcolsep{.28cm}
\centering
\begin{tabular}{c l}
\toprule
\textit{Symbol} & \textit{Definition} \\
\toprule

$A_i^j$ & {collection of} registers $A_i,...,A_j$  \\
\hline
$S(A)$ (resp. $S_\leq(A)$) & Set of normalized (resp. subnormalised) states on register $A$ \\
\hline
$\tau_A$ & Maximally mixed state on register $A$ \\
\hline
$\mathbb{I}_A$ & Identity operator on register $A$ \\
\hline
$L^\dagger$ & Adjoint of map $L$ \\
\hline
$\Omega^c$ & Complement of event $\Omega$ \\
\hline
$\neg 0$ & Values other than $0$ \\
\hline
$\log$ (resp. $\ln$) & Base-$2$ (resp. base-$\mathrm{e}$) logarithm \\
\hline
$\mathcal{P}(\mathcal{C})$ & Set of probability distributions on alphabet $\mathcal{C}$\\
\hline
$\norm{\cdot}_p$ & Schatten $p$-norm \\
\toprule
\end{tabular}
\def\arraystretch{1}
\label{tab:notation}
\end{table}

\begin{table}[h!]
\caption{Quantum states in QKD protocols}
\def\arraystretch{1.5}
\setlength\tabcolsep{.28cm}
\centering
\begin{tabular}{c l}
\toprule
\textit{State} & \textit{Usage} \\
\toprule

$\rho_{K_AE}$ & State at the end of QKD protocols, contains final key  \\
\hline
$\sigma_{S_AE}$ & State prior to privacy amplification, contains raw key  \\
\hline
$\nu_{Q_AQ_B}$ & State in the quantum strategy of bipartite non-local games  \\
\hline
$\tilde{\nu}_{Q_AQ_B}$ & Noisy state in the quantum strategy of bipartite non-local games \\
\hline
$\omega^i_{R}$ & State in the honest parties' devices during $i$-th measurement round \\
\toprule
\end{tabular}
\def\arraystretch{1}
\label{tab:state_notation}
\end{table}

A \emph{quantum state} $\rho$ is a unit trace, positive-definite operator acting on a Hilbert space.
The collection of quantum states on Hilbert space $A$ is denoted $S(A)$.
We say $\rho\in S(A)$ is pure if it has unit rank, and mixed otherwise.
On the other hand, a subnormalised state is a positive-definite operator on Hilbert space $A$ with trace smaller than $1$.
The set of subnormalised states on $A$ is denoted $S_\leq (A)$.
In this text we will only consider finite dimensional Hilbert spaces and quantum systems.

To quantify the distance between any two quantum states $\rho, \sigma\in S(A)$ we use the \emph{trace distance}
\begin{align}
    \norm{\rho-\sigma}_{\Tr} := \frac{1}{2}\norm{\rho-\sigma}_1,
\end{align}
where $\norm{\;\cdot\;}_1$ is the Schatten $1$-norm which equals the sum of the singular values of its argument.

A \emph{register} is a quantum system with a fixed orthonormal basis which describes a discrete random variable.
For example, if random variable $X$ takes values over alphabet $\mathcal{X}$ with probabilities $x\mapsto p_x$ then a joint \emph{classical-quantum} state $\rho_{XA}\in S(XA)$ is of the form
\begin{align}
    \rho_{XA} = \sum_{x\in\mathcal{X}}\ketbra{x}\otimes \rho_A^x,
\end{align}
where $\rho_A^x\in S_\leq (E)$ is a subnormalised state with $\Tr[\rho_A^x]=p_x$.
Further, if $\Omega\subset\mathcal{X}$ is some event, we define the following conditioned states
 \begin{align}
    \rho_{AX\wedge\Omega} &:=\sum_{x\in\Omega}\ketbra{x}\otimes\rho_A^x,\\
    \rho_{AX|\Omega} &:= \frac{\Tr[\rho_{AX}]}{\Tr[\rho_{AX\wedge\Omega}]}\rho_{AX\wedge\Omega},
\end{align}
where of course $\Pr[\Omega]_\rho = \Tr[\sum_{x\in\Omega}\rho_A^x] = \sum_{x\in\Omega}p_x$.
Lastly, we use $X_1^n$ to denote a collection of classical registers $X_1,...,X_n$ over the same alphabet $\mathcal{X}$.

Maps from states in $S(A)$ to states in $S(B)$ are described by \emph{quantum channels}, which are completely positive and trace preserving mappings (CPTP).
We use $\CPTP(A,B)$ to denote the set of quantum channels from $A$ to $B$.
Quantum channels that are of particular interest are \emph{measurements} and \emph{instruments}.
Given a positive operator valued measure (POVM) $\{P_x\}_{x\in\mathcal{X}}$ on system $A$, that is, a set of positive semi-definite operators on $A$ satisfying $\sum_x P_x = \mathbb{I}_A$, then a measurement $\mathcal{M}\in \CPTP(A,X)$ applied to state $\rho_{A}\in S(A)$ is of the form
\begin{align}
    \mathcal{M}:\rho_{A} \mapsto \sum_{x\in\mathcal{X}}\ketbra{x}_X \Tr[\rho_{A}P_x].
\end{align}
If the operators $\{P_x\}_x$ additionally satisfy $P_x^2=P_x$ for all $x\in\mathcal{X}$ then the corresponding set $\{P_x\}_x$ is called a projection valued measure (PVM).
On the other hand, if $\mathcal{M}_x\in \textup{CP}(A,B)$ are trace non-increasing maps that satisfy $\sum_{x\in\mathcal{X}}\Tr[\mathcal{M}_x(\rho_A)] = \Tr[\rho_A]$ then a quantum instrument is the map $\mathcal{M}\in\CPTP(A,XB)$ given by
\begin{align}
    \mathcal{M}:\rho_A \mapsto \sum_{x\in\mathcal{X}}\ketbra{x}_X \otimes \mathcal{M}_x(\rho_A).
\end{align}
If the register $X$ is subsequently measured and outcome $x\in\mathcal{X}$ is observed, we call $\frac{\mathcal{M}_x(\rho_A)}{\Tr[\mathcal{M}_x(\rho_A)]}$ the \emph{post-measurement} state.

\subsection{Non-local games} \label{sec_NLG}
We introduce some basic definitions of \emph{non-local games} and refer to \cite{PV16} for a review.
\begin{definition}[Non-local game]
    A two player non-local game is a tuple $\mathcal{G}=(\pi,\mathcal{X},\mathcal{Y},\mathcal{A},\mathcal{B}, V)$ of
    \begin{itemize}
        \item Finite sets of questions $\mathcal{X}, \mathcal{Y}$ and answers $\mathcal{A},\mathcal{B}$,
        \item Probability distribution $\pi:\mathcal{X}\times\mathcal{Y}\rightarrow[0,1]$,
        \item Predicate $V:\mathcal{X}\times\mathcal{Y}\times\mathcal{A}\times\mathcal{B}\rightarrow\{0,1\}$.
    \end{itemize}
    If $\pi$ is uniform, $\mathcal{G}$ is known as a {free} game.
\end{definition}

The definitions of two player non-local games can be generalised to multi-player non-local games in the obvious way.

In an instance of $\mathcal{G}$, the players respectively receive questions $x\in\mathcal{X}, y\in\mathcal{Y}$ with probability $\pi(x,y)$ and output answers $a\in\mathcal{A}, b\in\mathcal{B}$.
The players win if $V(x,y,a,b)=1$ and lose otherwise.

Prior to the start of the game, the players are allowed to prepare a strategy that dictates their behaviour through the game. 
Once the game begins, they are no longer allowed to communicate.

\begin{definition}[Strategy]
    A strategy between two honest players is a conditional distribution $\Pr[a,b|x,y]$.
    This strategy may be:
    \begin{itemize}
        \item Classical: in which there exists a hidden variable $\lambda$ such that
        \begin{align}
            \Pr[a,b|x,y]=\sum_\lambda \Pr[\lambda]\Pr[a|x,\lambda]\Pr[b|y,\lambda].
        \end{align}
        \item Quantum: in which the conditional distribution is 
        \begin{align}
            \Pr[a,b|x,y] = \Tr[\nu_{AB}(P_a^x\otimes Q_b^y)],
        \end{align}
        for a pre-agreed quantum state $\nu_{AB}\in S(AB)$ and choice of POVMs $\big\{\{\Pax\}_a\big\}_x$ in Alice's lab and $\big\{\{\Qby\}_b\big\}_y$ in Bob's lab.
        \item No-signalling: in which the distribution $\Pr[a,b|x,y]$ satisfies the constraints
        \begin{align}
            &\sum_{a\in\mathcal{A}}\Pr[a,b|x,y] = \sum_{a\in\mathcal{A}}\Pr[a,b|x',y] \qquad \text{for all } a,b,y,\\
            &\sum_{b\in\mathcal{B}}\Pr[a,b|x,y] = \sum_{b\in\mathcal{B}}\Pr[a,b|x,y'] \qquad \text{for all } a,b,x.
        \end{align}
    \end{itemize}
\end{definition}

\begin{definition}[Value of the game]
    The value of a two player non-local game $\mathcal{G}$ is
    \begin{align}
        \omega(\mathcal{G}) &= \sup_{\Pr} \sum_{x,y;a,b} \pi(x,y) \Pr[a,b|x,y] V(x,y;a,b),
    \end{align}
    where the supremum is over permissible strategies (classical, quantum or no-signalling).
\end{definition}

For classical strategies, we use $\omega_\textup{C}$ to denote the \emph{classical value} of $\mathcal{G}$ when $\mathcal{G}$ is inferred from context.
Likewise, we use $\omega_\textup{Q}$ and $\omega_\textup{NS}$ to denote the \emph{quantum} and \emph{no-signalling} values, respectively.
For any non-local game, it is clear that $\omega_\textup{C}\leq\omega_\textup{Q}\leq\omega_\textup{NS}$.

For the task of QKD, we focus on two player non-local games $\mathcal{G}_2=(\pi,\mathcal{X},\mathcal{Y},\mathcal{A},\mathcal{B},V)$ where $X$ and $Y$ are independent such that $\pi(x,y)$ is a product and the predicate is of the form
\begin{align}\label{predicate}
    V(x,y,a,b) = [SK_A(a,x,y) = SK_B(b,x,y)].
\end{align}
Here the square brackets evaluate to $1$ if its argument is true, and to $0$ otherwise. 
The function $SK_A:\mathcal{X}\times\mathcal{Y}\times\mathcal{A}\rightarrow\{0,1\}$ (respectively $SK_B:\mathcal{X}\times\mathcal{Y}\times\mathcal{B}\rightarrow\{0,1\}$) is a deterministic function that outputs a single bit on input of $x, y$ and $a$ (respectively, $x, y$ and $b$). 
For example, in the well known CHSH game Alice and Bob receive respective binary bits $x,y$, output respective bits $a,b$, and win if $x\cdot y = a\oplus b$.
With our notation, this is equivalent to setting $SK_A(a,x,y)=a$ and $SK_B(b,x,y) = b\oplus(x\cdot y)$ (see Figure \ref{fig:nlg_pos_keyrate} for additional context).
On the other hand, for the magic square game introduced in Section \ref{sec_msg_keyrates}, Alice and Bob receive respective inputs $x,y\in\{0,1,2\}$, produce respective bit triples $a,b\in\{0,1\}^3$ satisfying some parity conditions, and win if Alice's $y$-th bit equals Bob's $x$-th bit. 
In our framework, this is captured by functions $SK_A(a,x,y)=a[y]$ and $SK_B(a,x,y)=b[x]$.

We consider the three player extension $\mathcal{G}_3=(\pi,\mathcal{X},\mathcal{Y},\mathcal{A},\mathcal{B},\{0,1\},\widetilde{V})$ in which the third player receives both the inputs of the original players $(x,y)$ and outputs a single bit $c\in\{0,1\}$. 
The new predicate in $\mathcal{G}_3$ is of the form
\begin{align}
    \widetilde{V}(x,y,a,b,c) = [SK_A(a,x,y) = SK_B(b,x,y)] \cdot [SK_A(a,x,y) = c].
\end{align}

In the context of QKD, the game $\mathcal{G}_2$ corresponds to the protocol `played' by the \emph{honest} parties and the functions $SK_A, SK_B$ correspond to the functions used by Alice and Bob to obtain their respective raw keys and the parameter estimation registers. 
Here, \emph{honest} refers to the behaviour of the players and in particular means that they are playing the game as per the instructions without trying to cheat, disrupt or break the protocol in any malicious way.
Consequently, the third player in the game $\mathcal{G}_3$ corresponds to the eavesdropper in a QKD protocol who is attempting to guess Alice's sifted key bits $SK_A(a,x,y)$ during every round after all the inputs have been announced by the honest players.

For the rest of the article, we let $\omega_2 = \omega_\textup{Q}(\mathcal{G}_2)$, $\omega_3 = \omega_\textup{Q}(\mathcal{G}_3)$.
Lastly, we introduce $\omega_\textup{exp}\in[\omega_3, \omega_2]$, the expected winning probability of the two honest parties in an instance of $\mathcal{G}_2$.
For example, the discrepancy between $\omega_2$ and $\omega_\textup{exp}$ may be due to the noise present in the physical quantum devices which Alice and Bob use to distribute or measure their shared quantum states.

\subsection{Security definitions of QKD}
In this text we focus on entanglement based device independent QKD protocols between two honest parties, Alice and Bob, and a potential eavesdropper, Eve.

A QKD protocol is an interactive multi-round protocol between Alice and Bob, characterised by a sequence of local (classical or quantum) operations and classical communication (LOCC).
In \emph{entanglement based} protocols, a source produces a joint quantum system and distributes it amongst the honest parties Alice and Bob, who subsequently perform measurements on it with randomly and independently chosen measurement settings.

Before discussing the meaning of \emph{device independent} protocols, we state some necessary assumptions of QKD protocols without which security guarantees are not applicable.
\begin{itemize}
    \item All parties are bound by quantum mechanics\footnote{We remark that existing results suggest security of device independent QKD is possible if quantum theory is replaced by a more general no-signalling theory \cite{MRC+14}};
    \item The devices of Alice and Bob are spatially separated.
    In particular, this allows us to describe operations in Alice's, Bob's or Eve's labs as a tensor product of local operations in the respective Hilbert spaces;
    \item The devices used by the honest parties do not communicate or leak information to any eavesdropper or to each other. 
    Further, the honest parties have control over what their respective devices receive as inputs.
    In practice, these assumption can be enforced by shielding the devices and/or by ensuring space-like separation between the honest parties;
    \item The honest parties have access to trusted sources of randomness;
    \item The honest parties have access to an authenticated classical channel. 
    In practice this is not a problem if the honest parties share a small amount of secret information \cite{RW04, DW09}.
    We remark that the communication passing through this channel is public and as such must be included in the eavesdropper's side information.
\end{itemize}
In addition to the above and for the sake of technical convenience, we will also assume all Hilbert spaces are finite dimensional.

In \emph{device independent} quantum key distribution (DIKD) protocols, we further assume that the devices used by Alice and Bob are uncharacterized.
Formally, a device $\mathcal{D}$ is an object capable of holding a quantum state and performing quantum instruments dependent on a classical input, producing a classical output and updating this state.
\begin{definition}[Quantum Device]
    An $(\mathcal{X}, \mathcal{A}, R)$-device is characterised by a set $\mathcal{X}$ (the inputs), a set  $\mathcal{A}$ (the outputs), a finite-dimensional quantum system $R$ (the memory), as well as a collection of quantum instruments $\{ \mathcal{M}^{a|x} : a \in \mathcal{A},\ x \in \mathcal{X} \}$ where $\mathcal{M}^{a|x}$ are completely positive maps such that $\sum_{a \in \mathcal{A}} \mathcal{M}^{a|x} \in \CPTP(R, R)$ for every $x \in \mathcal{X}$. The device takes as input a state $\omega_R^0$ and then behaves in the following way when interacted with for the $i$-th time:
    \begin{itemize}
        \item Upon receiving an input $x_i \in \mathcal{X}$ it outputs $a_i \in \mathcal{A}$ with probability $\Tr[ \mathcal{M}^{a_i|x_i} (\omega_R^{i-1}) ]$.
        \item The internal memory is updated to $\omega_R^{i} = \frac{\mathcal{M}^{a_i|x_i}(\omega_{R}^{i-1})}{\Tr[ \mathcal{M}^{a_i|x_i} (\omega_R^{i-1}) ]}$.
    \end{itemize}
\end{definition}
Note that this device model is very general. For honest implementations the quantum system $R$ may contain an index register that is updated every time the device is used so that fresh entangled states can be used each time the game is played. 
Moreover, via this index register and memory registers, the instrument can be made dependent on all the inputs and outputs of previous rounds.
In particular, this notation removes the need of an additional label $i$ on the register $R$ or the quantum instruments $\mathcal{M}^{a|x}$.
While the internal state and the quantum instruments are usually specified for honest implementations our security guarantees are meant to be device-independent, that is, they do not depend on the state $\omega_R$ (and its potential extensions) or on the quantum instrument used to update this state and produce the classical outputs.\footnote{We could even define a universal device that determines its behaviour after reading out a program from a classical register in $R$; however, this perspective does not seem to simplify our arguments significantly.}

A (device independent) QKD protocol takes as input the pair of devices $\mathcal{D}_A$ for Alice and $\mathcal{D}_B$ for Bob, which hold respective (and possibly entangled) quantum registers $R_A$ and $R_B$.

\begin{definition}[QKD Protocol]
    A $(n,\ell)$-QKD protocol $\mathcal{P}_{\mathcal{D}_A, \mathcal{D}_B}$ for a game $\mathcal{G} = (\mathcal{X},\mathcal{Y},\mathcal{A},\mathcal{B},V)$ for two parties (Alice and Bob) takes as argument a $(\mathcal{X}, \mathcal{A}, R_A)$-device $\mathcal{D}_A$ for Alice and a $(\mathcal{Y}, \mathcal{B}, R_B)$-device $\mathcal{D}_B$ for Bob. 
    The protocol $\mathcal{P}_{\mathcal{D}_A, \mathcal{D}_B}$ is a CPTP map taking as input a state $\omega^0_{R_AR_B}\in\mathcal{S}(R_A R_B)$ that initialises the devices. 
    At the end of the protocol, either Alice and Bob output a string of $\ell$ bits in respective registers $K_A$ and $K_B$, or they abort the protocol with output $K_A=K_B=\ketbratext{\perp}{\perp}$. 
    The QKD protocol is comprised of local operations and classical communication (LOCC) over the authenticated channel. 
    Both parties also have access to trusted sources of randomness. 
    Alice can interact $n$ times with $\mathcal{D}_A$ and Bob can interact $n$ times with $\mathcal{D}_B$, where $n$ is the number of rounds. 
\end{definition}

\begin{figure}[h]
    \centering
    \scalebox{0.8}{\tikzset{every picture/.style={line width=0.75pt}} 

\begin{tikzpicture}[x=0.75pt,y=0.75pt,yscale=-1,xscale=1]

\draw [fill={rgb, 255:red, 208; green, 2; blue, 27 }  ,fill opacity=0.2 ]   (180,250) -- (180,80) -- (210,80) -- (210,120) -- (260,120) -- (260,140) -- (210,140) -- (210,200) -- (260,200) -- (260,220) -- (210,220) -- (210,250) ;
\draw [fill={rgb, 255:red, 208; green, 2; blue, 27 }  ,fill opacity=0.2 ] [dash pattern={on 0.75pt off 0.75pt on 0.75pt off 0.75pt}]  (180,250) -- (180,270) ;
\draw [fill={rgb, 255:red, 208; green, 2; blue, 27 }  ,fill opacity=0.2 ] [dash pattern={on 0.75pt off 0.75pt on 0.75pt off 0.75pt}]  (210,250) -- (210,270) ;

\draw [fill={rgb, 255:red, 208; green, 2; blue, 27 }  ,fill opacity=0.2 ]   (210,270) -- (210,300) -- (260,300) -- (260,320) -- (180,320) -- (180,270) ;

\draw [fill={rgb, 255:red, 208; green, 2; blue, 27 }  ,fill opacity=0.2 ]   (410,250) -- (410,80) -- (380,80) -- (380,120) -- (330,120) -- (330,140) -- (380,140) -- (380,200) -- (330,200) -- (330,220) -- (380,220) -- (380,250) ;
\draw [fill={rgb, 255:red, 208; green, 2; blue, 27 }  ,fill opacity=0.2 ] [dash pattern={on 0.75pt off 0.75pt on 0.75pt off 0.75pt}]  (410,250) -- (410,270) ;
\draw [fill={rgb, 255:red, 208; green, 2; blue, 27 }  ,fill opacity=0.2 ] [dash pattern={on 0.75pt off 0.75pt on 0.75pt off 0.75pt}]  (380,250) -- (380,270) ;

\draw [fill={rgb, 255:red, 208; green, 2; blue, 27 }  ,fill opacity=0.2 ]   (380,270) -- (380,300) -- (330,300) -- (330,320) -- (410,320) -- (410,270) ;

\draw [fill={rgb, 255:red, 80; green, 227; blue, 194 }  ,fill opacity=0.02 ]   (290,80) -- (360,80) -- (360,100) -- (310,100) -- (310,160) -- (360,160) -- (360,180) -- (310,180) -- (310,240) -- (360,240) -- (360,250) ;
\draw [fill={rgb, 255:red, 80; green, 227; blue, 194 }  ,fill opacity=0.02 ]   (360,270) -- (360,280) -- (310,280) -- (310,340) -- (410,340) -- (410,370) -- (290,370) ;
\draw [fill={rgb, 255:red, 80; green, 227; blue, 194 }  ,fill opacity=0.02 ] [dash pattern={on 0.75pt off 0.75pt on 0.75pt off 0.75pt}]  (360,250) -- (360,270) ;
\draw [fill={rgb, 255:red, 80; green, 227; blue, 194 }  ,fill opacity=0.02 ]   (290,80) -- (230,80) -- (230,100) -- (280,100) -- (280,160) -- (230,160) -- (230,180) -- (280,180) -- (280,240) -- (230,240) -- (230,250) ;
\draw [fill={rgb, 255:red, 80; green, 227; blue, 194 }  ,fill opacity=0.02 ]   (230,270) -- (230,280) -- (280,280) -- (280,340) -- (180,340) -- (180,370) -- (290,370) ;
\draw [fill={rgb, 255:red, 80; green, 227; blue, 194 }  ,fill opacity=0.02 ] [dash pattern={on 0.75pt off 0.75pt on 0.75pt off 0.75pt}]  (230,250) -- (230,270) ;
\draw  [draw opacity=0][fill={rgb, 255:red, 208; green, 2; blue, 27 }  ,fill opacity=0.2 ] (180,250) -- (180,270) -- (210,270) -- (210,250) -- (180,250) -- cycle ;
\draw  [draw opacity=0][fill={rgb, 255:red, 208; green, 2; blue, 27 }  ,fill opacity=0.2 ] (380,250) -- (380,270) -- (410,270) -- (410,250) -- (380,250) -- cycle ;
\draw  [draw opacity=0][fill={rgb, 255:red, 80; green, 227; blue, 194 }  ,fill opacity=0.2 ] (230,80) -- (360,80) -- (360,100) -- (230,100) -- (230,80) -- cycle ;
\draw  [draw opacity=0][fill={rgb, 255:red, 80; green, 227; blue, 194 }  ,fill opacity=0.2 ] (230,160) -- (360,160) -- (360,180) -- (230,180) -- (230,160) -- cycle ;
\draw  [draw opacity=0][fill={rgb, 255:red, 80; green, 227; blue, 194 }  ,fill opacity=0.2 ] (230,240) -- (360,240) -- (360,260) -- (230,260) -- (230,240) -- cycle ;
\draw  [draw opacity=0][fill={rgb, 255:red, 80; green, 227; blue, 194 }  ,fill opacity=0.2 ] (230,260) -- (360,260) -- (360,280) -- (230,280) -- (230,260) -- cycle ;
\draw  [draw opacity=0][fill={rgb, 255:red, 80; green, 227; blue, 194 }  ,fill opacity=0.2 ] (180,340) -- (410,340) -- (410,370) -- (180,370) -- (180,340) -- cycle ;
\draw  [draw opacity=0][fill={rgb, 255:red, 80; green, 227; blue, 194 }  ,fill opacity=0.2 ] (280,280) -- (310,280) -- (310,340) -- (280,340) -- (280,280) -- cycle ;
\draw  [draw opacity=0][fill={rgb, 255:red, 80; green, 227; blue, 194 }  ,fill opacity=0.2 ] (280,180) -- (310,180) -- (310,240) -- (280,240) -- (280,180) -- cycle ;
\draw  [draw opacity=0][fill={rgb, 255:red, 80; green, 227; blue, 194 }  ,fill opacity=0.2 ] (280,100) -- (310,100) -- (310,160) -- (280,160) -- (280,100) -- cycle ;
\draw    (195,60) -- (195,80) ;
\draw    (395,60) -- (395,80) ;
\draw  [dash pattern={on 4.5pt off 4.5pt}]  (295,40) -- (295,410) ;
\draw    (245,100) -- (245,120) ;
\draw    (245,140) -- (245,160) ;
\draw    (245,180) -- (245,200) ;
\draw    (245,220) -- (245,240) ;
\draw    (245,280) -- (245,300) ;
\draw    (345,280) -- (345,300) ;
\draw    (345,220) -- (345,240) ;
\draw    (345,140) -- (345,160) ;
\draw    (345,100) -- (345,120) ;
\draw    (245,320) -- (245,340) ;
\draw    (345,320) -- (345,340) ;
\draw    (345,180) -- (345,200) ;
\draw    (230,370) -- (230,390) ;
\draw    (360,370) -- (360,390) ;

\draw (186,42.4) node [anchor=north west][inner sep=0.75pt]  [font=\scriptsize]  {$R_{A}$};
\draw (386,42.4) node [anchor=north west][inner sep=0.75pt]  [font=\scriptsize]  {$R_{B}$};
\draw (248,33) node [anchor=north west][inner sep=0.75pt]  [font=\small] [align=left] {Alice};
\draw (311,33) node [anchor=north west][inner sep=0.75pt]  [font=\small] [align=left] {Bob};
\draw (184,202.4) node [anchor=north west][inner sep=0.75pt]  [font=\footnotesize]  {$\mathcal{D}_{A}$};
\draw (386,202.4) node [anchor=north west][inner sep=0.75pt]  [font=\footnotesize]  {$\mathcal{D}_{B}$};
\draw (227,104.4) node [anchor=north west][inner sep=0.75pt]  [font=\scriptsize]  {$X_{1}$};
\draw (347,104.4) node [anchor=north west][inner sep=0.75pt]  [font=\scriptsize]  {$Y_{1}$};
\draw (227,144.4) node [anchor=north west][inner sep=0.75pt]  [font=\scriptsize]  {$A_{1}$};
\draw (227,184.4) node [anchor=north west][inner sep=0.75pt]  [font=\scriptsize]  {$X_{2}$};
\draw (227,224.4) node [anchor=north west][inner sep=0.75pt]  [font=\scriptsize]  {$A_{2}$};
\draw (227,284.4) node [anchor=north west][inner sep=0.75pt]  [font=\scriptsize]  {$X_{n}$};
\draw (227,325.4) node [anchor=north west][inner sep=0.75pt]  [font=\scriptsize]  {$A_{n}$};
\draw (347,144.4) node [anchor=north west][inner sep=0.75pt]  [font=\scriptsize]  {$B_{1}$};
\draw (347,225.4) node [anchor=north west][inner sep=0.75pt]  [font=\scriptsize]  {$B_{2}$};
\draw (347,324.4) node [anchor=north west][inner sep=0.75pt]  [font=\scriptsize]  {$B_{n}$};
\draw (347,185.4) node [anchor=north west][inner sep=0.75pt]  [font=\scriptsize]  {$Y_{2}$};
\draw (347,284.4) node [anchor=north west][inner sep=0.75pt]  [font=\scriptsize]  {$Y_{n}$};
\draw (221,392.4) node [anchor=north west][inner sep=0.75pt]  [font=\scriptsize]  {$K_{A}$};
\draw (352,392.4) node [anchor=north west][inner sep=0.75pt]  [font=\scriptsize]  {$K_{B}$};

\end{tikzpicture}}
    \caption{An $(n,\ell)$-QKD protocol $\mathcal{P}_{\mathcal{D}_A,\mathcal{D}_B}(\cdot)$. 
    The `combs' on the left and right correspond to the devices used by Alice and Bob, respectively.
    On the other hand, the central part represents the operations by the honest parties, which include generation of the inputs fed to the combs during the measurement phase and the post-measurement classical processing after receipt of outputs $A_n,B_n$ to produce the final keys $K_A, K_B$.}
    \label{fig:qkd_comb}
\end{figure}
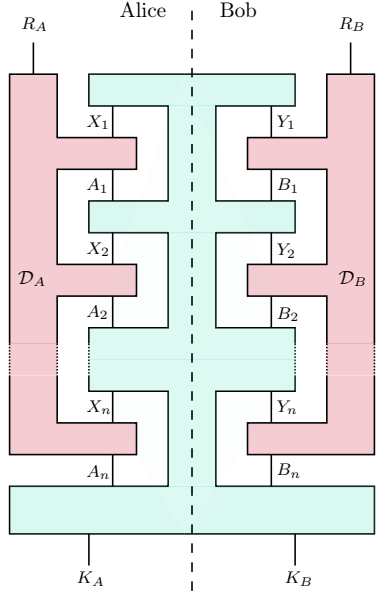

While the above is quite generic, usual protocols proceed in well-defined rounds and are then followed by a post-processing phase comprised of parameter estimation, error correction and privacy amplification. 
In each of the $n$ measurement rounds of a DIQKD protocol, Alice and Bob use their local sources of randomness to independently generate measurement settings $X_i\in\mathcal{X}$ and $Y_i\in\mathcal{Y}$ which they input to their respective devices $\mathcal{D}_A$ and $\mathcal{D}_B$ to receive classical outputs $A_i\in\mathcal{A}$ and $B_i\in\mathcal{B}$.
See Figure \ref{fig:qkd_comb} for a diagram of a QKD protocol using a $(\mathcal{X}, \mathcal{A}, R_A)$-device $\mathcal{D}_A$ for Alice and a $(\mathcal{Y}, \mathcal{B}, R_B)$-device $\mathcal{D}_B$ for Bob.
Usual protocols may abort in the parameter estimation or error correction phases, and throughout the text we use $\Omega$ to denote the event that the protocol \emph{does not abort}.
Consequently, $\Omega^c$ is the event that the protocol \emph{aborts}.

We call a pair of devices $\mathcal{D}^\textup{hon}_A$ and $\mathcal{D}^\textup{hon}_B$ together with an initial state $\omega_{R_A R_B}^\textup{hon}$ \emph{honest} with respect to a game $\mathcal{G}$ if the rounds are independent and, in each round, they win $\mathcal{G}$ with sufficiently high probability $\oexp$.
With these definitions, and following \cite{AFRV19}, we say that a QKD protocol is complete if the following holds.

\begin{definition}[Completeness]\thlabel{def_completeness}
    An $(n, \ell)$-QKD protocol $\mathcal{P}_{\cdot, \cdot}$ for a game $\mathcal{G}$ is $\varepsilon_\textup{com}$-complete if, for any honest triple $(\mathcal{D}_A^{\rm hon}, \mathcal{D}_B^{\rm hon}, \omega_{R_A R_B}^{\rm hon})$, we have
    \begin{align}
        \Pr[\Omega^c]_{\rho} \leq\varepsilon_\textup{com} \,, \qquad \text{where} \qquad 
        \rho_{K_A K_B} = \mathcal{P}_{\mathcal{D}_A^{\rm hon}, \mathcal{D}_B^{\rm hon}}(\omega_{R_A R_B}^{\rm hon}).
    \end{align}
\end{definition}

Completeness is purely a statement about the noise resilience of the protocol. 
Intuitively, a protocol is complete if the error correction and parameter estimation phases are designed so that in an honest implementation of the protocol subject to noise specified by some noise model, the probability of not aborting is higher than $1-\epscom$.

For the remaining security definitions we need to consider the scenario in which the devices are not honest and in which a potential eavesdropper is actively tampering with the QKD protocol in order to guess the keys generated by the honest parties. 
In particular, we consider devices $\mathcal{D}_A$ and $\mathcal{D}_B$ that are prepared by the eavesdropper, and denote the initial state shared amongst the two devices by $\omega^0_{R_AR_BE}$, where $E$ is additionally held by the eavesdropper\footnote{Without loss of generality we can assume that this state is pure, but this only clutters notation here.}.
The normalised state at the end of a QKD protocol $\mathcal{P}$ using such devices is
\begin{align}
    \rho_{K_AK_BE} = \mathcal{P}_{\mathcal{D}_A, \mathcal{D}_B}(\omega^0_{R_A R_B E}) = \Pr[\Omega]_\rho \cdot\rho_{K_AK_BE|\Omega} + \Pr[\Omega^c]_\rho \cdot\ketbra{\perp\perp}_{K_AK_B}\otimes\rho_{E|\Omega^c}. \label{eq:post-state}
\end{align}
Without loss of generality, we also assume that the eavesdropper is passive.
That is, the eavesdropper stores all public communication and maintains her own quantum register, but does not perform any operations on her quantum state until after the protocol is terminated.
Indeed, a data processing inequality asserts that her available (quantum or classical) information is maximised in this way.

Having established this notation, we define what it means for a QKD protocol receiving uncharacterised devices $\mathcal{D}_A$ and $\mathcal{D}_B$ as inputs to be \emph{secure}.
We adopt the security definitions from \cite{AFRV19} which are based on the definitions from \cite{PR14}.

\begin{definition}[Soundness]\thlabel{def_soundness}
    An $(n, \ell)$-QKD protocol $\mathcal{P}_{\cdot, \cdot}$ for a game $\mathcal{G}$ is $\varepsilon_\textup{sou}$-sound if for any triple $(\mathcal{D}_A, \mathcal{D}_B, \omega^0_{R_A R_B E})$, the post-protocol state~$\rho_{K_A K_B E}$ as defined in Eq.~\eqref{eq:post-state} satisfies
    \begin{align}
        \norm{\rho_{K_AK_BE\wedge\Omega}-\frac{1}{2^{\ell}}\sum_{i=0}^{2^{\ell}-1}\ketbra{ii}_{K_AK_B}\otimes\rho_{E\wedge\Omega}}_{\Tr}\leq\varepsilon_\textup{sou}\,.
    \end{align}
\end{definition}

In \cite{TL17} and \cite{AFRV19} it is proven that soundness is guaranteed by the following simpler conditions:
\begin{definition}[Correctness]\thlabel{def_correctness}
    An $(n, \ell)$-QKD protocol $\mathcal{P}_{\cdot, \cdot}$ for a game $\mathcal{G}$ is $\epscorr$-correct if for any triple $(\mathcal{D}_A, \mathcal{D}_B, \omega^0_{R_A R_B E})$, the post-protocol state~$\rho_{K_A K_B E}$ as defined in Eq.~\eqref{eq:post-state} satisfies
    \begin{align}
        \Pr[K_A\neq K_B \wedge \Omega]_\rho\leq\epscorr \,.
    \end{align}
\end{definition}

\begin{definition}[Secrecy]\thlabel{def_secrecy}
    An $(n, \ell)$-QKD protocol $\mathcal{P}_{\cdot, \cdot}$ for a game $\mathcal{G}$ is $\epssec$-secret if for any triple $(\mathcal{D}_A, \mathcal{D}_B, \omega_{R_A R_B E})$, the post-protocol state~$\rho_{K_A K_B E}$ as defined in Eq.~\eqref{eq:post-state} satisfies
    \begin{align}
        \norm{\rho_{K_AE\wedge\Omega}-\tau_{K_A}\otimes\rho_{E\wedge\Omega}}_{\Tr}\leq\epssec
    \end{align}
    where $\tau_{K_A}$ is the maximally mixed state of dimension $2^{\ell}$.
\end{definition}

In particular, if a protocol is $\epscorr$-correct and $\epssec$-secret then it is $(\epscorr+\epssec)$-sound.

\subsection{The NLG-DIQKD protocol} \label{sec_NLGDIQKD}
In this section, we introduce a device independent quantum key distribution protocol for arbitrary non-local games with suitable extensions as defined in Section \ref{sec_NLG}.
We prove security of this protocol for the family of non-local games satisfying $\omega_2>\omega_3$, a \emph{monogamy-of-entanglement} property arising from the fact that the optimal strategy for $\mathcal{G}_2$ requires the players to distribute entanglement and as such may not be further strongly correlated to the third player in $\mathcal{G}_3$ \cite{Werner89, DPS04}.

Before describing the protocol, we introduce the following definition.
\begin{definition}[2-universal hash function]\thlabel{def_hash_function}
    Let $\mathcal{H}:=\{H:\{0,1\}^m\rightarrow\{0,1\}^l\}$ be a family of hash functions. 
    The family $\mathcal{H}$ is $2$-universal if for any $H\in\mathcal{H}$ chosen uniformly
    \begin{align}
        \Pr[H(x)=H(x')]_H\leq \frac{1}{2^l},
    \end{align}
    for all distinct $x,x'\in\{0,1\}^m$.
    We refer to $l$ as the length of the outputs of the hash family.
\end{definition}

For the DIQKD protocol, fix a game $\mathcal{G}:=(\pi, \mathcal{X},\mathcal{Y},\mathcal{A},\mathcal{B}, V)$, where $\pi(x,y)$ is a product on distributions on $\mathcal{X}$ and $\mathcal{Y}$, and the predicate is of the form in Eq.~\eqref{predicate} for functions $SK_A:\mathcal{X}\times\mathcal{Y}\times\mathcal{A}\rightarrow\{0,1\}$ and $SK_B:\mathcal{X}\times\mathcal{Y}\times\mathcal{B}\rightarrow\{0,1\}$. 
Let $\otwo$ be the quantum value of $\mathcal{G}$ and let $\othree$ be the quantum value of its tripartite extension as described in Section \ref{sec_NLG}.
Protocol $\mathcal{P}$ is presented in algorithm \ref{prot_qkd_actual} below. 
We discuss its components separately.

\paragraph*{Measurement:}
In each round of the measurement phase, Alice and Bob use their respective devices $\mathcal{D}_A, \mathcal{D}_B$ to measure the state in the quantum register $R$ with independently chosen measurement settings $X_i\in\mathcal{X}, Y_i\in\mathcal{Y}$ to obtain classical outputs $A_i\in\mathcal{A}, B_i\in\mathcal{B}_i$.
We focus on protocols where the measurements inputs in testing rounds (when $T_i=1$) and generation rounds (when $T_i=0$) are the same (see discussion following Figure \ref{fig:nlg_pos_keyrate});

\paragraph*{Sifting:}
In the sifting step, Alice and Bob announce all their measurement settings to subsequently compute raw keys as given by the respective assignments $SK_A:\mathcal{X}\times\mathcal{Y}\times\mathcal{A}\rightarrow\{0,1\}$ and $SK_B:\mathcal{X}\times\mathcal{Y}\times\mathcal{B}\rightarrow\{0,1\}$.
Recall that the non-local game's winning condition was defined in Eq.~\eqref{predicate} as the Boolean $[SK_A(x,y,a)=SK_B(x,y,b)]$ which captures the intuition that the sifted keys match whenever the non local game is won.

\begin{figure}
    \centering
    \begin{algorithm}[H]\label{prot_qkd_actual}
        \SetAlgoLined
        \caption{A sequential protocol for DIQKD from non-local games: ${\mathcal{P}}$}
        \SetKwInOut{Input}{Input}
        \SetKwInOut{Parameters}{Parameters}
        \SetKwInOut{Output}{Output}
        \SetKwInOut{Init}{Initialization}
        \SetKwRepeat{Repeat}{repeat}{until}
        \Input{Devices $\mathcal{D}_{A},\mathcal{D}_{B}$}
        \Parameters{Number of rounds $n$\\
        PE parameters, $\mathbf{pe}=(\gamma, \oexp, \delta_\textup{tol})$\\
        EC parameters, $\mathbf{ec}=(\textup{Synd},\textup{Corr},\mathcal{H}_{\EC},l_{\EC})$\\
        PA parameters, $\mathbf{pa}=(\mathcal{H}_{\PA},\lkey)$}
        \textbf{1. Measurement}\;
        \For{$i=1$ to $n$}{
        Bob chooses $T_i\in\{0,1\}$ with $\Pr[T_i=1]=\gamma$\;
        Alice chooses $X_i\in\mathcal{X}$ and inputs it to $\mathcal{D}_A$ to receive $A_i\in\mathcal{A}$\;
        Bob chooses $Y_i\in\mathcal{Y}$ and inputs it to $\mathcal{D}_B$ to receive $B_i\in\mathcal{B}$\;
        }
        \textbf{2. Sifting}\;
        Bob announces $T_1^n$\;
        Alice and Bob respectively announce $X_1^n, Y_1^n$\;
        Alice sets $S_A = \{SK_A(X_i,Y_i,A_i)\}_{i=1}^n$\;
        Bob sets ${S}_B = \{SK_B(X_i,Y_i,B_i)\}_{i=1}^n$\;
        \textbf{3. Parameter Estimation}\;
        Alice sets $T_\textup{test}=\{i\;:\;T_i=1\}$ and sends $S_{A,T_\textup{test}}=\{S_{A,i}\,:\,i\in T_\textup{test}\}$ to Bob \;
        Bob sets $C_i=V(A_i,B_i,X_i,Y_i)$ for all $i\in T_\textup{test}$ and $C_i=\bot$ for all $i\in \{1,...,n\}\backslash T_\textup{test}$\;
        \eIf{$|\{i\;:\; C_i=0\}| \leq (1-\omega_\textup{exp}+\delta_\textup{tol})\cdot\gamma n$ }{
        Bob announces $F_{\PE}=\checkmark$\;
        }{
        Bob announces $F_{\PE}=\times$, both parties output $K_A=K_B=\ketbratext{\bot}{\bot}$\;
        }
        \textbf{4. Error Correction}\;
        Alice sends $Z=\text{Synd}(S_A)$, $H_A=H_{\EC}(S_A)$ and choice of $H_{\EC}\in\mathcal{H}_{\EC}$ to Bob\;
        Bob computes guess $\hat{S}_B = \text{Corr}({S}_B,Z)$ and $H_B=H_{\EC}(\hat{S}_B)$\;
        \eIf{$H_A=H_B$}{
        Bob announces $F_{\EC}=\checkmark$\;
        }{
        Bob announces $F_{\EC}=\times$, both parties output $K_A=K_B=\ketbratext{\bot}{\bot}$\;
        }
        \textbf{5. Privacy Amplification}\;
        Alice and Bob agree on $H_{\PA}\in\mathcal{H}_{\PA}$\;
        Alice outputs $K_A=H_{PA}(S_A)$ of length $l_{\key}$\;
        Bob outputs $K_B=H_{PA}(\hat{S}_B)$ of length $l_{\key}$
    \end{algorithm}
\end{figure}

\paragraph*{Parameter Estimation (PE):}
Here, Alice announces her output for the measurement rounds in which $T_i=1$ which Bob uses to test the predicate of the non-local game in order to detect a potential eavesdropper.
At this stage, Bob can abort if the number of unsuccessful instances of the non-local game is above a certain threshold from the tolerated winning probability of the game $\mathcal{G}$ due to noise $\omega_\textup{exp}$, up to certain tolerance $\delta_\textup{tol}$.
We use $\Omega_{\PE}$ to denote the event that the PE check passes.

The parameter estimation step is characterised by the tuple $\mathbf{pe} = (\gamma, \oexp, \delta_\textup{tol})$.

\paragraph*{Error Correction (EC):}
In this step, Alice and Bob apply an error correction scheme characterised by functions $\textup{Synd}:\{0,1\}^n\rightarrow \{0,1\}^{\lambda_{\EC}}$ and $\textup{Corr}:\{0,1\}^n\times\{0,1\}^{\lambda_{\EC}}\rightarrow\{0,1\}^n$.
Alice computes $\textup{Synd}: S_A \mapsto Z=\textup{Synd}(S_A)\in\{0,1\}^{\lambda_{\EC}}$ and sends the resultant syndrome to Bob, who applies $\textup{Corr}: (S_B,Z)\mapsto \hat{S}_B$ to obtain an estimate for Alice's raw key. 
The length $\lambda_{\EC}$ of the syndrome is a parameter of the error correction protocol which needs to be chosen so that for a protocol instantiated with the honest tuple $(\mathcal{D}_A^\textup{hon}, \mathcal{D}_B^\textup{hon}, \omega^{\hon}_{R_AR_B})$ winning $\mathcal{G}$ with a probability $\oexp$ which depends on the noise model, then
\begin{align}\label{EC_com_bound}
    \Pr[S_A\neq \hat{S}_B]_{\hon}\leq\varepsilon_\textup{com}^{\EC},
\end{align}
for a given completeness parameter $\varepsilon_\textup{com}^{\EC}\in(0,1]$.
In the above, the probability is over the distribution generated by the honest triple $(\mathcal{D}_A^\textup{hon}, \mathcal{D}_B^\textup{hon},\omega_{R_AR_B}^\textup{hon})$ input to the protocol.
Theoretical lower bounds for the length $\lambda_{\EC}$ achieving the condition above are explored in \cite{RR12} and \cite[Theorem 1]{TMPE17}.
In practice, for the values of $n$ and $\varepsilon_\textup{com}^{\EC}$ typically considered in QKD protocols, it suffices to take $\lambda_{\EC} = \xi n h(Q)$ for $\xi\in [1.05,1.12]$ since this achieves the condition in Eq.~\eqref{EC_com_bound} and it is what practical error correcting codes communicate ---see the discussion in \cite{TMPE17} or \cite[Footnote 10]{TL17}.
Here $h(p):=-p\log(p)-(1-p)\log(1-p)$ is the binary entropy of a Bernoulli variable, and $Q$ is known as the \emph{quantum bit error rate} (QBER) and corresponds to the probability that a bit needs to be corrected---in a QKD protocol this occurs when the $i$-th key bits do not match, i.e., when $S_{A,i}\neq \hat{S}_{B,i}$.

The hash function $H_{\EC}\in\mathcal{H}_{\EC}$ agreed upon by Alice and Bob is subsequently used to produce hashes of length $l_{\EC}$ to verify whether the error correction was successful and $S_A=\hat{S}_B$.
Indeed, the defining property of $2$-universal hash functions (\thref{def_hash_function}) asserts that 
\begin{align}
    \Pr[H_{\EC}(S_A)=H_{\EC}(\hat{S}_B)\; |\: S_A \neq \hat{S}_B] \leq \frac{1}{2^{l_{\EC}}},
\end{align}
where the probability is over the choice of the hash function ${H_{\EC}}\in\mathcal{H}_{\EC}$.
We use $\Omega_{\EC}$ to denote the event that this verification passes.

Overall, the error correction is characterised by the tuple $\mathbf{ec}=(\textup{Synd}, \textup{Corr}, \mathcal{H}_{\EC}, l_{\EC})$, where the function $\textup{Synd}$ produces a syndrome of length $\lambda_{\EC}$.
As an important remark, we note that the error correction scheme is applied to the raw key bits of \emph{all} rounds, not just generation rounds.

\paragraph*{Privacy Amplification (PA):}
Finally, Alice and Bob apply an agreed upon hash function $H_{\PA}\in\mathcal{H}_{\PA}$ to produce final keys $K_A, K_B$ of length $\lkey$.
The privacy amplification step is crucial to the secrecy of the protocol as is inferred from the \emph{Leftover Hashing Lemma} \cite{Renner05} stated later in \thref{th_leftover_hashing}.
Intuitively, the Leftover Hashing Lemma guarantees that the key $K_A$ output by Alice is close to a perfect key provided that the sifted key $S_A$ is sufficiently random from Eve's point of view, as quantified by a conditional entropy measure.

The privacy amplification is characterised by the tuple $\mathbf{pa} = (\mathcal{H}_{\PA}, \lkey)$.

Before stating our main theorems, let us discuss some important notation of the protocol.
We use $\rho = \mathcal{P}_{(\mathcal{D}_A, \mathcal{D}_B),(n,\mathbf{pe}, \mathbf{ec}, \mathbf{pa})}(\omega^0_{R_AR_BQ_E})$ to denote the state shared by Alice, Bob and Eve at the end of the protocol initialised with input devices $\mathcal{D}_A, \mathcal{D}_B$ sharing initial state $\omega^0_{R_AR_BQ_E}\in S(R_AR_BE)$, as well as parameters $\mathbf{pe}, \mathbf{ec}, \mathbf{pa}$ respectively for parameter estimation, error correction and privacy amplification.
The output state $\rho$ contains the following registers:
\begin{align}\label{prot_end_registers}
    \underbrace{A_1^nB_1^nX_1^nY_1^n}_{\text{game}} \underbrace{S_AS_B\hat{S}_B}_{\text{raw keys}} \underbrace{C_1^nT_1^nS_{A,T_\textup{test}}}_{\PE} \underbrace{  H_AH_BZH_{\EC}}_{\EC} \underbrace{H_{\PA}K_AK_B}_{\text{secret keys}} \underbrace{F_{\PE}F_{\EC}}_{\text{flags}} {Q_{E}},
\end{align}
where $Q_{E}$ is Eve's quantum side information.
Note that register $S_{A,T_\textup{test}}$ containing Alice's classical outputs in testing rounds is redundant as $S_{A,T_\textup{test}}\subset S_A$ but we include it to emphasize that it is announced and as such, Eve has access to it.
In general, Eve's total side information in an instance of protocol $\mathcal{P}_{(\mathcal{D}_A, \mathcal{D}_B),(n,\mathbf{pe}, \mathbf{ec}, \mathbf{pa})}$ is
\begin{align}
    E = X_1^nY_1^n T_1^n S_{A,T_\textup{test}}H_AZH_{\EC}H_{\PA} F_{\PE}F_{\EC} Q_{E}.
\end{align}

Further, we use $\sigma$ to denote the state prior to the privacy amplification step. 
The state $\sigma$ contains the same registers as the state $\rho$ at the end of the protocol minus $K_A$ and $K_B$.

\section{Main results}\label{sec_main_results}
\subsection{Security for general monogamy-of-entanglement games}
In this section we state and discuss our main results, which we prove later in Section \ref{sec_finite_proof}.
We remark that the only \emph{inputs} to the protocol are the physical uncharacterised devices $\mathcal{D}_A$ and $\mathcal{D}_B$ of the honest parties, whereas all variables appearing in the theorems are \emph{parameters} of the protocol, chosen to achieve appropriate security and/or maximize key rate.

\begin{theorem}[Completeness]\thlabel{th_completeness}
    Let $n\in\mathbb{N}$, $\epsilon_\textup{com}^{\PE},\epsilon_\textup{com}^{\EC}\in(0,1]$.
    For any honest triple $(\mathcal{D}_A^{\hon}, \mathcal{D}_B^{\hon},\omega^{\hon}_{R_AR_B})$ which plays the game $\mathcal{G}$ independently in every round and wins with probability $\oexp$, the protocol $\mathcal{P}_{(\mathcal{D}_A^{\hon}, \mathcal{D}_B^{\hon}),(n,\mathbf{pe}, \mathbf{ec}, \mathbf{pa})}$ with parameters $\mathbf{ec}=(\textup{Synd}, \textup{Corr}, \mathcal{H}_{\EC}, l_{\EC})$ and $\mathbf{pe}=(\gamma, \oexp, \delta_\textup{tol})$ is $(\varepsilon_\textup{com}^{\PE}+\varepsilon_\textup{com}^{\EC})$-complete if $\mathbf{ec}$ is chosen so that
    \begin{align}\label{EC_com_bound_th}
        \Pr[S_A\neq \hat{S}_B]_{\rho^{\hon}}\leq\varepsilon_\textup{com}^{\EC}, \qquad \text{where}\qquad {\rho^{\hon}} = \mathcal{P}_{(\mathcal{D}_A^{\hon}, \mathcal{D}_B^{\hon}),(n,\mathbf{pe}, \mathbf{ec}, \mathbf{pa})}(\omega^{\hon}_{R_AR_B})
    \end{align}
    and if 
    \begin{align}\label{PE_com_bound_th}
        \gamma \geq \frac{2(1-\oexp) + \dtol}{\dtol^2 n} \ln\frac{1}{\varepsilon_\textup{com}^{\PE}}.
    \end{align}
\end{theorem}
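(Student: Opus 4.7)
The plan is to control $\Pr[\Omega^c]_{\rho^{\hon}}$ by splitting the abort into its two sources. The protocol aborts precisely when either the parameter-estimation flag is $F_{\PE}=\times$ or, having passed PE, the error-correction flag is $F_{\EC}=\times$. Since these two events are disjoint, a union bound reduces the claim to separately proving $\Pr[F_{\PE}=\times]_{\rho^{\hon}}\leq\varepsilon_\textup{com}^{\PE}$ and $\Pr[F_{\PE}=\checkmark \wedge F_{\EC}=\times]_{\rho^{\hon}}\leq\varepsilon_\textup{com}^{\EC}$, whose sum yields the advertised $(\varepsilon_\textup{com}^{\PE}+\varepsilon_\textup{com}^{\EC})$-completeness.

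The error-correction piece is essentially by hypothesis. Indeed, $F_{\EC}=\times$ can only occur when $H_{\EC}(S_A)\neq H_{\EC}(\hat{S}_B)$, and this in turn requires $S_A\neq\hat{S}_B$. Hence $\Pr[F_{\EC}=\times]_{\rho^{\hon}}\leq\Pr[S_A\neq\hat{S}_B]_{\rho^{\hon}}\leq\varepsilon_\textup{com}^{\EC}$ by Eq.~\eqref{EC_com_bound_th}. No $2$-universality argument is needed here; that property is only relevant for soundness.

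The bulk of the proof is the parameter-estimation bound. By the honesty assumption the rounds are played independently, and Bob chooses each $T_i=1$ independently with probability $\gamma$, so the indicators $\mathbf{1}[C_i=0]$ are i.i.d.\ Bernoulli with parameter $p:=\gamma(1-\oexp)$: round $i$ contributes a failure precisely when it is flagged as a test round and the honest parties happen to lose that instance of $\mathcal{G}$. Writing $N:=|\{i:C_i=0\}|$, we have $\mathbb{E}[N]=np$, and the PE abort event is exactly $\{N>(1-\oexp+\dtol)\gamma n\}=\{N>(1+\delta)np\}$ with $\delta:=\dtol/(1-\oexp)$. Applying the multiplicative Chernoff upper tail $\Pr[N\geq(1+\delta)np]\leq\exp\bigl(-np\delta^2/(2+\delta)\bigr)$ and simplifying yields the exponent $n\gamma\dtol^2/(2(1-\oexp)+\dtol)$; demanding that this exceed $\ln(1/\varepsilon_\textup{com}^{\PE})$ reproduces exactly the lower bound on $\gamma$ stated in Eq.~\eqref{PE_com_bound_th}.

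The argument is essentially a routine concentration estimate wrapped in a union bound. The only step that demands a little care is algebraically matching the Chernoff exponent to the precise form of Eq.~\eqref{PE_com_bound_th}; this in particular relies on using the multiplicative Chernoff bound that interpolates between the Gaussian and Poisson regimes, rather than a pure Hoeffding bound, which would yield a worse scaling in $1-\oexp$ when the noise parameter is small.
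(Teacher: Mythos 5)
Your proposal is correct and follows essentially the same route as the paper's proof: a union bound over the two abort sources, the observation that $S_A=\hat{S}_B$ forces the hash check to pass so that Eq.~\eqref{EC_com_bound_th} handles the error-correction abort directly, and the multiplicative Chernoff bound applied to the i.i.d.\ indicators $\mathbf{1}[C_i=0]$ with parameter $\gamma(1-\oexp)$ and relative deviation $\dtol/(1-\oexp)$, whose exponent matches Eq.~\eqref{PE_com_bound_th} exactly. Your closing remark about needing the Chernoff form $\exp(-np\delta^2/(2+\delta))$ rather than Hoeffding is accurate and consistent with the paper's citation of the multiplicative bound.
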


Intuitively, Eq.~\eqref{EC_com_bound_th} induces a requirement on the error correction protocol specified by the tuple $\mathbf{ec}=(\textup{Synd}, \textup{Corr}, \mathcal{H}_{\EC}, l_{\EC})$, given the honest triple $(\mathcal{D}_A^{\hon}, \mathcal{D}_B^{\hon},\omega^{\hon}_{R_AR_B})$.
The value $\oexp$ differs from the maximal two player winning probability $\otwo$ by an amount characterised by the underlying noise in the devices used to distribute and interact with the quantum states over registers $R_AR_B$. 
The overall quantum bit error rate for any noise model is defined by $Q=1-\oexp$, such that the number of errors follows a binomial distribution with probability $Q$ and overall expected number of errors $nQ$.
As such, the error correction protocol $\mathbf{ec}$ is designed to correct this random number of errors with high probability (in this case, $1-\varepsilon_\textup{com}^{\EC}$).
In practice, the choice of error correcting code is heuristic, although it is shown in \cite[Theorem 1]{TMPE17} that for any parameter $\varepsilon^{\EC}_{\textup{com}}$, there exists an error correction scheme satisfying Eq.~\eqref{EC_com_bound_th}.

On the other hand, Eq.~\eqref{PE_com_bound_th} is a requirement on the parameter estimation stage.
In particular, for the triple $(\mathcal{D}_A^{\hon}, \mathcal{D}_B^{\hon},\omega^{\hon}_{R_AR_B})$ as above, we require the testing probability $\gamma$ and tolerance $\delta_\textup{tol}$ to be chosen so that the number of losses in an instance of $\mathcal{P}$ does not fall far above the expected number of loses $(1-\oexp)n\gamma$ except with small probability.
In essence, Eq.~\eqref{PE_com_bound_th} is derived from a `tail bound' on the probability that the tolerated number of loses deviates from the expected number of losses, and $\delta_\textup{tol}$ is chosen to make this probability small (by $\varepsilon_\textup{com}^{\PE}$). 
Given the value of $\gamma$ in the theorem, the value of $\dtol$ is chosen so that $\dtol$ and $\gamma$ vanish as $n$ approaches infinity. 
In practice, we set $\dtol={n^{-1/3}}$ so that asymptotically, $\dtol,\gamma\in O(n^{-1/3})$.

\begin{theorem}[Correctness]\thlabel{th_correctness}
    Let $n\in\mathbb{N}$ and $\epscorr\in(0,1]$.
    For any triple $(\mathcal{D}_A, \mathcal{D}_B,\omega^0_{R_AR_BQ_E})$, the protocol $\mathcal{P}_{(\mathcal{D}_A, \mathcal{D}_B),(n,\mathbf{pe}, \mathbf{ec}, \mathbf{pa})}$ with $\mathbf{ec}=(\textup{Synd}, \textup{Corr}, \mathcal{H}_{\EC}, l_{\EC})$ is $\epscorr$-correct if $l_{\EC}\geq \log\frac{1}{\epscorr}$.
\end{theorem}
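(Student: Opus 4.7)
The plan is to reduce the correctness statement to an application of the 2-universality of the error correction hash family. The crucial observation is that $K_A$ and $K_B$ are obtained by applying the \emph{same} deterministic privacy amplification hash $H_{\PA}$ to $S_A$ and $\hat{S}_B$ respectively, so $K_A \neq K_B$ forces $S_A \neq \hat{S}_B$. Consequently,
\begin{align}
    \Pr[K_A \neq K_B \wedge \Omega]_\rho \leq \Pr[S_A \neq \hat{S}_B \wedge \Omega]_\rho \leq \Pr[S_A \neq \hat{S}_B \wedge \Omega_{\EC}]_\rho,
\end{align}
using that $\Omega \subseteq \Omega_{\EC}$ (the protocol can only avoid aborting if the EC verification passes).

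Next, I would unpack the event $\Omega_{\EC}$: it is exactly the event that $H_{\EC}(S_A) = H_{\EC}(\hat{S}_B)$. So the task reduces to bounding
\begin{align}
    \Pr[S_A \neq \hat{S}_B \wedge H_{\EC}(S_A) = H_{\EC}(\hat{S}_B)]_\rho.
\end{align}
Since the hash function $H_{\EC}$ is drawn uniformly from $\mathcal{H}_{\EC}$ and, by the protocol specification, independently of the prior measurement, sifting, and parameter-estimation steps that produce $S_A$ and $\hat{S}_B$, I would condition on a fixed value $(S_A,\hat{S}_B)=(s,s')$ with $s\neq s'$ and invoke \thref{def_hash_function} to conclude $\Pr_{H_{\EC}}[H_{\EC}(s)=H_{\EC}(s')] \leq 2^{-l_{\EC}}$. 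Averaging over all such pairs gives
\begin{align}
    \Pr[S_A \neq \hat{S}_B \wedge H_{\EC}(S_A) = H_{\EC}(\hat{S}_B)]_\rho \leq 2^{-l_{\EC}}.
\end{align}

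Finally, substituting the hypothesis $l_{\EC} \geq \log(1/\epscorr)$ yields $2^{-l_{\EC}} \leq \epscorr$, matching \thref{def_correctness}. There is no real obstacle here; the only subtle point to get right is ensuring the independence of $H_{\EC}$ from the registers $S_A,\hat{S}_B$ so that the 2-universality bound can be pulled through the averaging, which is immediate from the protocol description since $H_{\EC}$ is sampled in the EC step using Alice's trusted randomness and is not fed back into the devices.
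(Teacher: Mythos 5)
Your proposal is correct and follows essentially the same route as the paper's proof: reduce $K_A\neq K_B$ to $S_A\neq\hat{S}_B$ via the common privacy-amplification hash, drop the non-abort event down to $\Omega_{\EC}=\{H_{\EC}(S_A)=H_{\EC}(\hat{S}_B)\}$, and apply the $2$-universality of $\mathcal{H}_{\EC}$ conditioned on $S_A\neq\hat{S}_B$ to obtain the $2^{-l_{\EC}}$ bound. Your explicit remark on the independence of the sampled $H_{\EC}$ from the registers $S_A,\hat{S}_B$ is a point the paper leaves implicit, but it is the same argument.
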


In our protocol, correctness is ensured by comparing random hashes of Alice's raw key $S_A$ and Bob's guess $\hat{S}_B$.
Thus, this theorem does not need to refer to the actual error correction protocol $(\textup{Synd},\textup{Corr})$.

\begin{theorem}[Secrecy]\thlabel{th_secrecy}
    Let $n\in\mathbb{N}$ and $\epssec\in(0,1]$.
    For any triple $(\mathcal{D}_A, \mathcal{D}_B,\omega^0_{R_AR_BQ_E})$, protocol $\mathcal{P}_{(\mathcal{D}_A, \mathcal{D}_B),(n,\mathbf{pe}, \mathbf{ec}, \mathbf{pa})}$ with $\mathbf{pe}=(\gamma, \oexp, \delta_\textup{tol})$, $\mathbf{ec}=(\textup{Synd}, \textup{Corr}, \mathcal{H}_{\EC}, l_{\EC})$ and $\mathbf{pa} = (\mathcal{H}_{\PA}, \lkey)$ is $\epssec$-secret if
    \begin{align}\label{l_key}
        l_\textup{key} \leq \;&n{g}(\omega_\textup{exp}-\delta_\textup{tol}) -d_1\sqrt{n}-d_0 - (2-\oexp+\delta_\textup{tol})n(\gamma+\kappa)   - 2\vartheta\left(\frac{\varepsilon_s}{4}\right)\\
        &-l_{\EC}-\lambda_{\EC} 
        - 2\log\frac{1}{\epssec-2\varepsilon_s},
    \end{align}
    where $\varepsilon_s\in\big(0,\frac{1}{2}\epssec\big)$ is a variable to be optimised, $\kappa\geq\frac{1}{\sqrt{n}}\sqrt{\ln{8}-\ln{\varepsilon_s}}$, $g:[0,1]\rightarrow\mathbb{R}$ is the affine function defined by
    \begin{align}\label{affine_th}
        {g}(p):=\frac{p-\beta}{\ln2\cdot(1-\beta+\omega_3)} - \log\big(1-\beta+\omega_3\big)
    \end{align}
    for $\beta\in[\omega_3, \omega_2]$ to be optimised, and the coefficients $d_1, d_0$ depend only on $\epssec$, $\varepsilon_s$ and $\beta$ and are respectively as in Eqs.~\eqref{d1_def_final} and~\eqref{d0_def_final}.
    Lastly, $\vartheta(\delta) = -\log\big(1-\sqrt{1-\delta^2}\big)$.
\end{theorem}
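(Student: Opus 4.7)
The strategy is the standard GEAT-based pipeline for sequential DIQKD, with one new ingredient: a min-tradeoff function $g$ built directly from the monogamy gap $\otwo > \othree$ between the bipartite and tripartite quantum values. I would first invoke the Leftover Hashing Lemma (\thref{th_leftover_hashing}) applied to the pre-PA state $\sigma$ conditioned on the non-aborting event $\Omega$, reducing $\epssec$-secrecy of the $\lkey$-bit output of $H_\PA$ to a smooth min-entropy bound of the form
\begin{align}
    H_{\min}^{\varepsilon_s}(S_A \,|\, E)_{\sigma \wedge \Omega} \;\geq\; \lkey + 2\log\tfrac{1}{\epssec - 2\varepsilon_s},
\end{align}
where $E$ collects all of Eve's classical and quantum side information just before PA. I would then peel off classical leakage in two chain-rule applications: stripping the EC transcript $(Z, H_A)$ costs exactly $\lambda_{\EC} + l_{\EC}$ bits, and stripping the announced testing outputs $S_{A, T_\textup{test}}$ costs a smooth max-entropy term plus a smoothing correction $2\vartheta(\varepsilon_s / 4)$. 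The max-entropy term is controlled via \thref{th_max_entropy_bound}, combined with a Chernoff-type tail bound on $|T_\textup{test}|$ (supplying the slack $\kappa$), which produces the prefactor $(2 - \oexp + \dtol)\,n(\gamma + \kappa)$.

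What remains is a lower bound on the smooth min-entropy of $S_A$ conditioned only on Eve's quantum register and the public inputs/test indicators, evaluated on $\Omega$. This I would obtain from GEAT~\cite{MFSR22}: the assumed device model (devices do not leak back to Eve mid-protocol) supplies the required no-signalling condition between successive rounds, so the sequential measurement structure fits the GEAT template directly. GEAT then delivers a bound of the form $n\,g(\oexp - \dtol) - d_1\sqrt{n} - d_0$ for any affine single-round min-tradeoff function $g$ compatible with the PE statistic, with corrections $d_1, d_0$ absorbing the variance and norm of $g$ together with $\varepsilon_s$. I would construct $g$ from the monogamy property as follows: if Alice and Bob win a single round with probability $p$, then for any strategy Eve's best guess $c$ of $SK_A$ in that round satisfies
\begin{align}
    \Pr[c = SK_A] \;=\; \Pr[c = SK_A,\, SK_A = SK_B] + \Pr[c = SK_A,\, SK_A \neq SK_B] \;\leq\; \othree + (1 - p),
\end{align}
where the first term is bounded by $\othree$ by the definition of the tripartite game value and the second by the complement of the winning event. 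Hence $H(S_{A,i}\,|\,E_i) \geq -\log(1 - p + \othree)$, a convex function of $p$; linearising this bound at a free tangent point $\beta \in [\othree, \otwo]$ produces exactly the affine function~\eqref{affine_th}, and the GEAT second-order coefficients then take the explicit forms in Eqs.~\eqref{d1_def_final} and~\eqref{d0_def_final}.

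The main obstacle I expect is the careful verification of the GEAT hypotheses against the device model: one must check that $g$ is not merely a pointwise single-round entropy lower bound but a genuine affine min-tradeoff function in the technical sense of~\cite{MFSR22} (with correctly accounted Lipschitz constant and variance controlling $d_1$), and that the per-round channels induced by the device update rule satisfy the non-signalling property with respect to Eve's register. Once this is confirmed, assembling the two leakage-stripping steps with the GEAT bound, optimising over $\beta \in [\othree, \otwo]$ to pick the best tangent for the expected statistic $\oexp - \dtol$, and balancing $\varepsilon_s \in (0, \tfrac{1}{2}\epssec)$ against the smoothing and leftover-hashing costs yields the stated inequality on $\lkey$.
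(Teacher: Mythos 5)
Your proposal follows essentially the same route as the paper's proof: Leftover Hashing reduces secrecy to a smooth min-entropy bound, chain rules strip the EC leakage ($\lambda_{\EC}+l_{\EC}$) and the announced test outputs (via the max-entropy bound of \thref{th_max_entropy_bound} with the Hoeffding slack $\kappa$), GEAT handles the remaining term with the infrequent-sampling min-tradeoff function, and the single-round bound $\Pr[\GE]\leq 1-p+\othree$ — which you obtain by partitioning on whether $SK_A=SK_B$, equivalent to the paper's conditional-probability manipulation — is linearised at a tangent point $\beta$ to give Eq.~\eqref{affine_th}. The only pieces you leave implicit are the bookkeeping the paper makes explicit (the virtual protocol $\widetilde{\mathcal{P}}$ and the auxiliary register $\bar{S}_B$ needed so that the PE statistic $C_i$ is computable from the GEAT output registers, plus the trivial case $\Pr[\Omega]\leq\max\{\varepsilon_s^2,\varepsilon_a\}$), but these are exactly the verification steps you flag as remaining work, so the approach matches.
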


In particular, the protocol is $(\epscorr + \epssec)$-sound, provided $\mathbf{pe}, \mathbf{ec}, \mathbf{pa}$ are chosen as per \thref{th_correctness} and \thref{th_secrecy}.

We now discuss the variables appearing in the these theorems.
The variable $\varepsilon_s$ corresponds to a bound on the probability that the error correction is successful (i.e. that $S_A=\hat{S}_B$) provided that the parameter estimation check is successful.
The variable $\kappa$ is necessary to ensure the number of testing rounds does not exceed $n(\gamma+\kappa)$ except with negligible probability.
The variables $\varepsilon_s\in(0,\frac{1}{2}\epssec)$ and $\beta\in[\omega_3,\omega_2]$ are to be optimised to maximize the key rate $\frac{1}{n} \lkey$.

On the other hand, the function $g$ appearing in Eq.~\eqref{l_key} intuitively represents the amount of randomness that is secret from Eve which Alice is able to generate in every round of the protocol.
The function $g$ presented in Eq.~\eqref{affine_th} is derived solely from the fact that the non-local game satisfies the monogamy-of-entanglement property $\omega_3<\omega_2$ (see Section \ref{sec_guessing_probability}).
In principle, it is possible to tailor $g$ to specific non-local games to achieve better key rates without altering the underlying secrecy proof. 
For example, see Section \ref{sec_msg_keyrates} for improved key rates for the magic square game, and Section \ref{sec_msg} for the respective constructions.

\subsubsection{Asymptotic analysis}

In the following, we briefly analyse the asymptotic behaviour of $l_\textup{key}$ in Eq.~\eqref{l_key} for an honest implementation of the protocol.

Firstly, we note that in an honest implementation the $n$ measurement rounds are independently and identically distributed (i.i.d.) and that the devices utilise the optimal quantum strategy for $\mathcal{G}_2$.
Formally, let $\mathcal{S}=\{\nu_{Q_AQ_B}, \{\{\Pax\}_a\}_x, \{\{\Qby\}_b\}_y\}$ be the strategy that maximizes $\omega_2$, where $\nu_{Q_AQ_B}\in S(Q_AQ_B)$ is Alice's and Bob's bipartite state and $\{\Pax\}_a, \{\Qby\}_b$ are POVMs on $Q_A, Q_B$ with outputs over sets $\mathcal{A}, \mathcal{B}$ and inputs over sets $\mathcal{X}, \mathcal{Y}$, respectively.
The state $\nu_{Q_AQ_B}$ may be subject to noise which in QKD is usually modelled using the depolarizing channel
\begin{align}
    \mathcal{E}_q:\nu_{Q_AQ_B}\mapsto \tilde{\nu}_{Q_AQ_B} = (1-2q)\nu_{Q_AQ_B} +2q \tau_{Q_AQ_B},
\end{align}
where $q\in[0,\frac{1}{2}]$ and $\tau_{Q_AQ_B}=\frac{\mathbb{I}_{Q_AQ_B}}{\dim Q_A \dim Q_B}$ is the maximally mixed state in $Q_AQ_B$.
Therefore, in an honest implementation of the QKD protocol the devices $\mathcal{D}_A^{\hon}, \mathcal{D}_B^{\hon}$ share the i.i.d. state $\omega_{R_AR_B}^{\hon}=\tilde{\nu}_{Q_AQ_B}^{\otimes n}$.
In the following, we omit the round indexes since the behaviour is i.i.d..
Then, on input of $X$ by Alice and $Y$ by Bob, the respective devices $\mathcal{D}_A^{\hon}$ and $\mathcal{D}_B^{\hon}$ apply the local measurements given by 
\begin{align}
    \mathcal{M}_{A}^{x,\textup{hon}} \otimes \mathcal{M}_{B}^{y,\textup{hon}} : \tilde{\nu}_{Q_AQ_B} \mapsto \sum_{a\in\mathcal{A}}\sum_{b\in\mathcal{B}} \ketbra{ab}_{AB}\otimes \Tr[\tilde{\nu}_{Q_AQ_B}(\Pax\otimes \Qby)].
\end{align}
It follows that the expected winning probability of Alice and Bob is precisely
\begin{align}
    \omega_\textup{exp} = \sum_{x,y,a,b} \pi(x,y) V(a,b,x,y)\Tr[\tilde{\nu}_{Q_AQ_B}(\Pax\otimes \Qby)],
\end{align}
where $V(a,b,x,y)$ is the non-local game's predicate as in Eq.~\eqref{predicate}.
The quantum bit error rate is therefore given by $Q:=1-\omega_\textup{exp}$.
To determine the length of the error correction syndrome $\lambda_{\EC}$, note that \cite[Theorem 1]{TMPE17} implies the existence of error correction protocol $\mathbf{ec}=(\textup{Synd}, \textup{Corr}, \mathcal{H}_{\EC},l_{\EC})$ which asymptotically satisfies\footnote{Here, we use $g(n)\in\Theta(f(n))$ to denote both $g(n)\in O(f(n))$ and $g(n)\in\Omega(f(n)).$} $\frac{\lambda_\textup{EC}}{n} = h(Q) + \Theta\left(\frac{\log n}{n}\right)$ and achieves $\Pr\big[S_A\neq \hat{S}_B\big]_{\rho^{\hon}}\leq\varepsilon_\textup{com}^{\EC}$.
Further, note that choosing 
\begin{align}\label{gamma_delta}
    \dtol = \frac{1}{n^{1/3}} \qquad\textup{and}\qquad \gamma = \frac{2(1-\oexp) + \dtol}{\dtol^2 n} \ln\frac{1}{\varepsilon_\textup{com}^{\PE}}
\end{align}
as per \thref{th_completeness}, we asymptotically obtain $\dtol, \gamma\in O\big({n^{-1/3}}\big)$.

With this choice of $\lambda_{\EC}, \gamma$ and $\delta_\textup{tol}$ we have $(\varepsilon_\textup{com}^{\EC}+\varepsilon_\textup{com}^{\PE})$-completeness, and from Eq.~\eqref{l_key} we arrive at
\begin{align}\label{asymptotic_keyrate}
    \lim_{n\rightarrow\infty}\frac{l_\textup{key}}{n} = {g(\omega_\textup{exp})} - h(Q).
\end{align}
The first term encapsulates the amount of Alice's key that is secret from Eve, whereas the second relates to the cost of information reconciliation between Alice and Bob.
We remark that this asymptotic key rate retrieves the \emph{Devetak-Winter rate} \cite{DW05} if $g$ is chosen to be a sufficiently tight bound on the von Neumann entropy.

In Figure \ref{fig:nlg_pos_keyrate} we compute the range of values of $\omega_2$ and $\omega_3$ for which Eq.~\eqref{asymptotic_keyrate} using $g$ as in Eq.~\eqref{affine_th} gives positive asymptotic key rates.
In particular, we highlight two games: the \emph{magic square game} (MSG) which we consider in the next section with better optimized functions $g$, and the \emph{Clauser-Horne-Simone-Holt} (CHSH) game which most DIQKD protocols use as primitive.
In the latter game, Alice and Bob produce binary inputs $x,y$ and generate binary outputs $a,b$, and win or lose if $a=b\oplus(x\cdot y)$.
In the tripartite extension, the third player receives both $x,y$ and produces bit $c$. 
The three parties win if $a=b\oplus(x\cdot y)=c$.
In this case it is known that $\omega_2=\frac{2+\sqrt{2}}{4}\approx 0.85$ and it can be shown that $\omega_3\approx\frac{3}{4}$ (see \cite{Cervero23}).
Using our framework, the difference between $\omega_2$ and $\omega_3$ in the CHSH game is not sufficient to offset the cost of error correction.
Indeed, in the noiseless case practical codes require $1.1h(1-\frac{2+\sqrt{2}}{4})\approx 0.661$ bits of error correction per round of the protocol.
To circumvent this problem, it is possible to use different measurement settings in testing and generation rounds in such a way that the generation settings produce highly correlated outcomes for Alice and Bob, thereby reducing the cost of error correction---see for example the framework in \cite{AMP06, PAB+09, AFRV19, TSB+22}.
We do not explore this route here as it has already been extensively analysed in the aforementioned works, but remark that the functions which bound the amount of randomness of Alice which is secret from Eve obtained in the works of \cite{AFRV19, TSB+22} may directly replace the function $g$ in Eq.~\eqref{l_key} of \thref{th_secrecy} to yield equivalent secrecy guarantees and positive key rates for the CHSH game.

\begin{figure}[h]
    \centering
    \includegraphics[width=0.6\textwidth]{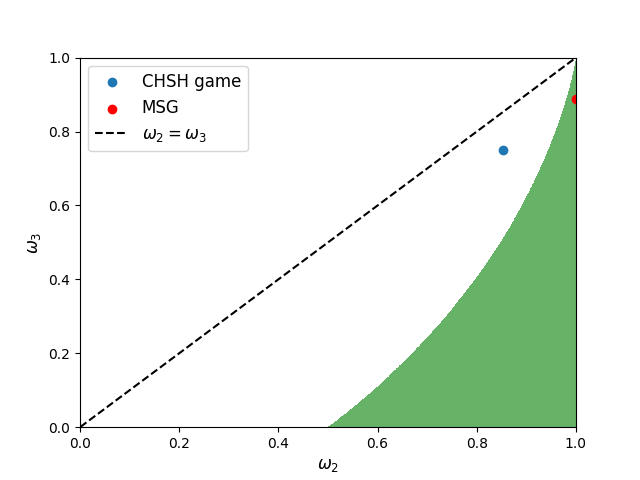}
    \caption{Shaded green area corresponds to the range of bipartite winning probabilities $\omega_2$ and tripartite winning probabilities $\omega_3$ from which Eq.~\eqref{asymptotic_keyrate} yields a positive asymptotic key rate.} 
    \label{fig:nlg_pos_keyrate}
\end{figure}

\subsection{Application with the magic square game}\label{sec_msg_keyrates}
In this section, we apply our NLG-DIQKD protocol to the magic square game which we briefly describe in the following.

The two party magic square game $\msgt$ proceeds as follows:
\begin{itemize}
    \item Alice receives $x\in \{0,1,2\}$ with uniform probability and outputs $a \in\{0,1\}^3$ s.t. $\bigoplus_k a[k] = 0$.
    \item Bob receives $y\in \{0,1,2\}$ with uniform probability and outputs $b \in \{0,1\}^3$ s.t. $\bigoplus_k b[k] =1$.
\end{itemize}
Alice and Bob win if $a[y]=b[x]$, where $a[k]$ (respectively $b[k]$) denotes the $k$-th bit in $a$ ($b$).

It is known \cite{Mermin90, Peres90} that $\omega_\textup{C}(\msgt)=\frac{8}{9}$ and $\omega_\textup{Q}(\msgt)=1$. 
The best classical strategy consists of preparing a $3\times 3$ grid of bits such that the rows satisfy Alice's parity condition and the columns satisfy Bob's parity condition. 
Any $3\times3$ grid filled in this way always has one square which is impossible to fill, so Alice and Bob always lose when their inputs $x,y$ correspond to that square.
The quantum strategy achieving $\omega_\textup{Q}(\msgt)=1$ requires the honest parties to share $\kettext{\Phi^+}_{Q_{A_1}Q_{B_1}}\otimes \kettext{\Phi^+}_{Q_{A_2}Q_{B_2}}$ ---that is, two copies of the maximally entangled Bell state $\kettext{\Phi^+}=\frac{1}{\sqrt{2}}(\kettext{00}+\kettext{11})$--- and perform the following observables on their respective halves:

\begin{align}
\label{msg2_measurements}
\begin{tabular}{  m{1.5cm} | m{1.5cm} | m{1.5cm} | m{1.5cm} } 
    &  
    \begin{center} $y=0$ \end{center} &  
    \begin{center} $y=1$ \end{center} &  
    \begin{center} $y=2$ \end{center} \\  
  \hline
    \begin{center} $x=0$ \end{center} &
    \begin{center} $\mathbb{I}\otimes Z$ \end{center} &  
    \begin{center} $Z\otimes \mathbb{I}$ \end{center} &  
    \begin{center} $Z\otimes Z$ \end{center} \\ 
  \hline
    \begin{center} $x=1$ \end{center} &
    \begin{center} $X\otimes \mathbb{I}$ \end{center} &  
    \begin{center} $\mathbb{I}\otimes X$ \end{center} &  
    \begin{center} $X\otimes X$ \end{center} \\ 
  \hline
    \begin{center} $x=2$ \end{center} &
    \begin{center} $-X\otimes Z$ \end{center} &  
    \begin{center} $-Z\otimes X$ \end{center} &  
    \begin{center} $Y\otimes Y$ \end{center} \\ 
  \hline
\end{tabular}
\end{align}

Namely, on input of $x$ for Alice (respectively, $y$ for Bob), she applies the three commuting observables in the $x$-th row ($y$-th column) on her halves of $\kettext{\Phi^+}_{Q_{A_1}Q_{B_1}}\otimes \kettext{\Phi^+}_{Q_{A_2}Q_{B_2}}$.
Since the overlap of the $x$-th row and $y$-th column contains the same observables for Alice and Bob, they are guaranteed to obtain the same bit.
Further, since the product of each row (respectively, each column) equals $\mathbb{I}\otimes\mathbb{I}$ (respectively $-\mathbb{I}\otimes\mathbb{I}$), the parity conditions are satisfied.

In the tripartite extension $\msg$, the third player receives both inputs $x,y\in\{0,1,2\}$ and produces the single bit $c$.
The new predicate checks whether $a[y]=c$ in addition to $a[y]=b[x]$.
It is easy to see that $\omega_\textup{C}(\msg)=\frac{8}{9}$ and in Section \ref{sec_msg} we use numerical optimization tools to show\footnote{In addition, we use the linear program specified in \cite[Section 3]{Toner08} to show $\omega_\textup{NS}(\msg)=1$.} $\omega_\textup{Q}(\msg)\lesssim\frac{8.00077}{9}$.

Now, let us discuss the honest behaviour of MSG-DIQKD to compute finite and asymptotic key rates.
In an honest implementation, the devices $\mathcal{D}_A^{\hon}, \mathcal{D}_B^{\hon}$ of Alice and Bob behave in an i.i.d.\ way and adhere to the optimal strategy of the bipartite magic square game: $\mathcal{S}=\{\nu_{Q_AQ_B}, \{\{\Pax\}_a\}_x, \{\{\Qby\}_b\}_y\}$.
In this strategy $\nu_{Q_AQ_B}=\ketbratext{\Phi^+}{\Phi^+}_{Q_{A_1}Q_{B_1}}\otimes \ketbratext{\Phi^+}{\Phi^+}_{Q_{A_2}Q_{B_2}}$ and $\{\Pax\}_{a\in\mathcal{A}}$, $\{\Qby\}_{b\in\mathcal{B}}$ for $\mathcal{A}=\{000,011,101,110\}$, $\mathcal{B}=\{001,010,100,111\}$, are the POVMs associated to the observables in Eq.~\eqref{msg2_measurements}.

For the noise model, we assume that each of the two Bell states in $\nu_{Q_AQ_B}$ suffers from identical depolarising noise, resulting in the state
\begin{align}\label{msg_init_state}
    \tilde{\nu}_{Q_AQ_B} = \big((1-2q)\ketbra{\Phi^+}+2q\tau\big)_{Q_{A_1}Q_{B_1}} \otimes \big((1-2q)\ketbra{\Phi^+}+2q\tau\big)_{Q_{A_2}Q_{B_2}}, 
\end{align}
where $q\in[0,\frac{1}{2}]$ and $\tau_{Q_{A_1}Q_{B_1}}, \tau_{Q_{A_2}Q_{B_2}}$ are the maximally mixed states of dimension four.
The devices $\mathcal{D}_A^{\hon}, \mathcal{D}_B^{\hon}$ therefore share the state $\omega_{R_AR_B}^{\hon}=\tilde{\nu}_{Q_AQ_B}^{\otimes n}$.
Standard calculations then show that overall QBER and expected winning probability for the bipartite magic square game are respectively 
\begin{align}\label{msg_qber}
    Q=\frac{2}{9}q(7-5q) \qquad \textup{and}\qquad \omega_\textup{exp} = 1-\frac{2}{9}q(7-5q).
\end{align}
Formally, this is because the state $\tilde{\nu}_{Q_AQ_B}$ is a mixture containing the state $\ketbra{\Phi^+}^{\otimes 2}$ with probability $(1-2q)^2$, the state $\tau^{\otimes 2}$ with probability $4q^2$, and the states $\ketbra{\Phi^+}\otimes\tau$ and $\tau\otimes\ketbra{\Phi^+}$ with probability $2q(1-2q)$ each.
It's clear that the probabilities of winning when sharing $\ketbra{\Phi^+}^{\otimes 2}$ and $\tau^{\otimes 2}$ are $1$ and $\frac{1}{2}$, respectively. 
On the other hand, when sharing the state $\ketbra{\Phi^+}\otimes\tau$, it is only possible to win with probability $1$ when $(x,y)\in\{(0,1),(1,0)\}$ (corresponding to product observables $Z\otimes \mathbb{I}$ and $\mathbb{I}\otimes X$), which occurs with probability $\frac{2}{9}$ over the choice of inputs.
All remaining inputs result in a winning probability of $\frac{1}{2}$.
Similarly, when sharing the sate $\tau\otimes\ketbra{\Phi^+}$, inputs $(x,y)\in\{(0,0),(1,1)\}$ yield winning probability $1$, whereas all other inputs yield $\frac{1}{2}$.
Overall, linearity of the trace over this mixture of states gives
\begin{align}
    \omega_\textup{exp} 
    = (1-2q)^2\cdot 1 + 2\cdot 2q(1-2q)\bigg(\frac{7}{9}\cdot\frac{1}{2}+\frac{2}{9}\cdot1\bigg) + (2q)^2\cdot\frac{1}{2}
    = 1-\frac{2}{9}q(7-5q).
\end{align}

Now, we apply \thref{th_completeness}, \thref{th_correctness} and \thref{th_secrecy} to compute the achievable finite sized key rates for the MSG-DIQKD protocol under this noise model and with optimized affine functions $g$ derived in Section \ref{sec_msg}.
Results are shown in Figure \ref{fig:MSG-DIQKD_keyrates}.

\begin{figure}[h]
    \centering
    \includegraphics[width=0.6\textwidth]{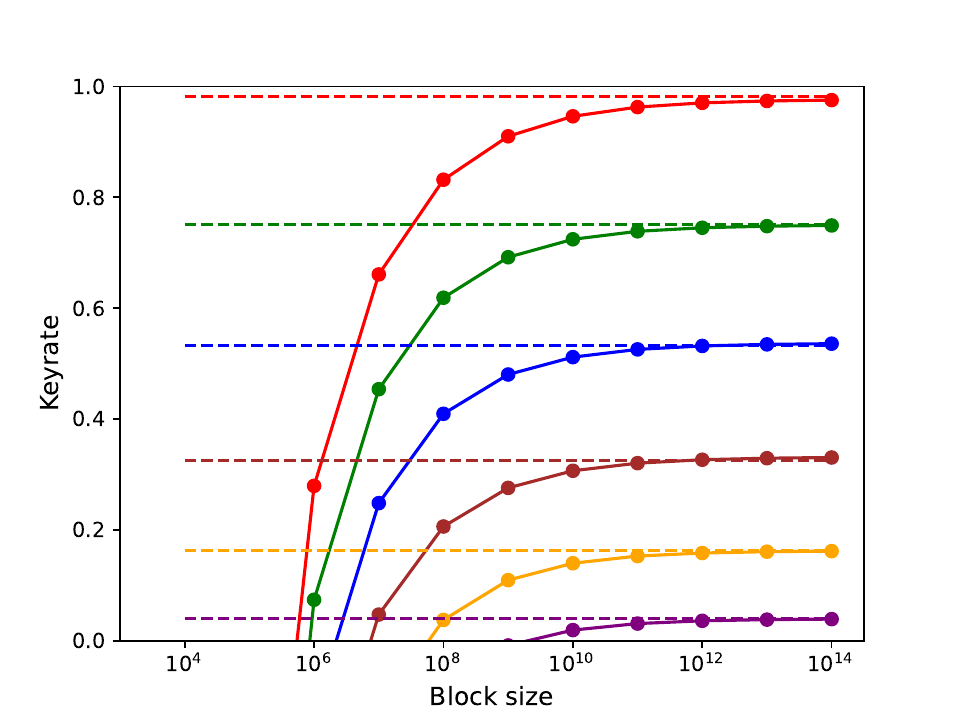}
    \caption{Keyrates for MSG-DIQKD protocol as a function of block size (i.e. number of rounds) under depolarizing noise. 
    The curves correspond to depolarizing noise for $q \in \{ 0, 0.005, 0.01, 0.015, 0.02, 0.025\}$ (red, green, blue, brown, orange and purple, respectively), testing probability $\gamma$ and noise tolerance $\delta_\textup{tol}$ chosen as per Eq.~\eqref{gamma_delta}. 
    The key rates for $q \in \{0, 0.005, 0.01, 0.015\}$ (respectively $q \in \{0.02, 0.025\}$) were obtained using the min-tradeoff functions computed with the von Neumann entropy bounds of Eq.~\eqref{msg_vn_lower_bound} (respectively min-entropy bounds of Eqs.~\eqref{affine_constrained},\eqref{msg_prob_upper_bound}).
    The parameter $\lambda_\textup{EC}$ is set to {$1.1h(Q)$} and the security parameters are chosen as $\epscom = 10^{-2}$ and $\epssou = 10^{-6}$.
    The dashed horizontal lines correspond to the asymptotic behaviour given by Eq.~\eqref{asymptotic_keyrate}.
    See \cite{Cervero23} for Python script used to generate this figure.}
    \label{fig:MSG-DIQKD_keyrates}
\end{figure}

Following the finite key rates, we turn our attention to the asymptotic key rates of the MSG-DIQKD protocol.
For the remainder of this section, we ignore round indexes and use $X, Y, A, B$ to respectively denote the inputs and the outputs of Alice and Bob in an instance of the MSG. 
Further, we use $S_A$ and $S_B$ for the singular raw key bits generated from $A$ and $Y$, and $B$ and $X$, respectively.
Since Alice's key bits are generated using arbitrary inputs of both Alice and Bob, and the error correction in the protocol is one-way, the asymptotic key rate of MSG-DIQKD is given by the Devetak-Winter bound \cite{DW05}
\begin{align}\label{DW_rate}
    r\geq H(S_A|XYE)_{\hat{\nu}}-H(S_A|XYS_B)_{\hat{\nu}},
\end{align}
where $\hat{\nu}\in S(S_AS_BXYE)$ is the post-measurement classical-quantum state shared by Alice, Bob and the eavesdropper Eve after one instance of the magic square game with the initial quantum state $\tilde{\nu}\in S(Q_AQ_BE)$ which extends the state in Eq.~\eqref{msg_init_state}.
Further, $H(\cdot|\cdot)$ is the conditional von Neumann entropy which for a bipartite state $\rho\in S(AB)$ is defined as $H(A|B)_\rho:=H(AB)_\rho - H(B)_\rho$, for $H(A)_\rho:=-\Tr[\rho\log\rho]$.
In the Devetak-Winter rate, the first entropy term intuitively corresponds to the amount of Alice's randomness that is secret from Eve, whereas the second entropy term encapsulates the cost of error correction for Alice and Bob and equals the binary entropy of the QBER.

We present two bounds to the Devetak-Winter rate in Figure \ref{fig:MSG_asymptotics}. 
In Figure \ref{fig:DW_min_ent} we use a bound on the min-entropy (defined in Section \ref{sec_prelims_2}) to bound the first term in the Devetak-Winter rate.
On the other hand, in Figure \ref{fig:DW_vn_ent} we use the numerical bounds on the von Neumann entropy $H(S_A|X\in\{0,1,2\},Y=0;E)_{\hat{\nu}}$ specified in Eq.~\eqref{MSG_di_vn_smol}\footnote{The reduced choice of measurement settings in the von Neumann entropy is done to make the underlying optimization problem tractable.
We remark that $H(S_A|X\in\{0,1,2\},Y=0;E)_{\hat{\nu}}\leq H(S_A|XYE)_{\hat{\nu}}$.}.
Further, in Figure \ref{fig:MSG_asymptotics_noise} we compare our asymptotic rates with those reported in the recent paper \cite{ZMZ+23}.

\begin{figure}[h]
    \centering
    \begin{subfigure}{0.45\textwidth}
        \includegraphics[width=\textwidth]{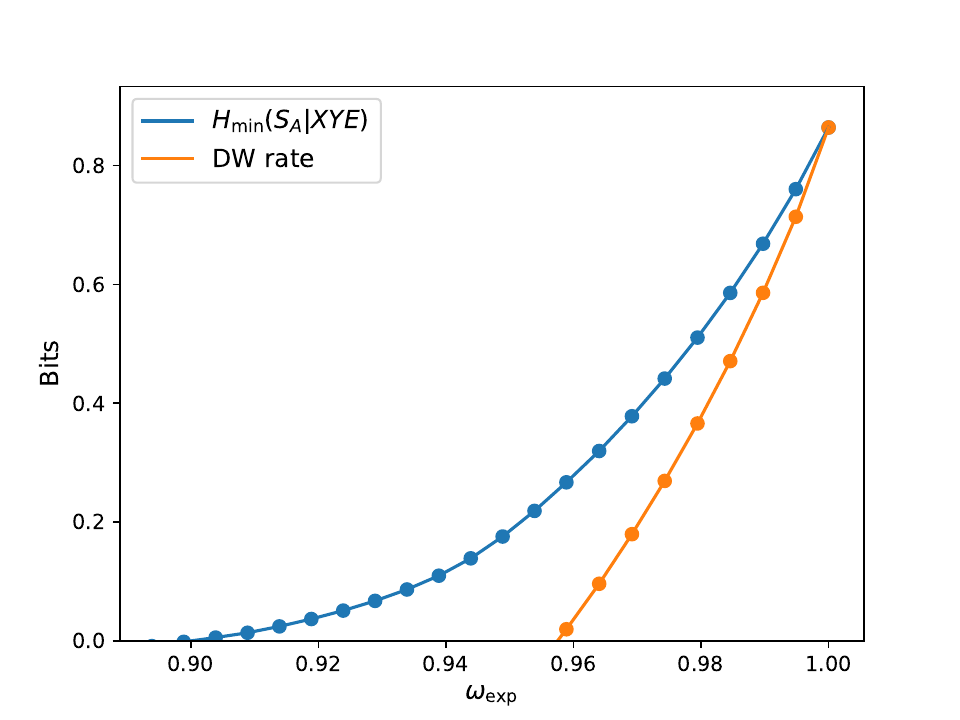}
        \caption{Bounds with min-entropy.}
        \label{fig:DW_min_ent}
    \end{subfigure}
    \begin{subfigure}{0.45\textwidth}
        \includegraphics[width=\textwidth]{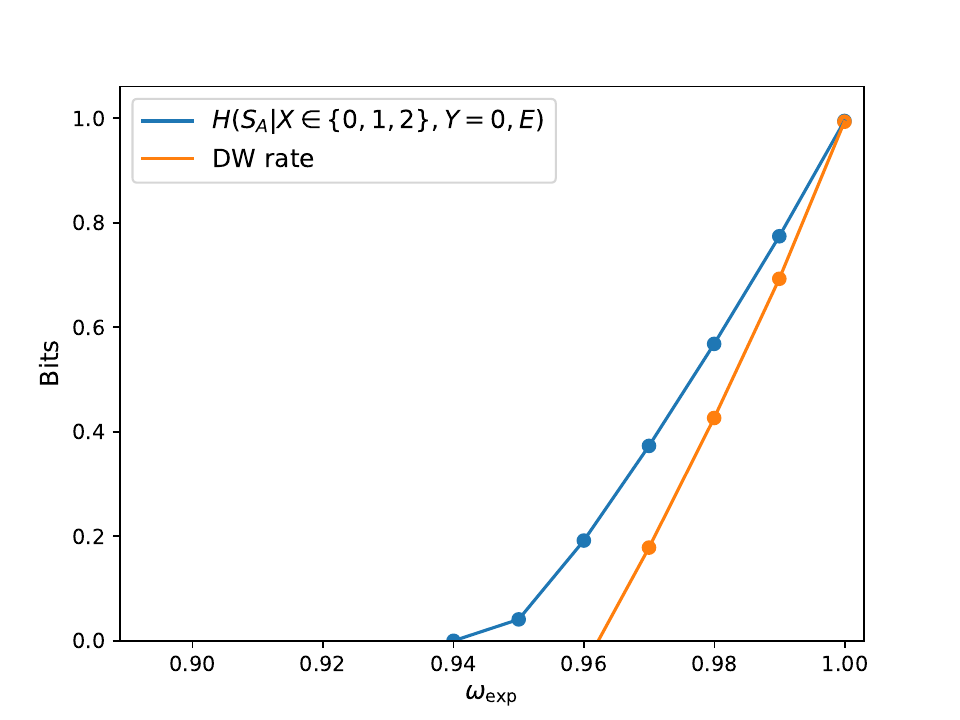}
        \caption{Bounds with von Neumann entropy.}
        \label{fig:DW_vn_ent}
    \end{subfigure}
    \caption{Device independent bounds for MSG-DIQKD. Blue and orange lines respectively correspond to bounds on the conditional entropy and the asymptotic key rate.
    See \cite{Cervero23} for Python script used to generate this figure.}
    \label{fig:MSG_asymptotics}
\end{figure}

\begin{figure}
    \centering
    \includegraphics[width=0.6\textwidth]{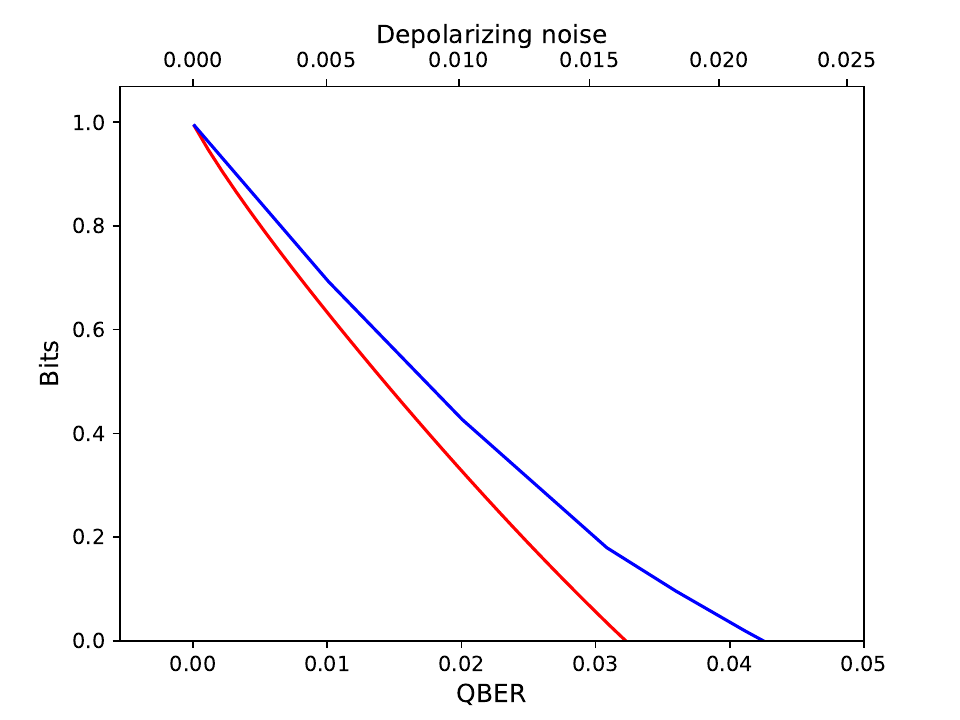}
    \caption{Comparison of Devetak-Winter rates as a function of QBER (bottom horizontal axis) and its corresponding depolarizing noise (top horizontal axis).
    Blue line is computed using $\max\{H_{\min}(S|XYE), H(S|X\in\{0,1,2\},Y=0;E)\}$ and red line is computed with $H(S|X=0,Y=0;E)$ and matches the key rates reported in \cite{ZMZ+23} (see footnote \ref{footnote}).}
    \label{fig:MSG_asymptotics_noise}
\end{figure}

Note that in these figures the min-entropy bound yields better key rates in the regime with higher quantum bit error rate, whereas the von Neumann bounds perform better in the regime where $\omega_\textup{exp}\gtrsim 0.985$.
This is purely a consequence of the reduced choice of measurement inputs in the optimization of Eq.~\eqref{MSG_di_vn_smol} due to computational complexity of the optimization problem ---indeed it is expected that $H(S|XYE)\geq H_{\min}(S|XYE)$.

The asymptotic rates using the min-entropy bounds from Section \ref{sec_msg_min_ent} suggest that positive key rate is only possible for quantum bit error rate lower than approximately $0.044$, which corresponds to depolarizing noise of about $2.88\%$ as per Eq.~\eqref{msg_qber}.
The recent work \cite[Figure 1 in Supp. Material]{ZMZ+23} applies identical techniques to lower bound the entropy $H(S|X=0,Y=0;E)$, which they use to claim security of MSG-DIQKD against collective attacks in the asymptotic regime with robustness against depolarizing noise of about $1.6\%$\footnote{\label{footnote}We remark that the constraint originally used in \cite{ZMZ+23} to bound $H(S|X=0,Y=0;E)$ correspond to Alice and Bob's winning probability take away Alice and Bob's losing probability.
Hence, the original claimed results of \cite[Figure 1 in Supp. Material]{ZMZ+23} were horizontally scaled by a factor of $2x -1$.
We calculate the noise robustness of $1.6\%$ of the protocol in \cite{ZMZ+23} taking this error into consideration.
We have notified the authors and these mistakes have been corrected in the latest version of \cite{ZMZ+23}.
}.

\section{Finite key security proof}\label{sec_finite_proof}

Before delving into the proofs of completeness, correctness and secrecy of $\mathcal{P}$ (algorithm \ref{prot_qkd_actual}) we will need various definitions and prior results, presented in the following Preliminaries section.

\subsection{Preliminaries}\label{sec_prelims_2}

We will require some measures of closeness between different states. 
The \emph{fidelity} between quantum states $\rho,\sigma\in S(A)$ is given by
\begin{align}
    F(\rho,\sigma) := \Big(\Tr\left|\sqrt{\rho}\sqrt{\sigma}\right|\Big)^2.
\end{align}
On the other hand, the \emph{generalised fidelity} for subnormalised states $\rho,\sigma\in S_{\leq}(A)$ is given by
\begin{align}
    F_{*}(\rho,\sigma) := \Big(\Tr\big|\sqrt{\rho}\sqrt{\sigma}\big| + \sqrt{(1-\Tr\rho)(1-\Tr\sigma)}\Big)^2.
\end{align}
From the generalised fidelity we define the \emph{purified distance} 
\begin{align}
    P(\rho,\sigma):= {\sqrt{1-F_{*}(\rho,\sigma)}}.
\end{align}

Following the formalism from \cite{Tomamichel16} we also introduce various entropies.
For a subnormalized state $\rho_{AB}\in S_\leq(AB)$, the \emph{min-entropy} and the \emph{max-entropy} of $A$ given $B$ are
\begin{align}
    H_{\min}(A|B)_\rho &:= \max_{\substack{\sigma_B\in S_\leq(B): \\ \ker(\sigma_B)\subseteq\ker(\rho_B)}}-\log\norm{\Big(\mathbb{I}_A\otimes\sigma_B^{-\frac{1}{2}}\Big)\rho_{AB}\Big(\mathbb{I}_A\otimes\sigma_B^{-\frac{1}{2}}\Big)}_\infty\\
    H_{\max}(A|B)_\rho &:= \max_{\sigma_B\in S_\leq(B)}\log F(\rho_{AB},\mathbb{I}_A\otimes\sigma_B).
\end{align}
In the case $\rho_{XE}=\sum_{x\in\mathcal{X}} p_x\ketbratext{x}{x}\otimes\rho_E^x$ is a classical-quantum state in $S(XE)$ then 
\begin{align}
    H_{\min}(X|E)_\rho = -\log p_\textup{guess}(X|E)_\rho,
\end{align}
where $p_\textup{guess}(X|E)_\rho:=\sup_{\{M_x\}_x}\sum_{x} p_x \Tr[\rho_{E}^x M_x]$ is the guessing probability and the supremum is over POVMs on register $E$.

We also define the \emph{smooth-min entropy} and the \emph{smooth-max entropy} of $A$ given $B$ for a subnormalized state $\rho_{AB}\in S_\leq(AB)$ and a smoothing parameter $\varepsilon\in[0,\sqrt{\Tr\rho_{AB}}]$: 
\begin{align}
    H_{\min}^\varepsilon(A|B)_\rho &:= \max_{\substack{\tilde{\rho}_{AB}\in S_{\leq}(AB):\\ P(\rho_{AB},\tilde{\rho}_{AB})\leq\varepsilon}} H_{\min}(A|B)_{\tilde{\rho}}\\
    H_{\max}^\varepsilon(A|B)_\rho &:= \min_{\substack{\tilde{\rho}_{AB}\in S_{\leq}(AB):\\ P(\rho_{AB},\tilde{\rho}_{AB})\leq\varepsilon}} H_{\max}(A|B)_{\tilde{\rho}}.
\end{align}
The smooth min-entropy is a generalisation of the min-entropy which takes into account states that are close to $\rho_{AB}$ in purified distance.
Likewise for the smooth max-entropy.
Both of these smooth entropies satisfy the following \emph{data processing inequality} \cite[Section 6.3.3]{Tomamichel16}:
for $\mathcal{E}\in\CPTP(B,B')$ and $\tilde{\rho}=(\mathbb{I}_A\otimes\mathcal{E})(\rho_{AB})$, then
\begin{align}
    H^\varepsilon_{\min}(A|B)_\rho\leq H^\varepsilon_{\min}(A|B')_{\tilde{\rho}}, \qquad\textup{and}\qquad H^\varepsilon_{\max}(A|B)_\rho\leq H^\varepsilon_{\max}(A|B')_{\tilde{\rho}}.
\end{align}

\subsubsection{Min-tradeoff functions}\label{sec_min_tradeoff}
Min-tradeoff functions bound the amount of randomness generated by a party, Alice, that is secret from an eavesdropper during any round in a protocol comprised of $n$ measurement rounds.

Formally, let $\mathcal{M}\in\CPTP(RE,SCRE)$ be a quantum instrument which updates quantum registers $R$ and $E$ and produces classical information $S,C$ over alphabets $\mathcal{S,C}$.
Let $\mathcal{P}(\mathcal{C})$ denote the set of probability distributions over alphabet $\mathcal{C}$ and let $F\cong RE$.

\begin{definition}[Min-tradeoff function]\thlabel{def_min-tradeoff}
    A function $f:\mathcal{P}(\mathcal{C})\rightarrow\mathbb{R}$ is a min-tradeoff function for the channel $\mathcal{M} \in \CPTP(RE, SCRE)$ if for any $q\in\mathcal{P}(\mathcal{C})$
    \begin{align}
        f(q)\leq \min_{\omega_{REF}} H(S|EF)_{\mathcal{M}(\omega_{REF})},
    \end{align}
    where the minimum is over states $\omega_{REF}\in S(REF)$ such that $\Tr_{SREF}[\mathcal{M}(\omega_{REF})]=\sum_c q_c\ketbratext{c}{c}_C$, that is over output states whose $C$ register is distributed as $q\in\mathcal{P}(\mathcal{C}).$
\end{definition}

Now, consider instead the sequence of channels $\{\mathcal{M}_i \in \CPTP(RE, S_iC_iRE)\}_{i=1}^n$ over quantum registers $R,E$ and classical registers $S_i,C_i$ over alphabets $\mathcal{S, C}$. 
We say that a function $f:\mathcal{P}(\mathcal{C})\rightarrow\mathbb{R}$ is a min-tradeoff function for the sequence $\{\mathcal{M}_i \in \CPTP(RE, S_iC_iRE)\}_{i=1}^n$ if $f$ is a min-tradeoff function for each channel $\mathcal{M}_i\in \CPTP(RE, S_iC_iRE)$ in the sequence. 

Henceforth, let $f$ be a min-tradeoff function for a sequence of channels $\{\mathcal{M}_i \in \CPTP(RE, S_iC_iRE)\}$.
Following the exposition from \cite[Section 4]{MFSR22}, for any $q\in\mathcal{P}(\mathcal{C})$ define
\begin{align}
    \Sigma_i(q) := \{\mathcal{M}_i(\omega_{REF})\;|\; \omega\in S(REF),\; \Tr_{S_iREF}[\mathcal{M}(\omega_{REF})]=\sum_{c\in\mathcal{C}} q_c\ketbratext{c}{c}_{C_i}\}
\end{align}
as the set of output states of the channel $\mathcal{M}_i$ whose $C_i$ register is distributed as $q$.
Note that the minimisation in \thref{def_min-tradeoff} is precisely over states in $\Sigma_i(q)$.
Further define $\Sigma(q):=\bigcup_i\Sigma_i(q)$.
We will need the following properties of the min-tradeoff function $f$ as defined in \cite{DF19,DFR20,MFSR22}: 
\begin{align}
    \Max(f) &:= \max_{q\in\mathcal{P}(\mathcal{C})} f(q),\\
    \Min(f) &:= \min_{q\in\mathcal{P}(\mathcal{C})} f(q),\\
    \Min_\Sigma (f) &:= \min_{q:\Sigma(q)\neq\emptyset} f(q),\\
    \textup{Var}(f)&:= \max_{q:\Sigma(q)\neq\emptyset} \sum_{x\in\mathcal{C}}q(x)f(\delta_x)^2-\bigg(\sum_{x\in\mathcal{C}}q(x)f(\delta_x)\bigg)^2,
\end{align}
where $\delta_x$ is the distribution over $\mathcal{C}$ with $\Pr[C=x]=1$.

Lastly, we will require the following result from \cite[Lemma V.5]{DF19}.
\begin{lemma}\thlabel{th_infrequent_sampling}
    Suppose that for each $i$, the channels $\mathcal{M}_i\in\CPTP(RE,S_iC_iRE)$ for $S_i\in\{0,1\}$ and $C_i\in\{0,1,\perp\}$ are of the form
    \begin{align}
        \mathcal{M}_i = \gamma \mathcal{M}_i^{\textup{test}} + (1-\gamma)\mathcal{M}_i^{\textup{gen}},
    \end{align}
    where $\gamma\in(0,1)$ and $\mathcal{M}_i^{\textup{gen}}$ sets $C_i=\perp$.
    Further suppose that the function $g:\mathcal{P}(\{0,1\})\rightarrow\mathbb{R}$ is an affine function which for all $i\in\{1,...,n\}$ and for all $[q_0,q_1]\in\mathcal{P}(\{0,1\})$ satisfies
    \begin{align}\label{g_function_def}
        g(q)\leq \min_{\omega_{REF}} \Big\{H(S_i|EF)_{\mathcal{M}_i(\omega_{REF})} \,:\, \Tr_{SREF}[\mathcal{M}_i^\textup{test}(\omega_{REF})]=q_0\ketbra{0}_{C_i}+q_1\ketbra{1}_{C_i}\Big\}.
    \end{align}
    Then the affine function defined by 
    \begin{align}\label{affine_f}
        f(\delta_0)=g(\delta_1)+\frac{1}{\gamma}(g(\delta_0)-g(\delta_1)),\qquad f(\delta_1)=f(\delta_\bot)=g(\delta_1)
    \end{align}
    is a min-tradeoff function for the sequence $\{\mathcal{M}_i \in \CPTP(RE, S_iC_iRE)\}_{i=1}^n$ which additionally satisfies
    \begin{align}
        \textup{Max}(f) &= \textup{Max}(g),\\
        \textup{Min}(f) &= \Big(1-\frac{1}{\gamma}\Big)\textup{Max}(g)+\frac{1}{\gamma}\textup{Min}(g),\\
        \textup{Min}_{\Sigma}(f)&\geq \textup{Min}(g),\\
        \textup{Var}(f) &\leq\frac{1}{\gamma}\big(\textup{Max}(g)-\textup{Min}(g)\big)^2.
    \end{align}
\end{lemma}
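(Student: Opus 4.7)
The plan is to verify that $f$ satisfies the min-tradeoff defining inequality for each channel $\mathcal{M}_i$ in the sequence, and then to read off the four auxiliary constants by direct calculation.

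The central structural observation I would exploit is that because $\mathcal{M}_i^{\textup{gen}}$ deterministically sets $C_i=\perp$ while $\mathcal{M}_i^{\textup{test}}$ is constrained to outcomes in $\{0,1\}$, any distribution $q=(q_0,q_1,q_\perp)$ on $\{0,1,\perp\}$ with $\Sigma_i(q)\neq\emptyset$ must satisfy $q_\perp=1-\gamma$ exactly and $q_c=\gamma\tilde{q}_c$ for $c\in\{0,1\}$, where $\tilde{q}=(\tilde{q}_0,\tilde{q}_1)\in\mathcal{P}(\{0,1\})$ is the conditional $C_i$-marginal of $\mathcal{M}_i^{\textup{test}}(\omega)$. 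This rigid correspondence between $q$ and $\tilde{q}$ drives the entire argument.

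For the min-tradeoff property, I would expand $f$ on such an achievable $q$ using its defining values in Eq.~\eqref{affine_f} together with affinity of both $f$ and $g$ to obtain
\begin{align*}
    f(q) = q_0 f(\delta_0) + q_1 f(\delta_1) + q_\perp f(\delta_\perp)
    = g(\delta_1) + \frac{q_0}{\gamma}\bigl(g(\delta_0)-g(\delta_1)\bigr)
    = \tilde{q}_0\, g(\delta_0) + \tilde{q}_1\, g(\delta_1)
    = g(\tilde{q}).
\end{align*}
The hypothesis Eq.~\eqref{g_function_def} applied to $\tilde{q}$ then gives $g(\tilde{q})\leq H(S_i|EF)_{\mathcal{M}_i(\omega)}$, so $f(q)\leq H(S_i|EF)_{\mathcal{M}_i(\omega)}$ for every admissible $\omega$, which is the required inequality. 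For the four constants, affinity of $f$ on the simplex $\mathcal{P}(\{0,1,\perp\})$ reduces $\Max(f)$ and $\Min(f)$ to comparing the vertex values $f(\delta_0),\,f(\delta_1),\,f(\delta_\perp)$; under the natural monotonicity $g(\delta_0)\leq g(\delta_1)$ enjoyed by the paper's affine entropy lower bounds, $f(\delta_0)=g(\delta_1)-\gamma^{-1}(g(\delta_1)-g(\delta_0))$ is the vertex minimum and $f(\delta_1)=f(\delta_\perp)=g(\delta_1)=\Max(g)$ the vertex maximum, giving the two claimed expressions. For $\Min_\Sigma(f)$, restricting to achievable $q$ with $f(q)=g(\tilde{q})$ immediately yields $\Min_\Sigma(f)\geq\min_{\tilde{q}}g(\tilde{q})=\Min(g)$. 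For the variance, a direct expansion produces the identity $\sum_x q_x f(\delta_x)^2-f(q)^2=q_0(1-q_0)\gamma^{-2}(g(\delta_0)-g(\delta_1))^2$; maximizing over $q_0\in[0,\gamma]$ and using $q_0(1-q_0)\leq q_0\leq\gamma$ gives the stated $\textup{Var}(f)\leq\gamma^{-1}(\Max(g)-\Min(g))^2$.

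The proof is largely mechanical once the identity $f(q)=g(\tilde{q})$ for achievable $q$ is noticed; the real content of the lemma lies in the $\gamma^{-1}$ amplification built into $f(\delta_0)$, which exactly compensates for the coin-flip dilution so that $f$ evaluated on the raw empirical distribution on $\{0,1,\perp\}$ coincides with $g$ evaluated on the conditional test distribution. The subtle point I would emphasize is that Eq.~\eqref{g_function_def} already bounds the entropy under the \emph{mixed} channel $\mathcal{M}_i$ rather than under $\mathcal{M}_i^{\textup{test}}$ alone, which is precisely what allows the pass-through $f(q)=g(\tilde{q})\leq H(S_i|EF)_{\mathcal{M}_i(\omega)}$ to complete without invoking any separate concavity argument relating the conditional entropies under the two constituent channels.
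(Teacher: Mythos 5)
Your proof is correct, and it reconstructs the standard ``crossover min-tradeoff function'' argument: the paper itself does not prove this lemma but imports it directly from \cite[Lemma V.5]{DF19}, whose proof proceeds exactly as you describe, via the identity $f(q)=g(\tilde q)$ for achievable $q$ (with $q_\perp=1-\gamma$, $q_c=\gamma\tilde q_c$) followed by vertex evaluation for the four constants. The one point worth making explicit is that the equalities $\Max(f)=\Max(g)$ and $\Min(f)=(1-\tfrac1\gamma)\Max(g)+\tfrac1\gamma\Min(g)$ require $g(\delta_1)\geq g(\delta_0)$ (otherwise $f(\delta_0)>\Max(g)$); you correctly flag this, and the paper likewise imposes $\Max(g)=g(\delta_1)$ as a hypothesis at the point where it invokes the lemma in Section~\ref{sec_min_tradeoff}.
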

Channels of the form as in this lemma are known as \emph{infrequent sampling channels}.

\subsection{Completeness}
In this section, we prove that for any $(\mathcal{D}_A^\textup{hon}$, $\mathcal{D}_B^\textup{hon}, \omega_{R_AR_B}^\textup{hon})$ which plays the game $\mathcal{G}$ independently in every round and wins with probability $\oexp$, the protocol $\mathcal{P}_{(\mathcal{D}_A^{\hon}, \mathcal{D}_B^{\hon}),(n,\mathbf{pe}, \mathbf{ec}, \mathbf{pa})}$ with parameters $\mathbf{ec}=(\textup{Synd}, \textup{Corr}, \mathcal{H}_{\EC}, l_{\EC})$ and $\mathbf{pe}=(\gamma, \oexp, \delta_\textup{tol})$ is $\big(\epsilon_\textup{com}^{\PE}+\epsilon_\textup{com}^{\EC}\big)$-complete if
\begin{align}
    \Pr[S_A\neq \hat{S}_B]_{\rho^{\hon}}\leq\varepsilon_\textup{com}^{\EC}, \qquad \text{where}\qquad {\rho^{\hon}} = \mathcal{P}_{(\mathcal{D}_A^{\hon}, \mathcal{D}_B^{\hon}),(n,\mathbf{pe}, \mathbf{ec}, \mathbf{pa})}(\omega^{\hon}_{R_AR_B})
\end{align}
and if
\begin{align}\label{PE_com_bound}
    \gamma \geq \frac{2(1-\oexp) + \dtol}{\dtol^2 n} \ln\frac{1}{\varepsilon_\textup{com}^{\PE}}.
\end{align}

To this end, let $\Omega$ denote the event that the protocol does not abort.
We have $\Omega = \Omega_{\PE} \wedge \Omega_{\EC}$ where $\Omega_{\PE}$ is the event $F_{\PE}=\checkmark$ and $\Omega_{\EC}$ is the event $F_{\EC}=\checkmark$.
Define the additional event 
\begin{align}
    &\Omega_\textup{g} := \big\{S_A = \hat{S}_B\big\},
\end{align}
and note $\Omega\wedge\Omega_\textup{g} = \Omega_{\PE}\wedge\Omega_{\EC}\wedge\Omega_\textup{g} = \Omega_{\PE}\wedge\Omega_\textup{g}$.
The probability of abort is given by event $\Omega^c=\Omega_{\EC}^c\vee\Omega_{\PE}^c$, wherefore 
\begin{align}
    \Pr[\Omega^c_{\EC}\vee\Omega^c_{\PE}]_{\rho^{\hon}}
    \leq \Pr[\Omega_{\EC}^c]_{\rho^{\hon}}+ \Pr[\Omega^c_{\PE}]_{\rho^{\hon}} \leq \Pr[\Omega_\textup{g}^c]_{\rho^{\hon}} + \Pr[\Omega_{\PE}^c]_{\rho^{\hon}}. \label{comp_two_terms}
\end{align}
The first inequality follows from the union bound and the second as event $\Omega_\textup{g}$ implies event $\Omega_{\EC}$.
Informally, the term $\Pr[\Omega_\textup{g}^c]_{\rho^{\hon}}$ is the probability that the error correction in the protocol fails (not to be confused with the error correction \emph{check} which uses hash functions and whose failure is given by event $\Omega^c_{\EC}$), and the term $\Pr[\Omega^c_{\PE}]_{\rho^{\hon}}$ is the probability that the parameter estimation fails.
We tackle each term in Eq.~\eqref{comp_two_terms} separately.

To bound the term $\Pr[\Omega_\textup{g}^c]_{\rho^{\hon}}$ we note that the error correction scheme aims to correct all instances $S_{A,i}\neq S_{B,i}$.
In an honest implementation of the protocol, the expected number of such occurrences is $n(1-\oexp)$.
For $n$ large enough \cite[Theorem 1]{TMPE17} shows that for any parameter $\varepsilon_\textup{com}^{\EC}\in(0,1]$ independent of $n$, there exists an error correction scheme given by functions
\begin{align}
    \textup{Synd}:\{0,1\}^n\rightarrow\{0,1\}^{\lambda_{\EC}}, \qquad \textup{Corr}:\{0,1\}^n\times\{0,1\}^{\lambda_{\EC}}\rightarrow\{0,1\}^{n}
\end{align}
satisfying $\Pr [\Omega_g]_{\rho^{\hon}}=\Prtext[S_A=\hat{S}_B]_{\rho^{\hon}}\geq 1- \varepsilon_\textup{com}^{\EC}$ and $\frac{\lambda_{\EC}}{n}=h(Q) + \Theta\big(\frac{\log n}{n}\big)$, where $h$ is the binary entropy and $Q = 1-\oexp$ is the quantum bit error rate.

For the remaining term $\Pr[\Omega^c_{\PE}]_\rho$ we note that in the honest behaviour, $C_i$ is a sequence of i.i.d variables with
\begin{align}
    \Pr[C_i=0]=\gamma(1-\omega_\textup{exp}),\qquad \Pr[C_i=1]=\gamma\omega_\textup{exp}, \qquad \Pr[C_i=\bot]=1-\gamma,
\end{align}
where the probabilities are over the distribution generated when the honest devices are interacted with for the $i$-th time.
Then,
\begin{align}\label{Chernoff_bound}
    \Pr\big[\Omega^c_{\PE}\big]_{\rho^{\hon}}
    &= \Pr\big[|\{i \;:\; C_i=0\}|>(1-\omega_\textup{exp}+\delta_\textup{tol})\gamma n\big]_{\rho^{\hon}}\\
    &= \Pr[|\{i \;:\; C_i=0\}|>\bigg(1+\frac{\delta_\textup{tol}}{1-\omega_\textup{exp}}\bigg)\mathbb{E}\big[|\{i \;:\; C_i=0\}|\big]]_{\rho^{\hon}}\\
    &\leq \exp{-\frac{\delta_\textup{tol}^2}{2(1-\omega_\textup{exp})+\delta_\textup{tol}}\gamma n},
\end{align}
where in the last inequality we used the multiplicative Chernoff bound in \cite[Theorem 4.4]{ME05}.
Setting $\gamma$ as in the statement of the theorem (Eq.~\eqref{PE_com_bound}), we directly obtain
\begin{align}
    \Pr\big[\Omega^c_{\PE}\big]_{\rho^{\hon}}
    \leq \varepsilon_\textup{com}^{\PE},
\end{align}
as is required to complete the proof.

\subsection{Correctness}
In this section we tackle the correctness as per \thref{th_correctness}, which states that for any triple $(\mathcal{D}_A, \mathcal{D}_B,\omega^0_{R_AR_BE})$, the state $\rho = \mathcal{P}_{(\mathcal{D}_A, \mathcal{D}_B),(n,\mathbf{pe}, \mathbf{ec}, \mathbf{pa})}(\omega^0_{R_AR_BE})$ with error correction tuple $\mathbf{ec}=(\textup{Synd}, \textup{Corr}, \mathcal{H}_{\EC}, l_{\EC})$ is such that $\Pr[K_A\neq K_B \wedge \Omega_{\PE} \wedge \Omega_{\EC}]_\rho\leq\epscorr$ if $l_{\EC}\geq \log\frac{1}{\epscorr}$, for some $\epscorr\in(0,1]$.

The proof is analogous to that of \cite{TL17}, but we present it here for completeness.
As with before, let $\Omega=\Omega_{\PE}\wedge\Omega_{\EC}$ denote the event that the protocol does not abort.
Indeed:
\begin{align}
    \Pr\big[K_A\neq K_B \wedge \Omega_{\PE} \wedge \Omega_{\EC}\big]_\rho
    &\leq \Pr\big[K_A\neq K_B \wedge \Omega_{\EC}\big]_\rho\\
    &= \Pr\big[H_{\PA}(S_A)\neq H_{\PA}(\hat{S}_B) \wedge H_{\EC}(S_A)=H_{\EC}(\hat{S}_B)\big]_\rho\\
    &\leq \Pr\big[S_A\neq \hat{S}_B \wedge H_{\EC}(S_A)=H_{\EC}(\hat{S}_B)\big]_\sigma\\
    &= \Pr\big[S_A\neq \hat{S}_B\big]_\sigma \cdot \Pr\big[H_{\EC}(S_A)=H_{\EC}(\hat{S}_B) \;|\; S_A\neq \hat{S}_B\big]_\sigma\\
    &\leq \Pr\big[H_{\EC}(S_A)=H_{\EC}(\hat{S}_B) \;|\; S_A\neq \hat{S}_B\big]_\sigma\\
    &\leq \frac{1}{2^{l_{\EC}}}.
\end{align}
In the above, the first inequality follows from the union bound, the second inequality follows from the fact that $H_{\PA}(S_A)\neq H_{\PA}(\hat{S}_B)$ implies $S_A\neq \hat{S}_B$ and the last inequality is the defining property of $2$-universal hashing.
Choosing $l_{\EC}\geq\log\frac{1}{\epscorr}$ completes the proof.

\subsection{Secrecy}\label{sec_secrecy_proof}
Finally, we prove \thref{th_secrecy} through the use of the Leftover Hashing Lemma and the generalised entropy accumulation theorem.
Given $n\in\mathbb{N}$ and $\epssec\in(0,1]$, we want to show that for any triple $(\mathcal{D}_A, \mathcal{D}_B,\omega^0_{R_AR_BQ_E})$, protocol $\mathcal{P}_{(\mathcal{D}_A, \mathcal{D}_B),(n,\mathbf{pe}, \mathbf{ec}, \mathbf{pa})}$ with $\mathbf{pe}=(\gamma, \oexp, \delta_\textup{tol})$, $\mathbf{ec}=(\textup{Synd}, \textup{Corr}, \mathcal{H}_{\EC}, l_{\EC})$ and $\mathbf{pa} = (\mathcal{H}_{\PA}, \lkey)$ is $\epssec$-secret if
\begin{align}
    l_\textup{key} \leq \;&n{g}(\omega_\textup{exp}-\delta_\textup{tol}) -d_1\sqrt{n}-d_0 - (2-\oexp+\delta_\textup{tol})n(\gamma+\kappa)   - 2\vartheta\left(\frac{\varepsilon_s}{4}\right)\\
        &-l_{\EC}-\lambda_{\EC} 
        - 2\log\frac{1}{\epssec-2\varepsilon_s},
\end{align}
where $\varepsilon_s\in\big(0,\frac{1}{2}\epssec\big)$ is a variable to be optimised, $\kappa\geq\frac{1}{\sqrt{n}}\sqrt{\ln{8}-\ln{\varepsilon_s}}$, $g:[0,1]\rightarrow\mathbb{R}$ is the affine function defined by
\begin{align}\label{affine_th}
    {g}(p):=\frac{p-\beta}{\ln2\cdot(1-\beta+\omega_3)} - \log\big(1-\beta+\omega_3\big),
\end{align}
for $\beta\in[\omega_3, \omega_2]$ to be optimised.
Further, the coefficients $d_1, d_0$ depend only on $\epssec$, $\varepsilon_s$ and $\beta$ and are respectively as in Eqs.~\eqref{d1_def_final} and~\eqref{d0_def_final}.
Lastly, $\vartheta(\delta) = -\log\big(1-\sqrt{1-\delta^2}\big)$.

To prove this, we fix parameters $\mathbf{pe}, \mathbf{ec}, \mathbf{pa}$ and consider a different protocol $\widetilde{\mathcal{P}}$ initialised with devices $\widetilde{\mathcal{D}}_A, \widetilde{\mathcal{D}}_B$ and state $\widetilde{\omega}^0_{R_AR_BQ_E}\in S(R_AR_BQ_E)$.
Protocol $\widetilde{\mathcal{P}}$ is presented in algorithm \ref{prot_qkd_virtual} and differs from the original protocol ${\mathcal{P}}$ in algorithm \ref{prot_qkd_actual} only in the measurement stage and in the parameter estimation stage (error correction and privacy amplification proceed identically in either protocol).
Additionally, protocol $\widetilde{\mathcal{P}}$ maintains the register $\bar{S}_B$, which is only used later in the proof to bound a smooth max-entropy term. 
Crucially, we remark that Eve never has access to $\bar{S}_B$.
\begin{figure}
    \centering
    \begin{algorithm}[H]\label{prot_qkd_virtual}
        \SetAlgoLined
        \caption{A virtual protocol for DIQKD from non-local games: $\widetilde{\mathcal{P}}$}
        \SetKwInOut{Input}{Input}
        \SetKwInOut{Parameters}{Parameters}
        \SetKwInOut{Output}{Output}
        \SetKwInOut{Init}{Initialization}
        \SetKwRepeat{Repeat}{repeat}{until}
        \Input{Devices $\mathcal{D}_{A},\mathcal{D}_{B}$}
        \Parameters{Number of rounds $n$\\
        PE parameters, $\mathbf{pe}=(\gamma, \oexp, \delta_\textup{tol})$\\
        EC parameters, $\mathbf{ec}=(\textup{Synd},\textup{Corr},\mathcal{H}_{\EC},l_{\EC})$\\
        PA parameters, $\mathbf{pa}=(\mathcal{H}_{\PA},\lkey)$}
        \textbf{1. Measurement}\;
        \For{$i=1$ to $n$}{
        Bob chooses and announces $T_i\in\{0,1\}$ with $\Pr[T_i=1]=\gamma$\;
        Alice chooses and announces $X_i\in\mathcal{X}$ and inputs it to $\mathcal{D}_A$ to receive $A_i\in\mathcal{A}$\;
        Bob chooses and announces $Y_i\in\mathcal{Y}$ and inputs it to $\mathcal{D}_B$ to receive $B_i\in\mathcal{B}$\;
        Alice sets $S_{A,i}=SK_A(X_i,Y_i,A_i)$\;
        Bob sets $S_{B,i}=SK_B(X_i,Y_i,A_i)$\;}
        \eIf{$T_i=0$}{
        Bob sets $C_i=\perp$ and $\bar{S}_{B,i}=0$\;
        }{
        Alice announces $A_i$\;
        Bob sets $C_i=V(X_i,Y_i,A_i,X_i)$ and $\bar{S}_{B,i}=S_{B,i}$\;
        }
        \textbf{2. Parameter Estimation}\;
        \eIf{$|\{i\;:\; C_i=0\}| \leq (1-\omega_\textup{exp}+\delta_\textup{tol})\cdot\gamma n$ }{
        Bob announces $F_{\PE}=\checkmark$\;
        }{
        Bob announces $F_{\PE}=\times$, both parties output $K_A=K_B=\ketbratext{\bot}{\bot}$\;
        }
        \textbf{3. Error Correction}\;
        \textbf{4. Privacy Amplification}\;
    \end{algorithm}
\end{figure}

It is straightforward to see that
\begin{align}
    {\mathcal{P}}_{(\mathcal{D}_A, \mathcal{D}_B),(n,\mathbf{pe}, \mathbf{ec}, \mathbf{pa})}(\omega^0_{R_AR_BQ_E}) = \Tr_{\bar{S}_B}\circ\widetilde{\mathcal{P}}_{(\widetilde{\mathcal{D}}_A, \widetilde{\mathcal{D}}_B),(n,\mathbf{pe}, \mathbf{ec}, \mathbf{pa})}(\widetilde{\omega}^0_{R_AR_BQ_E})
\end{align}
whenever $\widetilde{\omega}^0_{R_AR_BQ_E} = \omega^0_{R_AR_BQ_E}$ and the device's quantum instruments through an instance of $\mathcal{P}$ and $\widetilde{\mathcal{P}}$ are identical.
Further, since in protocol $\widetilde{\mathcal{P}}$ the eavesdropper has access to the additional information in registers $X_1^iY_1^iT_1^i$ prior to the $i$-th interaction with the devices, secrecy of protocol $\widetilde{\mathcal{P}}_{(\mathcal{D}_A, \mathcal{D}_B),(n,\mathbf{pe}, \mathbf{ec}, \mathbf{pa})}$ with parameters $(\mathbf{pe}, \mathbf{ec}, \mathbf{pa})$ implies secrecy of ${\mathcal{P}}_{(\mathcal{D}_A, \mathcal{D}_B),(n,\mathbf{pe}, \mathbf{ec}, \mathbf{pa})}$ with identical parameters.
Therefore, without loss of generality we prove secrecy of protocol $\widetilde{\mathcal{P}}$ instantiated with $(\mathcal{D}_A, \mathcal{D}_B,\omega^0_{R_AR_BQ_E})$.

We begin by recollecting some notation:
\begin{align}
    &\Omega_{\PE} := \{|\{i\;:\; C_i=0\}| \leq (1-\omega_\textup{exp}+\delta_\textup{tol})\cdot\gamma n\},\\
    &\Omega_{\EC} := \{H_{\EC}(S_A)=H_{\EC}(\hat{S}_B)\},
\end{align}
are the events that the parameter estimation test passes and that the error correction test passes, such that $\Omega=\Omega_{\PE}\wedge\Omega_{\EC}$ is the event that the protocol does not abort.
Let 
\begin{align}
    \widetilde{\mathcal{P}}_{(\mathcal{D}_A, \mathcal{D}_B),(n,\mathbf{pe}, \mathbf{ec}, \mathbf{pa})}
({\omega}^0_{R_AR_BQ_E})=\rho_{A_1^nB_1^nK_AK_BS_AS_B\hat{S}_B\bar{S}_B C_1^nE}
\end{align}
be the state output by the protocol, where Eve's total side information is 
\begin{align}
    E = X_1^nY_1^n T_1^n S_{A,T_\textup{test}}H_AZ F_{\PE}F_{\EC} Q_{E}.
\end{align}
Lastly, let $\sigma_{A_1^nB_1^nS_AS_B\hat{S}_B\bar{S}_BC_1^nE}$ be the state prior to the privacy amplification phase.

We start the proof by noting that if $\Pr[\Omega_{\EC}|{\Omega}_{\PE}]< \varepsilon_s^2$ or if $\Pr[{\Omega}_{\PE}]<\varepsilon_a$ for some $\varepsilon_a\in\big(0,\frac{1}{2}\epssec\big]$, $\varepsilon_s\in(0,\frac{1}{2}\epssec)$, we have that
\begin{align}
    \Pr[{\Omega}_{\PE}\wedge\Omega_{\EC}]_\rho\leq\min\Big\{\Pr[\Omega_{\EC}|{\Omega}_{\PE}]_\rho, \Pr[{\Omega}_{\PE}]_\rho\Big\}\leq \max\left\{\varepsilon_s^2, \varepsilon_a\right\}
\end{align}
which implies
\begin{align}
    \norm{\rho_{K_AE \wedge {\Omega}_{\PE}\wedge\Omega_{\EC}} - \tau_{K_A}\otimes\rho_{E\wedge {\Omega}_{\PE}\wedge\Omega_{\EC}}}_{\Tr} 
    &\leq \norm{\rho_{K_AE \wedge {\Omega}_{\PE}\wedge\Omega_{\EC}}}_{\Tr} +\norm{\tau_{K_A}\otimes\rho_{E\wedge {\Omega}_{\PE}\wedge\Omega_{\EC}}}_{\Tr}\\
    &= 2\Pr[{\Omega}_{\PE}\wedge\Omega_{\EC}]_\rho \label{sececy_trivial}
\end{align}
and so secrecy is given directly by $2\max\{\varepsilon_s^2, \varepsilon_a\}\leq 2\max\{\varepsilon_s, \varepsilon_a\}\leq\epssec$.

Therefore, from now on we assume $\Pr[\Omega_{\EC}|{\Omega}_{\PE}]_\rho\geq \varepsilon_s^2$ and $\Pr[{\Omega}_{\PE}]_\rho\geq\varepsilon_a$ such that
\begin{align}
    \norm{\rho_{K_AE \wedge {\Omega}_{\PE}\wedge\Omega_{\EC}} - \tau_{K_A}\otimes\rho_{E\wedge {\Omega}_{\PE}\wedge\Omega_{\EC}}}_{\Tr}
    &= \Pr[{\Omega}_{\PE}]_\rho \norm{\rho_{K_AE | {\Omega}_{\PE}\wedge\Omega_{\EC} } - \tau_{K_A}\otimes\rho_{E| {\Omega}_{\PE}\wedge\Omega_{\EC} }}_{\Tr}.\label{pre_lhl}
\end{align}
To bound the trace distance on the right, we use the following result from \cite[Proposition 9]{TL17}:
\begin{lemma}[Leftover Hashing]\thlabel{th_leftover_hashing}
    Let $\sigma\in S_{\leq}(S_1^nE)$ be classical-quantum and let $\mathcal{H}:=\big\{H:\{0,1\}^n\rightarrow \{0,1\}^{l}\big\}$ be a family of $2$-universal functions.
    Further let
    \begin{align}
        \rho_{K_1^lS_1^nE} := \mathcal{E}_{\mathcal{H}}(\sigma_{S_1^nE}\otimes \tau_H),
    \end{align}
    where $\tau_H$ is the maximally mixed state of dimension $|\mathcal{H}|$.
    The CPTP map $\mathcal{E}_{\mathcal{H}}$ chooses a hash function from register $H$, applies it to the classical string in registers $S_1^n$ and stores the hash in registers $K_1^l$.
    
    Then, for any $\varepsilon\in\Big[0,\sqrt{\Tr\big[\sigma_{S_1^nE}\big]}\Big]$:
    \begin{align}
        \norm{\rho_{K_1^lEH}-\tau_{K_1^l}\otimes\rho_{EH}}_{\Tr}\leq \frac{1}{2}2^{-\frac{1}{2}(H_{\min}^\varepsilon(S_1^n|E)_\sigma-l)}+2\varepsilon,
    \end{align}
    where $\tau_{K_1^l}$ is the maximally mixed state of dimension $2^l$.
\end{lemma}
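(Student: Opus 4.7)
The plan is to reduce to the non-smooth version of the Leftover Hashing Lemma via a standard smoothing argument, then establish the non-smooth version using a weighted Cauchy--Schwarz estimate combined with the 2-universality of $\mathcal{H}$.

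First I would carry out the smoothing. Let $\tilde\sigma_{S_1^n E} \in S_{\leq}(S_1^n E)$ be a classical-quantum state with $P(\sigma, \tilde\sigma) \leq \varepsilon$ that attains the smooth min-entropy, i.e.\ $H_{\min}(S_1^n|E)_{\tilde\sigma} = H^\varepsilon_{\min}(S_1^n|E)_\sigma$. (One can take $\tilde\sigma$ classical on $S_1^n$ by dephasing that register, which cannot increase the purified distance.) Set $\tilde\rho := \mathcal{E}_{\mathcal{H}}(\tilde\sigma \otimes \tau_H)$. The triangle inequality yields
\begin{align}
\norm{\rho_{K_1^l EH} - \tau_{K_1^l}\otimes \rho_{EH}}_{\Tr}
\leq \norm{\tilde\rho_{K_1^l EH} - \tau_{K_1^l}\otimes \tilde\rho_{EH}}_{\Tr} + \norm{\rho - \tilde\rho}_{\Tr} + \norm{\tilde\rho_{EH} - \rho_{EH}}_{\Tr}.
\end{align}
Since trace distance is upper-bounded by purified distance and purified distance is monotone under both $\mathcal{E}_{\mathcal{H}}$ and the partial trace over $K_1^l S_1^n$, the second and third terms are each at most $\varepsilon$, giving the $+2\varepsilon$ in the statement. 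It remains to prove the non-smooth Leftover Hashing Lemma for $\tilde\sigma$.

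For the non-smooth part I would let $\tau_E \in S_{\leq}(E)$ be an optimiser in the min-entropy of $\tilde\sigma$, so that $\tilde\sigma_{S_1^n E} \leq 2^{-H_{\min}(S_1^n|E)_{\tilde\sigma}} \mathbb{I}_{S_1^n}\otimes\tau_E$. The core estimate is the weighted Cauchy--Schwarz bound
\begin{align}
\norm{X_{K_1^l E}}_{\Tr} \leq \tfrac{1}{2}\sqrt{2^l\Tr[\tau_E]}\,\norm{(\mathbb{I}_{K_1^l}\otimes\tau_E^{-1/4})\,X_{K_1^l E}\,(\mathbb{I}_{K_1^l}\otimes\tau_E^{-1/4})}_2,
\end{align}
applied with $X = \tilde\rho^h_{K_1^l E} - \tau_{K_1^l}\otimes\tilde\rho_{E}$ for each fixed hash $h$, where $\tilde\rho^h$ denotes the state conditioned on $H=h$. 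Squaring, taking expectation over the uniform choice of $H$, and using the 2-universality bound $\Pr_H[H(s) = H(s')] \leq 2^{-l}$ for $s \neq s'$, the off-diagonal contributions from the expansion are cancelled by the $\tau_{K_1^l}\otimes\tilde\rho_E$ subtraction. The surviving diagonal contribution is $\Tr[(\tau_E^{-1/4}\tilde\sigma_{S_1^n E}\tau_E^{-1/4})^2]$, which via the min-entropy operator inequality is at most $2^{-H_{\min}(S_1^n|E)_{\tilde\sigma}}$. Jensen's inequality then moves the expectation over $H$ inside the square root, yielding $\tfrac{1}{2}\,2^{-\frac{1}{2}(H_{\min}(S_1^n|E)_{\tilde\sigma}-l)} = \tfrac{1}{2}\,2^{-\frac{1}{2}(H^\varepsilon_{\min}(S_1^n|E)_\sigma - l)}$ for the middle term.

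The main obstacle is the correct $\tau_E$-weighting in the Cauchy--Schwarz step: a naive unweighted bound $\norm{X}_1 \leq \sqrt{2^l\dim(E)}\,\norm{X}_2$ produces a dimension-dependent estimate rather than one in terms of the min-entropy. Inserting the min-entropy optimiser $\tau_E$ directly into the 2-norm is what converts the collision-probability bound from 2-universality into the desired min-entropy expression. Once this weighting is in place, the cancellation of off-diagonal terms via 2-universality, the bound on the diagonal by the min-entropy operator inequality, and the final application of Jensen's inequality are routine.
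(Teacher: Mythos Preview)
The paper does not give its own proof of this lemma: it is quoted verbatim as \cite[Proposition 9]{TL17} and then applied. Your sketch is the standard smoothing-plus-collision-entropy argument (originating with Renner and refined in the cited reference), and it is correct in outline; so there is nothing to compare against here beyond noting that your approach is precisely the one underlying the cited result.
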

Intuitively, the Leftover Hashing Lemma guarantees that the key $K_A$ output by Alice in protocol $\mathcal{P}$ (algorithm \ref{prot_qkd_actual}) is close to a perfect key provided the smooth min-entropy of the sifted key $S_A$ conditioned on Eve's side information $E$ is large enough. 
Continuing from Eq.~\eqref{pre_lhl}, the Leftover Hashing Lemma implies
\begin{align}
    \norm{\rho_{K_AE | {\Omega}_{\PE}\wedge\Omega_{\EC} } - \tau_{K_A}\otimes\rho_{E| {\Omega}_{\PE}\wedge\Omega_{\EC} }}_{\Tr} 
    \leq \frac{1}{2}\cdot 2^{-\frac{1}{2}\big(H_\text{min}^{\varepsilon_s}(S_A|E)_{\sigma_{| {\Omega}_{\PE}\wedge\Omega_{\EC}}}-\lkey\big)}+2{\varepsilon_s}.
\end{align}
The smooth min-entropy term in the exponent is further bounded by
\begin{align}
    H_\text{min}^{\varepsilon_s}(S_A|E)_{\sigma_{|{\Omega}_{\PE}\wedge\Omega_{\EC}}} 
    &\geq H_\text{min}^{\varepsilon_s}(S_A|E)_{\sigma_{|{\Omega}_{\PE}}}\\
    &\geq H_\text{min}^{\varepsilon_s}(S_A|X_1^nY_1^n T_1^n S_{A,T_\textup{test}} Q_{E})_{\sigma_{{|\Omega}_{\PE}}}-l_{\EC}-\lambda_{\EC}-2.\label{min_entropy_bound_1}
\end{align}
The first inequality follows from \cite[Lemma 10]{TL17} and for the second inequality we used the chain rule in \cite[Lemma 6.8]{Tomamichel16} to remove the error correction and flag registers $H_AZ F_{\PE}F_{\EC}$ from Eve's side information.

All that remains is to bound $H_\text{min}^{\varepsilon_s}(S_A|X_1^nY_1^n T_1^n S_{A,T_\textup{test}} Q_{E})_{\sigma_{{|\Omega}_{\PE}}}$.
However, there are two last technical hurdles we need to address.
Firstly, the conditioning event $|{\Omega}_{\PE}$ imposes a condition on the registers ${C}_i$ which are themselves computed from the registers $S_{A,i}S_{B,i}T_i$.
We need to ensure that all these registers are included in the smooth-min entropy term to bound. 
The conditioning side already includes $X_1^nY_1^nT_1^n S_{A,T_\textup{test}}$ so it remains to add the classical registers $S_{B,T_\textup{test}}=\{S_{B,i}\,:\,i\in T_\textup{test}\}$ for $T_\textup{test}=\{i:T_i=1\}$ which are not publicly announced in $\widetilde{\mathcal{P}}$. 
Secondly, the classical register $S_{A,T_\textup{test}}$ which is available to the eavesdropper needs to be removed from the conditioning side of the smooth min-entropy.
This is to ensure Alice's outcome $A_i$ on round $i$ does not depend on her previous announced outcomes, which would violate the no-signalling assumption of our protocol.

With this in mind, let $\bar{S}_B$ be such that $\bar{S}_{B,T_\textup{test}}=S_{B,T_\textup{test}}$ and all other elements are set deterministically to $0$. Letting $\bar{E}_n=X_1^nY_1^n T_1^n  Q_{E}$, we have that for $\varepsilon_s'\in\big[0,\frac{1}{2}\epssec\big)$ and $\varepsilon_s''\in\big[0,\frac{1}{4}\epssec\big)$ such that $\delta=\varepsilon_s-\varepsilon_s'-2\varepsilon_s''>0$:
\begin{align}
    H_{\min}^{\varepsilon_s}(S_A| S_{A,T_\textup{test}}\bar{E}_n)_{\sigma_{{|\Omega}_{\PE}}}
    &\geq H_{\min}^{\varepsilon_s}(S_A| S_{A,T_\textup{test}}\bar{S}_B\bar{E}_n)_{\sigma_{{|\Omega}_{\PE}}}\\
    &\geq H_{\min}^{\varepsilon_s'}(S_A\bar{S}_B|\bar{E}_n)_{\sigma_{{|\Omega}_{\PE}}}  - H_{\max}^{\varepsilon_s''}(S_{A,T_\textup{test}}\bar{S}_B|\bar{E}_n)_{\sigma_{{|\Omega}_{\PE}}} - 2\vartheta(\delta)\\
    &\geq H_{\min}^{\varepsilon_s'}(S_A\bar{S}_B|\bar{E}_n)_{\sigma_{{|\Omega}_{\PE}}}  - H_{\max}^{\varepsilon_s''}(S_{A,T_\textup{test}}\bar{S}_B|T_1^n)_{\sigma_{{|\Omega}_{\PE}}} - 2\vartheta(\delta)\label{min_max_entropy_chain_rule0},
\end{align}
where the first and last inequalities follow from the data processing inequality of the smooth min/max-entropy, and in the second inequality we used Eq.~(6.60) in \cite[Lemma 6.16]{Tomamichel16} as well as the fact that $H_{\min}(AB)\geq H_{\min}(A)$ when $A$ and $B$ are classical to remove the additional $S_{A,T_\textup{test}}$ on the left of the min-entropy term.
The function $\vartheta(\delta)=-\log\big(1-\sqrt{1-\delta^2}\big)$.

We now want to bound the smooth max-entropy term in the right hand side of Eq.~\eqref{min_max_entropy_chain_rule0}.
In general, the max-entropy $H_{\max}(X|B)_\rho$ of a classical-quantum state $\rho\in S(XB)$ is upper bounded by the logarithm of the support of $\rho_X$.
However, in our case it is not clear what the support of $S_{A,T_\textup{test}}\bar{S}_B$ on $\sigma_{|\Omega_{\PE}}$ is, since it depends on the values of the Bernoulli variables $T_i$.
Indeed, when conditioned on $\Omega_{\PE}$, we expect $S_{A,T_\textup{test}}$ and $\bar{S}_{B,T_\textup{test}}$ to equal in all places except with probability $(1-\oexp+\dtol)$.
This implies that the expected size (over the random variable $T_i$ with $\Pr[T_i=1]=\gamma$) of $\log\supp\big(\sigma_{S_{A,T_\textup{test}}\bar{S}_B|\Omega_{\PE}}\big)$ is $(2-\oexp+\delta_\textup{tol})n\gamma$.
Thus, for a small parameter $\kappa$, we want to replace the state $\sigma_{|\Omega_{\PE}}\in S_{\leq}(S_{A,T_\textup{test}}\bar{S}_BT_1^n)$ in which $\log\supp\big(\sigma_{S_{A,T_\textup{test}}\bar{S}_B|\Omega_{\PE}}\big)$ does not exceed $(2-\oexp+\delta_\textup{tol})n(\gamma+\kappa)$ except with small probability (certified by some tail bound), with a state $\tilde{\sigma}\in S_{\leq}(S_{A,T_\textup{test}}\bar{S}_BT_1^n)$ in which $\log\supp\big(\tilde{\sigma}_{S_{A,T_\textup{test}}\bar{S}_B}\big)$ never exceeds $(2-\oexp+\delta_\textup{tol})n(\gamma+\kappa)$.
This will subsequently allow us to bound the smooth max-entropy term with the logarithm of the support of $\tilde{\sigma}_{S_A\bar{S}_B}$.
Formally:

\begin{lemma}\thlabel{th_max_entropy_bound}
    Let $n, T_1^n, S_{A,T_\textup{test}}, \bar{S}_B, \Omega_{\PE}$ and $\sigma_{|\Omega_{\PE}}\in S(T_1^n S_{A,T_\textup{test}} \bar{S}_B)$ be as above.
    Further let $\varepsilon_s''\in(0,1]$.
    Then, for any $\kappa\geq\frac{1}{\sqrt{n}}\sqrt{-\ln{\varepsilon_s''}}$ it holds that
    \begin{align}
        H_{\max}^{\varepsilon_s''}(S_{A,T_\textup{test}}\bar{S}_B|T_1^n)_{\sigma_{{|\Omega}_{\PE}}} 
        \leq (2-\oexp+\delta_\textup{tol})n(\gamma+\kappa).
    \end{align}
\end{lemma}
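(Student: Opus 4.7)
My plan is to upper bound the smooth max-entropy by passing to a nearby sub-normalized classical state $\tilde{\sigma}$ whose non-smooth max-entropy can be read off from the support of the classical registers $(S_{A,T_\textup{test}}, \bar{S}_B)$ conditional on each value of $T_1^n$ in its support. First, I introduce the auxiliary event
\begin{align}
    \Omega_\kappa := \{ |T_\textup{test}| \leq (\gamma+\kappa)n \}
\end{align}
that the number of testing rounds does not exceed its expectation $\gamma n$ by more than $\kappa n$. In the virtual protocol $\widetilde{\mathcal{P}}$, the coins $T_i$ are i.i.d.\ Bernoulli$(\gamma)$ and independent of all other randomness, so Hoeffding's inequality yields $\Pr[\Omega_\kappa^c]_\sigma \leq \exp(-2\kappa^2 n)$. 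Since $\Omega_\kappa^c$ is increasing and $\Omega_{\PE}$ is decreasing in $T_1^n$ (conditional on the per-round loss indicators, which are independent of $T_1^n$), an FKG-type negative-correlation argument gives $\Pr[\Omega_\kappa^c \mid \Omega_{\PE}]_\sigma \leq \Pr[\Omega_\kappa^c]_\sigma$, and the hypothesis $\kappa \geq \tfrac{1}{\sqrt{n}}\sqrt{-\ln \varepsilon_s''}$ brings this below $\tfrac{1}{2}(\varepsilon_s'')^2$.

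Second, I define $\tilde{\sigma}$ as the sub-normalized state obtained from $\sigma_{|\Omega_{\PE}}$ by truncating to the event $\Omega_\kappa$. Because all relevant registers are classical and $\tilde{\sigma}$ is simply a zeroing-out of $\sigma_{|\Omega_{\PE}}$ on the event $\Omega_\kappa^c$, a direct computation of the generalized fidelity yields $F_{*}(\sigma_{|\Omega_{\PE}}, \tilde{\sigma}) = \Pr[\Omega_\kappa \mid \Omega_{\PE}]_\sigma^2$ and hence
\begin{align}
    P(\sigma_{|\Omega_{\PE}}, \tilde{\sigma}) \leq \sqrt{2\,\Pr[\Omega_\kappa^c \mid \Omega_{\PE}]_\sigma} \leq \varepsilon_s''.
\end{align}
By the definition of the smooth max-entropy it therefore suffices to bound $H_{\max}(S_{A,T_\textup{test}}\bar{S}_B \mid T_1^n)_{\tilde{\sigma}}$.

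Third, since $\tilde{\sigma}$ is classical on all registers of interest, the conditional max-entropy is bounded by the maximum over $T_1^n$ in its support of the logarithm of the support of $(S_{A,T_\textup{test}},\bar{S}_B)$ conditional on $T_1^n$. For such $T_1^n$, letting $m := |T_\textup{test}| \leq (\gamma+\kappa)n$ and $k_0 := (1-\oexp+\dtol)\gamma n$, the constraint imposed by $\Omega_{\PE}$ is that the number of indices $i \in T_\textup{test}$ with $S_{A,i}\neq \bar{S}_{B,i}$ is at most $k_0$. Hence the support can be parameterized by $S_{A,T_\textup{test}} \in \{0,1\}^m$ together with a disagreement subset $D \subseteq T_\textup{test}$ of size at most $k_0$, giving a size bound of $2^m \cdot \binom{m}{\leq k_0}$ and, after using $k_0 \leq (1-\oexp+\dtol)(\gamma+\kappa)n$ to simplify, the target $(2-\oexp+\dtol)n(\gamma+\kappa)$.

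The main technical obstacle is producing the exact constant in the third step: the crude ``two bits per testing round'' bound only gives $2(\gamma+\kappa)n$, and one must leverage the $\Omega_{\PE}$ disagreement constraint in the volume $\binom{m}{\leq k_0}$ carefully to get the advertised $(2-\oexp+\dtol)(\gamma+\kappa)n$. A secondary subtlety is the correlation argument in the second step; it relies on the fact that in the virtual protocol $T_1^n$ is genuinely independent of the measurement outcomes and can be handled separately from the rest of the state.
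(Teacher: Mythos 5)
Your overall strategy is the same as the paper's: truncate $\sigma_{|\Omega_{\PE}}$ to the event that at most $n(\gamma+\kappa)$ rounds are testing rounds, control the complementary event by Hoeffding, convert that into a purified-distance bound so the smooth max-entropy is dominated by the max-entropy of the truncated state, and finish by bounding the latter via the support of $(S_{A,T_\textup{test}},\bar{S}_B)$ given $T_1^n$. Two execution points differ. First, you use the bare truncation, for which $P(\sigma_{|\Omega_{\PE}},\tilde{\sigma})\leq\sqrt{2\varepsilon}$ with $\varepsilon=\Pr[\Omega_\kappa^c\mid\Omega_{\PE}]$; with the stated $\kappa$, Hoeffding only gives $\varepsilon\leq(\varepsilon_s'')^2$, so you obtain $P\leq\sqrt{2}\,\varepsilon_s''$ rather than $\varepsilon_s''$ --- the extra factor $\tfrac{1}{2}$ you insert into the tail bound is not justified by the hypothesis on $\kappa$. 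The paper avoids this by invoking \cite[Lemma 7]{TL17}, which uses a rescaled truncation achieving $P\leq\sqrt{\varepsilon}$. (Your FKG digression addresses a real subtlety, namely applying the tail bound under conditioning on $\Omega_{\PE}$, but its premise that the loss indicators are independent of $T_1^n$ need not hold for adversarial devices that see the announced $T_i$; the paper simply applies Hoeffding to the conditioned state without comment, so you are not worse off here.)

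The genuine gap is the one you flag yourself: the support count. Writing $m=|T_\textup{test}|\leq n(\gamma+\kappa)$ and $k_0=(1-\oexp+\dtol)\gamma n$, the support of $(S_{A,T_\textup{test}},\bar{S}_{B,T_\textup{test}})$ under $\Omega_{\PE}$ has size $2^m\sum_{j\leq k_0}\binom{m}{j}$, and $\log\sum_{j\leq k_0}\binom{m}{j}\approx m\,h(k_0/m)$, which strictly exceeds $k_0$ whenever $0<k_0<m/2$ (since $h(q)>q$ on that range). So this route gives $m\big(1+h(k_0/m)\big)$, not the advertised $m+k_0\leq(2-\oexp+\dtol)n(\gamma+\kappa)$, and no amount of careful handling of the volume $\binom{m}{\leq k_0}$ will close that difference. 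You should know that the paper's own proof does exactly what you suspected cannot be done: it bounds the contribution of the disagreement pattern by the number of allowed disagreements, i.e.\ it implicitly replaces $\log\binom{m}{\leq k_0}$ by $k_0$. Your proposal therefore reproduces the paper's argument up to and including its weakest step, and your reluctance to assert the stated constant is warranted: obtaining it requires either a different argument or relaxing the bound to include the binary-entropy term.
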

\begin{proof}
    For $\kappa$ as in the statement of the lemma, define the event 
    \begin{align}
        \Omega_{T,\kappa}&:=\big\{|\{i\;:\;T_i=1\}|\geq n(\gamma+\kappa)\big\}.
    \end{align}
    Hoeffding's inequality implies
    \begin{align}
        \varepsilon :=\Pr[\Omega^c_{T,\kappa}]_{\sigma_{|\Omega_{\PE}}} \leq \textrm{e}^{-2n\kappa^2}.
    \end{align}
    Since $\kappa\geq\frac{1}{\sqrt{n}}\sqrt{-\ln{\varepsilon_s''}}$ and $\Tr[\sigma_{|\Omega_{\PE}}]=1$ we have that
    \begin{align}
        \textrm{e}^{-2n\kappa^2} \leq (\varepsilon_s'')^2 \leq \Tr[\sigma_{|\Omega_{\PE}}].
    \end{align}
    Hence, Lemma 7 in \cite{TL17} implies that the state $\tilde{\sigma}\in S_{\leq}(S_{A,T_\textup{test}} \bar{S}_B T_1^n)$ defined by 
    \begin{align}
        \tilde{\sigma} := \frac{\sin^2 \phi}{1 - \varepsilon}\sigma_{|\Omega_{\PE}\wedge\Omega^c_{T,\kappa}}
    \end{align}
    for some normalization factor $\phi\in[0,\frac{\pi}{2}]$, satisfies 
    \begin{align}
        \Pr\big[\Omega_{T,\kappa}\big]_{\tilde{\sigma}} = 0 \qquad\textup{and}\qquad P(\sigma_{|\Omega_{\PE}}, \tilde{\sigma}) \leq \sqrt{\varepsilon} \leq \textrm{e}^{-n\kappa^2}.
    \end{align}
    Using the definition of the smooth max-entropy together with the fact that $P(\sigma_{|\Omega_{\PE}}, \tilde{\sigma}) \leq \textrm{e}^{-n\kappa^2} \leq \varepsilon_s''$ we have that
    \begin{align}
        H_{\max}^{\varepsilon_s''}(S_{A,T_\textup{test}}\bar{S}_B|T_1^n)_{\sigma_{{|\Omega}_{\PE}}} 
        &\leq H_{\max}(S_{A,T_\textup{test}}\bar{S}_B|T_1^n)_{\tilde{\sigma}}.
    \end{align}
    Finally, re-writing $\tilde{\sigma}=\sum_{t_1^n\in\Omega^c_{T,\kappa}}\Pr[T_1^n=t_1^n]\ketbratext{t_1^n}{t_1^n}\otimes\tilde{\sigma}_{|t_1^n}$ and using the expression for the max-entropy conditioned on a classical variable given in \cite[Eq.~(6.26)]{Tomamichel16}, the following holds
    \begin{align}
        H_{\max}(S_{A,T_\textup{test}}\bar{S}_B|T_1^n)_{\tilde{\sigma}}
        &= \log\left[\sum_{t_1^n\in\Omega^c_{T,\kappa}}\Pr[T_1^n=t_1^n]\cdot2^{H_{\max}(S_{A,T_\textup{test}}\bar{S}_B)_{\tilde{\sigma}_{|t_1^n}}}\right] \\
        &\leq (2-\oexp+\delta_\textup{tol})n(\gamma+\kappa),
    \end{align}
    where in the last inequality we bounded the max entropy with an upper bound on $\log \supp (\tilde{\sigma}_{S_{A,T_\textup{test}} \bar{S}_B|t_1^n})$, noting that $\log \supp (\tilde{\sigma}_{S_{A,T_\textup{test}} |t_1^n}) \leq n(\gamma+\kappa)$ and that $\Omega_{\PE}$ implies that ${S}_{A,T_\textup{test}}$ does not match $\bar{S}_{B,T_\textup{test}}$ in at most $(1-\oexp+\dtol)n\gamma\leq(1-\oexp+\dtol)n(\gamma+\kappa)$ places.
\end{proof}
Applying the bound from \thref{th_max_entropy_bound} to Eq.~\eqref{min_max_entropy_chain_rule0} gives
\begin{align}\label{min_max_entropy_chain_rule}
    H_{\min}^{\varepsilon_s}(S_A| S_{A,T_\textup{test}}\bar{E})_{\sigma_{{|\Omega}_{\PE}}}
    \geq H_{\min}^{\varepsilon_s'}(S_A\bar{S}_B|\bar{E})_{\sigma_{{|\Omega}_{\PE}}} - (2-\oexp+\delta_\textup{tol})n(\gamma+\kappa) - 2\vartheta(\delta),
\end{align}
where $\delta=\varepsilon_s-\varepsilon_s'-2\varepsilon_s''>0$.

All that remains is to bound the smooth min-entropy $H_{\min}^{\varepsilon_s'}(S_A\bar{S}_B|\bar{E}_n)_{\sigma_{|{\Omega}_{\PE}}}$, where $\bar{E}_n=X_1^nY_1^n T_1^n Q_{E}$ contains only Eve's quantum information and the classical registers contained in the state at termination of protocol $\widetilde{\mathcal{P}}$.
This is done via the Generalised Entropy Accumulation Theorem (GEAT) \cite{MFSR22, MR22}, which roughly speaking bounds the smooth min-entropy with a sum over the von Neumann entropies of each round of the virtual protocol $\widetilde{\mathcal{P}}$, which are themselves bounded using the min-tradeoff functions introduced in Section \ref{sec_min_tradeoff}.

We split the application of the GEAT over the remaining subsections

\subsubsection{GEAT channels}\label{sec_app_geat}

Consider the sequence of CPTP maps $\{\mathcal{N}_i:R\bar{E}_{i-1}\rightarrow S_{A,i}\bar{S}_{B,i}C_i{R}\bar{E}_i\}_{i=1}^n$ where $S_{A,i}$ is Alice's sifted key bit in the $i$-th round, $R=R_AR_B$ denotes the internal register of Alice's and Bob's devices and $\bar{E}_i = X_1^iY_1^i T_1^i Q_{E}$ is Eve's side information in round $i$. 
The maps $\mathcal{N}_i$ receives as input the state $\omega^{i-1}_{R\bar{E}_{i-1}}$ and proceed as follows:
\begin{enumerate}
    \item Generate $X_i,Y_i,T_i$ as in protocol $\widetilde{\mathcal{P}}$. 
    The resultant state is $\omega^{i-1}_{RX_iY_iT_i\bar{E}_{i-1}}$.
    \item Apply map 
    \begin{align}
        \omega^{i-1}_{RX_iY_i}\mapsto \widetilde{\omega}^i_{RX_iY_iA_iB_i}:= \sum_{x\in\mathcal{X}}\sum_{y\in\mathcal{Y}} (P_i^{xy}\otimes\mathcal{M}_A^x\otimes\mathcal{M}_B^y)(\omega^{i-1}_{RX_iY_i}),
    \end{align}
    where 
    \begin{align}
        P_i^{xy}(\cdot) &= \ketbra{xy}_{X_iY_i} (\cdot) \ketbra{xy}_{X_iY_i},\\
        \mathcal{M}_A^x(\cdot) &= \sum_{a\in\mathcal{A}}\ketbra{a}_{A_i}\otimes \mathcal{M}_A^{a|x}(\cdot),\\
        \mathcal{M}_B^y(\cdot) &= \sum_{b\in\mathcal{B}}\ketbra{b}_{B_i}\otimes \mathcal{M}_B^{b|y}(\cdot),
    \end{align}
    for $\mathcal{M}_A^{a|x}\in\textup{CP}(R_A,R_AA_i)$ and $\mathcal{M}_B^{b|y}\in\textup{CP}(R_B,R_BB_i)$ trace non increasing maps satisfying $\sum_a \Tr\big[\mathcal{M}_A^{a|x}(\omega_{R_A})\big]=1$ and $\sum_b \Tr\big[\mathcal{M}_B^{b|y}(\omega_{R_B})\big]=1$ for all $\omega_{R_A}\in S(R_A)$, $\omega_{R_B}\in S(R_B)$ and for all $x\in\mathcal{X}$, $y\in\mathcal{Y}$.
    \item Apply map
    \begin{align}\label{infrequent_sampling_setup}
        \widetilde{\omega}^i_{X_iY_iT_iA_iB_i}\mapsto & \sum_{x,y,a,b} \Big(Q_i^{xyab}\otimes [SK_A(x,y,a)]_{S_{A,i}} \otimes 
        \big(P_i^{t=0}\otimes[0]_{\bar{S}_{B,i}}\otimes[\perp]_{C_i} + \\
        &P_i^{t=1}\otimes[SK_B(x,y,b)]_{\bar{S}_{B,i}}\otimes[V(x,y,a,b)]_{C_i}\big)\Big)(\widetilde{\omega}^i_{X_iY_iT_iA_iB_i}),
    \end{align}
    whose output we label $\omega^i_{X_iY_iS_{A,i}\bar{S}_{B,i}C_i}$.
    In the above, we use $[\cdot]$ as shorthand for $\ketbra{\cdot}$, recall $V(x,y,a,b)$ is the predicate of the game $\mathcal{G}$, and
    \begin{align}
        Q_i^{xyab}(\cdot) &= \ketbra{xy}_{X_iY_i} \otimes \Tr[\ketbra{xyab}_{X_iY_iA_iB_i} (\cdot)],\\
        P_i^{t}(\cdot) &= \ketbra{t}_{T_i} (\cdot) \ketbra{t}_{T_i}.
    \end{align}
    \item Apply eavesdropper's attack
    \begin{align}
        \mathcal{E}:\omega^i_{X_iY_iT_i\bar{E}_{i-1}}\mapsto \omega^i_{\bar{E}_{i}}.
    \end{align}
    Without loss of generality, this attack simply passes all of Eve's available information onto the next round of the protocol.
    Otherwise, the data processing inequality for the smooth min-entropy guarantees that Eve's available information decreases. 
\end{enumerate}

We notice that the sum over operations $P_i^{t}$ for $t\in\{0,1\}$ on the right hand side of Eq.~\eqref{infrequent_sampling_setup}
induces the following \emph{infrequent sampling channel} form on each $\mathcal{N}_i$
\begin{align}\label{infrequent_sampling_channel}
    \mathcal{N}_i = \gamma \mathcal{N}_i^{\textup{test}} + (1-\gamma)\mathcal{N}_i^{\textup{gen}}.
\end{align}
The channel $\mathcal{N}_i^\textup{gen}\in\CPTP(R\bar{E}_{i-1}, S_{A,i}\bar{S}_{B,i}C_iR\bar{E}_i)$ sets register $C_i$ to $\perp$ and register $\bar{S}_{B,i}$ to $0$, whereas the channel $\mathcal{N}_i^\textup{test}\in\CPTP(R\bar{E}_{i-1}, S_{A,i}\bar{S}_{B,i}C_iR\bar{E}_i)$ sets $\bar{S}_{B,i}$ to Bob's sifted key bit and $C_i$ to the predicate of the game $\mathcal{G}$.

The sequence of channels $\{\mathcal{N}_i\in\CPTP(R\bar{E}_{i-1}, S_{A,i}\bar{S}_{B,i}C_i{R}\bar{E}_i)\}_{i=1}^n$ is no-signalling in the sense that Eve's operations are unable to act on the registers $R_AR_B$ in the devices of Alice and Bob.
Formally, the channel $\mathcal{R}_i\in\CPTP(\bar{E}_{i-1},\bar{E}_i)$ which samples $X_i,Y_i$ and $T_i$ and subsequently updates $\bar{E}_{i-1}$ as in steps $1$ and $4$ of the map $\mathcal{N}_i$ above, trivially satisfies
\begin{align}\label{no-signalling}
    \Tr_{S_{A,i}\bar{S}_{B, I} C_iR}\circ \mathcal{N}_i = \mathcal{R}_i\circ\Tr_R.
\end{align}

\subsubsection{Min-tradeoff function and GEAT}\label{sec_min_tradeoff}
Now, let $g:\mathcal{P}(\{0,1\})\rightarrow\mathbb{R}$ be an arbitrary affine function satisfying
\begin{align}\label{affine_g}
    g(q) \leq \min_{\omega_{R\bar{E}_{i-1}F}} \big\{H(S_{A,i}|\bar{E}_{i-1}F)_{\mathcal{N}_i(\omega_{R\bar{E}_{i-1}F})}\,:\, \Tr_{S_{A,i}R\bar{E}_{i}F}[\mathcal{N}_i^\textup{test}(\omega_{R\bar{E}_{i-1}F})]=q_0\ketbra{0}+q_1\ketbra{1}\big\},
\end{align}
for all $i\in\{1,...,n\}$ and all $[q_0,q_1]\in\mathcal{P}(\{0,1\})$, as well as $\Max(g)=g(\delta_1)$\footnote{We remark that the following proof is independent of the choice of min-tradeoff function, which in particular may be optimized when considering specific non-local game primitives. 
To retrieve the statement in \thref{th_secrecy} we use the affine function in Eq.~\eqref{affine} which works for arbitrary non-local games for which $\othree<\otwo$, but to obtain tighter key rates for the MSG in Section \ref{sec_msg_keyrates} we use the affine functions constructed in Sections \ref{sec_msg_min_ent} and \ref{sec_msg_vn}.}.
Proceeding as in \cite{DF19, MFSR22, MR22}, \thref{th_infrequent_sampling} implies that the function $f:\mathcal{P}(\{0,1,\perp\})\rightarrow\mathbb{R}$ defined by 
\begin{align}\label{tradeoff_f}
    f(\delta_0)=g(\delta_1)+\frac{1}{\gamma}(g(\delta_0)-g(\delta_1)),\qquad f(\delta_1)=f(\delta_\bot)=g(\delta_1)
\end{align}
is an affine min-tradeoff function for the infrequent sampling channels\footnote{This follows directly from the observation that if ${S}_{A,i}$ and $\bar{S}_{B,i}$ are classical, then $H({S}_{A,i}\bar{S}_{B,i}|\bar{E}_i)\geq H({S}_{A,i}|\bar{E}_i).$} $\{\mathcal{N}_i=\gamma\mathcal{N}_i^\textup{test} + (1-\gamma)\mathcal{N}_i^\textup{gen}\}_i$.
Using the affinity of $f$, for any $c_1^n\in{\Omega}_{\PE}$:
\begin{align}
    f(\textup{Freq}_{{c}_1^n}(\cdot))
    &=\textup{Freq}_{{c}_1^n}(0)f(\delta_0) + (1-\textup{Freq}_{{c}_1^n}(0))f(\delta_1)\\
    &=\frac{\textup{Freq}_{{c}_1^n}(0)}{\gamma}g(\delta_0)+\left(1-\frac{\textup{Freq}_{{c}_1^n}(0)}{\gamma}\right)g(\delta_1),
\end{align}
where in the last equality we use the definition of $f$ in Eq.~\eqref{affine_f}.
By definition of ${\Omega}_{\PE}$, we must have $\textup{Freq}_{{c}_1^n}(0)\leq (1-\omega_\textup{exp}+\delta_\textup{tol})\cdot\gamma$ which implies $1-\frac{1}{\gamma}\cdot\textup{Freq}_{{c}_1^n}(0)\geq\omega_\textup{exp}-\delta_\textup{tol}$.
The affinity of $f$ then gives
\begin{align}
    f(\textup{Freq}_{{c}_1^n}(\cdot))
    &\geq \frac{\textup{Freq}_{{c}_1^n}(0)}{\gamma}g(\delta_0)+(\omega_\textup{exp}-\delta_\textup{tol})g(\delta_1)\\
    &= g\Bigg(\frac{\textup{Freq}_{{c}_1^n}(0)}{\gamma}\cdot\delta_0+(\omega_\textup{exp}-\delta_\textup{tol})\cdot\delta_1\Bigg)\\
    &= g(\omega_\textup{exp}-\delta_\textup{tol})\label{min_tradeoff_bound},
\end{align}
where in the last equality we slightly abuse notation to turn the argument to $g$ from a distribution $[q_0,q_1]\in\mathcal{P}(\{0,1\})$ into the winning probability $p$, which equals one when $[q_0,q_1]=\delta_1$ and zero when $[q_0,q_1]=\delta_0$.

The generalised entropy accumulation theorem \cite[Corollary 4.6]{MFSR22} then implies:
\begin{lemma}\thlabel{th_GEAT_testing}
    Let $\omega^0_{RQ_E}\in S(RQ_E)$ be the input state to the sequence of channels 
    \begin{align}
        \{\mathcal{N}_i\in\CPTP(R\bar{E}_{i-1}, S_{A,i}\bar{S}_{B,i}C_i{R}\bar{E}_i)\}_{i=1}^n
    \end{align}
    defined in Section \ref{sec_app_geat}.
    Define the event $\Omega_{\PE}=\{|\{i:C_i=0\}|\leq(1-\oexp+\dtol)\cdot\gamma n\}$ and let $f:\mathcal{P}(\{0,1,\perp\})\rightarrow\mathbb{R}$ and $g:\mathcal{P}(\{0,1\})\rightarrow\mathbb{R}$ be the functions respectively defined in Eq.~\eqref{tradeoff_f} and Eq.~\eqref{affine_g}.
    
    Then for any $\varepsilon_s'\in\big[0,\frac{1}{2}\epssec\big)$:
    \begin{align}
        H^{\varepsilon_s'}_{\min}(S_A\bar{S}_B|\bar{E}_n)_{\sigma} \geq ng(\oexp-\delta_\textup{tol}) -d_1\sqrt{n} - d_0,
    \end{align}
    where $\sigma_{S_A\bar{S}_BC_1^nR\bar{E}_n}=\mathcal{N}_n\circ\cdots\circ\mathcal{N}_1(\omega^0_{RQ_E})$ and the constants $d_1$ and $d_0$ are 
    \begin{align}
        d_1 
        &:= \sqrt{\frac{2\ln (2)V^2}{\eta}\left( \vartheta({\varepsilon_s'}) + (2-\eta)\log(\frac{1}{\Pr[\Omega_{\PE}]_{\sigma}})\right)},\\
        d_0
        &:= \frac{(2-\eta)\eta^2\log(\frac{1}{\Pr[\Omega_{\PE}]_{\sigma}})+\eta^2 \vartheta({\varepsilon_s'})}{3(\ln 2)^2 V^2 (2\eta-1)^3} 2^{\frac{1-\eta}{\eta}(2 +{\Max}(f)-{\Min}_\Sigma(f))} \ln^3\Big(2^{2+ {\Max}(f)-{\Min}_\Sigma(f) } + \textrm{e}^2\Big)
    \end{align}
    with 
    \begin{align}
        \eta = \frac{2\ln2}{1+2\ln2},\qquad \vartheta(\varepsilon_s')=-\log(1-\sqrt{1-(\varepsilon_s')^2}),\qquad V=\log(17)+\sqrt{2+\textup{Var}(f)},
    \end{align}
    and where we recall $\Pr[\Omega_{\PE}]_{\sigma}\geq \varepsilon_a\in\big(0,\frac{1}{2}\epssec\big]$.
\end{lemma}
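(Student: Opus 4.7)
The plan is to invoke the generalised entropy accumulation theorem (GEAT) of \cite{MFSR22}, specifically Corollary~4.6 therein, on the sequence of channels $\{\mathcal{N}_i\}_{i=1}^n$ defined in Section~\ref{sec_app_geat} with the affine min-tradeoff function $f$ defined in Eq.~\eqref{tradeoff_f}. The overall structure of the argument is to (i) verify the hypotheses of the GEAT, (ii) read off a lower bound on $H^{\varepsilon_s'}_{\min}(S_A\bar{S}_B|\bar{E}_n)$ in terms of $\min_{c_1^n \in \Omega_{\PE}} f(\mathrm{Freq}_{c_1^n})$ and the statistical parameters of $f$, and (iii) substitute the bound $f(\mathrm{Freq}_{c_1^n}) \geq g(\oexp - \dtol)$ valid on $\Omega_{\PE}$ and already established in Eq.~\eqref{min_tradeoff_bound}.

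First I would check the GEAT hypotheses. The no-signalling condition required by Corollary~4.6 of \cite{MFSR22} is precisely Eq.~\eqref{no-signalling}: tracing out the classical outputs and the device register $R$ from the output of $\mathcal{N}_i$ factors through a channel acting on $\bar{E}_{i-1}$ alone, reflecting the physical assumption that Eve has no handle on the internal memories $R_A, R_B$ of the honest devices. That $f$ is affine and a valid min-tradeoff function for the sequence $\{\mathcal{N}_i\}$ follows directly from Lemma~\ref{th_infrequent_sampling} applied to the affine function $g$ satisfying Eq.~\eqref{affine_g}, which also supplies the statistical parameters $\Max(f)$, $\Min_\Sigma(f)$, and $\Var(f)$ in terms of $g$.

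Second, GEAT yields a bound of the form
\begin{align}
    H^{\varepsilon_s'}_{\min}(S_A\bar{S}_B|\bar{E}_n)_{\sigma_{|\Omega_{\PE}}}
    \geq n\,\min_{c_1^n \in \Omega_{\PE}} f(\mathrm{Freq}_{c_1^n}) - d_1\sqrt{n} - d_0,
\end{align}
where the constants $d_1, d_0$ are computed as in the statement from $\Max(f)$, $\Min_\Sigma(f)$, $\Var(f)$, $\vartheta(\varepsilon_s')$, and $\log(1/\Pr[\Omega_{\PE}])$. Substituting Eq.~\eqref{min_tradeoff_bound} into this bound gives exactly the inequality claimed by the lemma. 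Finally, the explicit formulas for $d_1, d_0$ in the statement are obtained by reading off the expressions from \cite[Corollary~4.6]{MFSR22} verbatim, with the choice $\eta = 2\ln 2/(1+2\ln 2)$ which optimises the GEAT parameters, and with $V = \log(17)+\sqrt{2+\Var(f)}$ as the standard variance-type quantity.

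The only real subtleties are bookkeeping. One is verifying the no-signalling condition of the GEAT carefully, which is immediate in our model because Eve's attack is applied only on her side information $\bar{E}_i$ and never on Alice's and Bob's device registers. The other is the conditioning on $\Omega_{\PE}$: the conditioned smooth min-entropy is what appears naturally in the Leftover Hashing step above, and its lower bound picks up the $\log(1/\Pr[\Omega_{\PE}])$ penalty, which is harmless because in the non-trivial regime we have $\Pr[\Omega_{\PE}] \geq \varepsilon_a$ (otherwise secrecy follows trivially as in Eq.~\eqref{sececy_trivial}). No new estimates are needed; everything reduces to a direct invocation of the GEAT machinery with the objects already assembled in Sections~\ref{sec_app_geat} and~\ref{sec_min_tradeoff}.
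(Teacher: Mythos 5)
Your proposal is correct and follows essentially the same route as the paper: the lemma is obtained by a direct invocation of \cite[Corollary 4.6]{MFSR22} on the no-signalling channel sequence of Section \ref{sec_app_geat} with the affine min-tradeoff function $f$ supplied by \thref{th_infrequent_sampling}, followed by the substitution $f(\textup{Freq}_{c_1^n})\geq g(\oexp-\dtol)$ on $\Omega_{\PE}$ from Eq.~\eqref{min_tradeoff_bound}. Your observation that the entropy bound really concerns the conditioned state $\sigma_{|\Omega_{\PE}}$ (with the $\log(1/\Pr[\Omega_{\PE}])$ penalty absorbed into $d_1,d_0$) matches how the lemma is actually applied in Section \ref{sec_final_bounds}.
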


\subsubsection{Guessing probability per testing round}\label{sec_guessing_probability}
In this section, we construct a simple affine function that bounds the amount of min-entropy generated during each measurement round with a sufficiently high winning probability and serves as the function $g:\mathcal{P}(\{0,1\})\rightarrow\mathbb{R}$ in \thref{th_GEAT_testing}.
The method presented here applies to all suitable non-local games as in the framework of Section \ref{sec_NLG} that satisfy the monogamy-of-entanglement property $\othree<\otwo$, where $\otwo$ is the quantum value of $\mathcal{G}_2$ and $\othree$ is the quantum value of the tripartite extension.

Fix a strategy for the $2$-player non-local game 
\begin{align}
    \mathcal{S}_2 = \Big\{\nu_{Q_AQ_B}, \big\{\{P^x_a\}_{a\in \mathcal{A}}\big\}_x, \big\{\{Q^y_b\}_{b\in \mathcal{B}}\big\}_y \Big\}
\end{align}
which allows Alice and Bob to win with probability at least $\oexp$.
That is, letting $\WAB$ be the event that Alice and Bob win an instance of the non-local game, we have $\Pr[\WAB]_{\mathcal{S}_2}\geq \oexp$ for some $\oexp\in[\omega_3,\omega_2]$, where the subscript ${\mathcal{S}_2}$ denotes that the probability is over the output distribution using strategy ${\mathcal{S}_2}$.

Add the third player Eve and fix a strategy 
\begin{align}
    \mathcal{S}_3 = \Big\{\nu_{Q_AQ_BE}, \big\{\{F^{xy}_c\}_{c\in\mathcal{C}}\big\}_{(x,y)}, \big\{\{P^x_a\}_{a\in \mathcal{A}}\big\}_x, \big\{\{Q^y_b\}_{b\in \mathcal{B}}\big\}_y \Big\},
\end{align}
which extends $\mathcal{S}_2$.
Namely, Alice and Bob's measurements are the same in $\mathcal{S}_2$ and $\mathcal{S}_3$, and $\nu_{Q_AQ_BE}\in\mathcal{S}_3$ is a quantum extension of $\nu_{Q_AQ_B}\in\mathcal{S}_2$.
Let $\GE$ be the event that Eve guesses the bit $S_A$ sifted by Alice from her output and both inputs to Alice and Bob in one instance of the non-local game. 
Since strategies of the form ${\mathcal{S}_3}$ are a subset of all possible tripartite strategies, we have that $\Pr[W_\textup{AB}\wedge \GE]_{\mathcal{S}_3} \leq\omega_3$.
As before, the subscript ${\mathcal{S}_3}$ indicates that the probability is over the output distribution using strategy ${\mathcal{S}_3}$.

Now, consider the probability of event $G^c_\textup{E}$ conditioned on $W_\textup{AB}$:
\begin{align}\label{GEcompliment}
    \Pr[\GEc|\WAB]_{\mathcal{S}_3}=\frac{\Pr[\GEc\wedge\WAB]_{\mathcal{S}_3}}{\Pr[\WAB]_{\mathcal{S}_2}}\leq\frac{\Pr[\GEc]_{\mathcal{S}_3}}{\oexp},
\end{align}
where we have noted that $\Pr[\WAB]_{\mathcal{S}_3}=\Pr[\WAB]_{\mathcal{S}_2}$.
It follows that
\begin{align}
    \oexp(1-\Pr[\GE |\WAB]_{\mathcal{S}_3})\leq 1-\Pr[\GE]_{\mathcal{S}_3},
\end{align}
and after rearranging we get
\begin{align}
    \Pr[\GE]_{\mathcal{S}_3} \leq 1- \oexp + \Pr[\GE\wedge\WAB]_{\mathcal{S}_3} \leq 1-\oexp + \omega_3,
\end{align}
which readily implies $-\log \Pr[\GE]_{\mathcal{S}_3}\geq -\log(1-\oexp+\omega_3)$.

We wish to find an affine lower bound to $-\log(1-\oexp+\omega_3)$.
Since this is convex, a tangent line at any point $\beta\in[\omega_3,\omega_2]$ suffices. 
This yields the following affine lower bound $g:[\othree,\otwo]\rightarrow\mathbb{R}$ to $-\log\Pr[\GE]_{\mathcal{S}_3}$:
\begin{align}\label{affine}
    {g}(\oexp):=\frac{\oexp-\beta}{\ln2\cdot(1-\beta+\omega_3)} - \log\big(1-\beta+\omega_3\big).
\end{align}
Then, the function $g(p)$ linearly extended to the domain $p\in[0,1]$ is such that $-\log\Pr[\GE]_{\mathcal{S}_3}\geq g(p)$, and further $g$ achieves its maximum when $p=1$.
In particular, note that \thref{th_infrequent_sampling} implies that the function $f:\mathcal{P}(\{0,1,\perp\})\rightarrow\mathbb{R}$ defined in Eq.~\eqref{affine_f} satisfies
\begin{align}
    \textup{Max}(f) &= g(1),\\
    \textup{Min}(f) &= \Big(1-\frac{1}{\gamma}\Big)g(1)+\frac{1}{\gamma}g(0),\\
    \textup{Min}_{\Sigma}(f)&\geq g(0),\\
    \textup{Var}(f) &\leq\frac{1}{\gamma}\big(g(1)-g(0)\big)^2.
\end{align}
Substituting these values in the constants $d_1,d_0$ of \thref{th_GEAT_testing} gives
\begin{align}\label{d1_def}
    d_1 
    &\leq \sqrt{\frac{2\ln (2)V^2}{\eta}\left( \vartheta({\varepsilon_s'}) + (2-\eta)\log(\frac{1}{\varepsilon_a})\right)},\\
    d_0
    &\leq \frac{(2-\eta)\eta^2\log(\frac{1}{\varepsilon_a})+\eta^2 \vartheta({\varepsilon_s'})}{3(\ln 2)^2 V^2 (2\eta-1)^3} 2^{\frac{1-\eta}{\eta}(2 +g(1)-g(0))} \ln^3\Big(2^{2+ g(1)-g(0) } + \textrm{e}^2\Big)\label{d0_def},
\end{align}
where we recall $\varepsilon_a\in\big(0,\frac{1}{2}\epssec\big]$ is such that $\Pr[\Omega_{\PE}]_{\sigma}\geq \varepsilon_a$.
Further, 
\begin{align}
    \eta = \frac{2\ln2}{1+2\ln2},\qquad \vartheta(\varepsilon_s')=-\log(1-\sqrt{1-(\varepsilon_s')^2}),\qquad V\leq\log(17)+\sqrt{2+\frac{1}{\gamma}\big(g(1)-g(0)\big)^2},
\end{align}
and from $g$ as defined in Eq.~\eqref{affine} we have:
\begin{align}
    g(1) &= \frac{1-\beta}{\ln2\cdot(1-\beta+\omega_3)} - \log\big(1-\beta+\omega_3\big),\\
    g(0) &= \frac{-\beta}{\ln2\cdot(1-\beta+\omega_3)} - \log\big(1-\beta+\omega_3\big),
\end{align}
for a variable $\beta\in[\othree,\otwo]$ to be optimised.

The assumption $\Pr[\WAB\wedge\GE]_{\mathcal{S}_3}\leq\omega_3$ marks a point where this analysis may be improved.
Specifically, when dealing with non-local games that exhibit monogamy of entanglement it is expected that the probability of Alice, Bob and Eve winning an instance of $\mathcal{G}_3$ decreases monotonically if we increase the expected bipartite winning probability $\oexp\in[\omega_3, \omega_2]$.
Further, if the optimal strategy for Alice and Bob in an instance of $\mathcal{G}_2$ self-tests maximally entangled states, then in the extreme $\Pr[\WAB]_{\mathcal{S}_2}\geq\omega_2$ we expect $\Pr[\WAB\wedge\GE]_{\mathcal{S}_3}\leq\frac{\omega_2}{2}$, as Eve's best strategy is to produce a uniform guess.

Explicitly, suppose instead that whenever $\Pr[\WAB]_{\mathcal{S}_2}\geq \oexp$ for $\oexp\in[\omega_3,\omega_2]$ we have that $\Pr[\WAB\wedge\GE]_{\mathcal{S}_3}\leq h(\oexp)$ for a monotonically decreasing, piece-wise differentiable function $h$.
The same analysis as in Eq.~\eqref{GEcompliment} shows
\begin{align}\label{min_ent_bound}
    \Pr[\GE]_{\mathcal{S}_3}\leq 1-\oexp + h(\oexp),
\end{align}
so a suitable affine lower bound to $-\log\Pr[\GE]$ is given by a tangent at some $\beta\in[\omega_3,\omega_2]$
\begin{align}\label{affine_constrained}
    {g}_\textup{cons}(\oexp):=\frac{(1-h'(\beta))\cdot(\oexp - \beta)}{\ln2\cdot(1-\beta+h(\beta))} - \log\big(1-\beta+h(\beta)\big),
\end{align}
where the subscript `cons' emphasizes that $\mathcal{G}_3$ was constrained by Alice and Bob's winning probability in an instance of $\mathcal{G}_2$.
In practice, the function $h$ bounding the winning probability of Alice, Bob and Eve in an instance of $\mathcal{G}_3$ constrained on Alice and Bob's winning probability in $\mathcal{G}_2$ may be computed numerically with the NPA hierarchy \cite{NPA08,PNA10,Wittek15} ---see Section \ref{sec_msg_min_ent}.

Now, we conclude that the functions $g$ and $g_\textup{cons}$ satisfy the definition of Eq.~\eqref{g_function_def}.
Indeed, letting $S_A = SK_A(A,X,Y)$ be the bit Alice generates from her output $A$ and inputs $X,Y$ in one instance of the non-local game primitive, we have that for any probability distribution $q=[1-\oexp,\oexp]$:
\begin{align}\label{pseudo_mintradeoff1}
    &\min_{\omega_{RE}}\big\{H(S_A|E)_{\mathcal{N}_i(\omega_{RE})}: \Tr_{S_ARE}[\mathcal{N}_i^\textup{test}(\omega_{RE})] = (1-\oexp)\ketbra{0}_{C_i}+\oexp\ketbra{1}_{C_i}\big\} \\
    &\;\; \geq\min_{\omega_{RE}}\big\{H_{\min}(S_A|E)_{\mathcal{N}_i(\omega_{RE})}: \Tr_{S_ARE}[\mathcal{N}_i^\textup{test}(\omega_{RE})] = (1-\oexp)\ketbra{0}_{C_i}+\oexp\ketbra{1}_{C_i}\big\} \\
    &\;\; \geq g_\textup{cons}(\oexp) \geq g(\oexp)\label{pseudo_mintradeoff2},
\end{align}
where we have identified register $R=R_AR_B$ with the registers $Q_AQ_B$ shared amongst Alice and Bob in strategy $\mathcal{S}_2$ and noted that since the strategy $\mathcal{S}_3$ is over arbitrary extensions $\nu_{Q_AQ_BE}$ of $\nu_{Q_AQ_B}$ so in particular includes the case where $E=E'F$ for a purifying register $F$ that gets acted on trivially by Eve's instruments, as per the definition in Eq.~\eqref{g_function_def}.
Further, the channel $\mathcal{N}_i$ used above is as in Eq.~\eqref{infrequent_sampling_channel} such that the strategy $\mathcal{S}_2$ corresponds to the quantum instruments performed by the devices of Alice and Bob in an application of $\mathcal{N}_i$.
Lastly, on the last inequality above we shift the argument of $g$ from a probability distribution on $\mathcal{P}(\{0,1\})$ to the winning probability.

\subsubsection{Final secrecy bound}\label{sec_final_bounds}
Finally, we are ready to lower bound the smooth min-entropy $H_{\min}^{\varepsilon_s'}(S_A\bar{S}_B|\bar{E})_{\sigma_{|{\Omega}_{\PE}}}$ in the right hand side of Eq.~\eqref{min_max_entropy_chain_rule} and conclude the secrecy proof.

\thref{th_GEAT_testing} with $g$ as in the previous section gives
\begin{align}
    H_{\min}^{\varepsilon_s'}(S_A\bar{S}_B|\bar{E})_{\sigma_{|{\Omega}_{\PE}}} \geq n{g}(\omega_\textup{exp}-\delta_\textup{tol}) -d_1\sqrt{n}-d_0,
\end{align}
for the constants $d_1,d_0$ as in Eqs.~\eqref{d1_def} and~\eqref{d0_def} respectively.
Together with Eqs.~\eqref{min_entropy_bound_1} and \eqref{min_max_entropy_chain_rule} we have 
\begin{align}
    H_{\min}^{\varepsilon_s}(S_A|{E}_n)_{\sigma_{|{\Omega}_{\PE}\wedge\Omega_{\EC}}} 
    \geq\; &n{g}(\omega_\textup{exp}-\delta_\textup{tol}) -d_1\sqrt{n}-d_0 - (2-\oexp+\delta_\textup{tol})n(\gamma+\kappa)  \\
    &- 2\vartheta(\delta) -l_{\EC}-\lambda_{\EC}-2,
\end{align}
where we recall $\varepsilon_s\in\big(0,\frac{1}{2}\epssec\big)$, $\varepsilon_s'\in\big[0,\frac{1}{2}\epssec\big)$ and $\varepsilon_s''\in\big[0,\frac{1}{4}\epssec\big)$ are such that $\delta = \varepsilon_s-\varepsilon_s'-2\varepsilon_s'' >0$, and $\kappa\geq\frac{1}{\sqrt{n}}\sqrt{-\ln{\varepsilon_s''}}$.
Then, the Leftover Hashing Lemma (\thref{th_leftover_hashing}) implies
\begin{align}
    &\norm{\rho_{K_AE | {\Omega}_{\PE}\wedge\Omega_{\EC} } - \tau_{K_A}\otimes\rho_{E| {\Omega}_{\PE}\wedge\Omega_{\EC} }}_{\Tr} \\
    &\qquad\quad\quad\;\;\leq \frac{1}{2}\cdot 2^{-\frac{1}{2}\big(n{g}(\omega_\textup{exp}-\delta_\textup{tol}) -d_1\sqrt{n}-d_0 - (2-\oexp+\delta_\textup{tol})n(\gamma+\kappa)   - 2\vartheta(\delta)
    -l_{\EC}-\lambda_{\EC}-2-\lkey\big)}+2{\varepsilon_s}.
\end{align}
Finally, merging with Eq.~\eqref{sececy_trivial} and the conditions $\Pr[\Omega_{\g}|{\Omega}_{\PE}]\geq \varepsilon_s^2$ and $\Pr[{\Omega}_{\PE}]\geq\varepsilon_a\in\big(0,\frac{1}{2}\epssec\big]$, we have secrecy with parameter $\max\big\{2\varepsilon_a, 2\varepsilon_s^2, \varepsilon_\textup{sec}\big\}= \epssec$ provided 
\begin{align}
    2^{-\frac{1}{2}\big(n{g}(\omega_\textup{exp}-\delta_\textup{tol}) -d_1\sqrt{n}-d_0 - (2-\oexp+\delta_\textup{tol})n(\gamma+\kappa)   - 2\vartheta(\varepsilon_s-\varepsilon_s'-2\varepsilon_s'')-l_{\EC}-\lambda_{\EC}-\lkey\big)}+2{\varepsilon_s} \leq \epssec,
\end{align}
which implies 
\begin{align}\label{lkey_proof}
    l_\textup{key} \leq \;&n{g}(\omega_\textup{exp}-\delta_\textup{tol}) -d_1\sqrt{n}-d_0 - (2-\oexp+\delta_\textup{tol})n(\gamma+\kappa)   - 2\vartheta(\varepsilon_s-\varepsilon_s'-2\varepsilon_s'')\\
        &-l_{\EC}-\lambda_{\EC} 
        - 2\log\frac{1}{\epssec-2\varepsilon_s}.
\end{align} 

In order to maximize the finite key rates given by $\frac{1}{n}\lkey$, we wish to minimize all the terms in the right hand side which depend on the free variables $\varepsilon_a,\varepsilon_s,\varepsilon_s',\varepsilon_s''$.

First, we tackle the coefficients $d_1,d_0$ respectively as in Eqs.~\eqref{d1_def}, and~\eqref{d0_def}.
Note that $d_1$ asymptotically scales as $O(n^{-1/2})$ in the expression for the key rate, whereas $d_0$ scales as $O(n^{-1})$. 
Without loss of generality we can fix $\varepsilon_a=\frac{1}{2}\epssec$ since $\argmin_{\varepsilon_a}(d_0)=\argmin_{\varepsilon_a}(d_1)=\frac{1}{2}\epssec$, as $\varepsilon_a$ does not appear elsewhere in Eq.~\eqref{lkey_proof}.
To further minimize $d_1,d_0$, we need $\vartheta(\varepsilon_s')\in\mathbb{R}^+\cup\{0\}$ to be small, which requires $\varepsilon_s'\in\big[0,\frac{1}{2}\epssec\big)$ to be large.
Moreover, the term $\kappa$ which also scales as $O(n^{-1/2})$ in the key rate is minimized when $\varepsilon_s''\in\big[0,\frac{1}{4}\epssec\big)$ is large.
Further, $\vartheta(\varepsilon_s-\varepsilon_s'-2\varepsilon_s'')$ which scales as $O(n^{-1})$ in $\frac{1}{n}\lkey$, is smallest when $\varepsilon_s-\varepsilon_s'-2\varepsilon_s''\in\big(0,\frac{1}{2}\epssec\big)$ is large. 
This requires both $\varepsilon_s\in\big(0,\frac{1}{2}\epssec\big)$ to be large and $\varepsilon_s',2\varepsilon_s''\in\big[0,\frac{1}{2}\epssec\big)$ to be small.
Lastly, the $\log\frac{1}{\epssec-2\varepsilon_s}$ term vanishes as $O(n^{-1})$ in the key rate and is minimized when $\varepsilon_s$ is close to $0$.

These conditions impose a tradeoff between $\varepsilon_s,\varepsilon_s',\varepsilon_s''$ in order to maximize the finite key rates.
For pedagogical purposes, it suffices to set $\varepsilon_a=\frac{1}{2}\epssec$, $\varepsilon_s'=\frac{1}{2}\varepsilon_s$ and $\varepsilon_s''=\frac{1}{8}\varepsilon_s$, in which case we obtain the expression from the statement of \thref{th_secrecy}
\begin{align}
    l_\textup{key} \leq \;&n{g}(\omega_\textup{exp}-\delta_\textup{tol}) -d_1\sqrt{n}-d_0 - (2-\oexp+\delta_\textup{tol})n(\gamma+\kappa)   - 2\vartheta\left(\frac{\varepsilon_s}{4}\right)\\
        &-l_{\EC}-\lambda_{\EC} 
        - 2\log\frac{1}{\epssec-2\varepsilon_s},
\end{align} 
wherein $\kappa\geq\frac{1}{\sqrt{n}}\sqrt{\ln{8}-\ln{\varepsilon_s}}$, and
\begin{align}\label{d1_def_final}
    d_1 
    &\leq \sqrt{\frac{2\ln (2)V^2}{\eta}\left( \vartheta\left({\frac{\varepsilon_s}{2}}\right) + (2-\eta)\log(\frac{2}{\epssec})\right)},\\
    d_0
    &\leq \frac{(2-\eta)\eta^2\log(\frac{2}{\epssec})+\eta^2 \vartheta({\frac{\varepsilon_s}{2}})}{3(\ln 2)^2 V^2 (2\eta-1)^3} 2^{\frac{1-\eta}{\eta}(2 +g(1)-g(0))} \ln^3\Big(2^{2+ g(1)-g(0) } + \textrm{e}^2\Big)\label{d0_def_final}.
\end{align}
Further, 
\begin{align}
    \eta = \frac{2\ln2}{1+2\ln2},\qquad V\leq\log(17)+\sqrt{2+\frac{1}{\gamma}\big(g(1)-g(0)\big)^2},
\end{align}
and from $g$ as defined in Eq.~\eqref{affine} we have:
\begin{align}
    g(1) &= \frac{1-\beta}{\ln2\cdot(1-\beta+\omega_3)} - \log\big(1-\beta+\omega_3\big),\\
    g(0) &= \frac{-\beta}{\ln2\cdot(1-\beta+\omega_3)} - \log\big(1-\beta+\omega_3\big),
\end{align}
for a variable $\beta\in[\othree,\otwo]$ to be optimised.
This concludes the proof.

We remark that the finite key rates for the magic square game in Figure \ref{fig:MSG-DIQKD_keyrates} are generated through numerical optimization of all free variables $\varepsilon_a,\varepsilon_s,\varepsilon_s',\varepsilon_s''$, but the particular choice of variables above does not incur discernible changes in the finite key rates.

\section{Numerical optimizations for the magic square game}\label{sec_msg}
In this section, we show that the magic square game satisfies $\omega_3<\omega_2$, and subsequently construct bounds on the min-entropy and von Neumann entropy of Alice's generated bit conditioned on an eavesdropper's information.

\subsection{Tripartite value of the magic square game}

Recall the bipartite magic square game $\msgt$, in which Alice and Bob respectively receive uniform inputs $x,y\in\{0,1,2\}$,  produce bit triples $a$ of even Hamming weight and $b$ of odd Hamming weight, and win whenever the $y$-th bit of $a$ equals the $x$-th bit of $b$ ---symbolically $a[y]=b[x]$.

It is known \cite{Mermin90, Peres90} that the maximal winning probability using classical strategies is $\omega_\textup{C}(\msgt)=\frac{8}{9}$ and that the maximal winning probability using a quantum strategy is $\omega_\textup{Q}(\msgt)=1$. 
The strategies achieving each value are outlined at the beginning of Section \ref{sec_msg_keyrates}.

We consider a tripartite extension to the standard magic square game, which we label $\msg$, in which the third player (for now, let us blissfully call him Charlie) receives both inputs $x,y$ and produces a single bit $c\in\{0,1\}$ required to satisfy $c=a[y]=b[x]$.

In this section we compute bounds on the min-entropy and on the von Neumann entropy of Alice's sifted bit $a[y]$ conditioned on the third player's side information.
The main numerical tools used in this section is the NPA hierarchy \cite{NPA08,PNA10,Wittek15} which we briefly describe in the following.
Consider a polynomial optimization problem with non-commuting variables of the form
\begin{align}
    p_\textup{opt} \; := \; &\underset{\mathcal{H}, X, \ket{\phi}}{\text{inf }} \bra{\phi}p(X)\ket{\phi} \label{poly_opt}\\
    \text{s.t.:} \qquad &g_i(X)\geq 0 \quad i={1,...,m}, \\
    &\norm{\phi} = 1,
\end{align}
where $p$ and $g_i$ are Hermitian polynomials in the algebra generated by $2n$ non-commuting variables $(x_1,...,x_n,x^\dagger_1,...,x_n^\dagger)$. 
The optimization is over the Hilbert space $\mathcal{H}$, the state $\kettext{\phi}\in \mathcal{H}$ and the variables $X=(X_1,...,X_n)$ which are bounded operators in $\mathcal{H}$. 
The NPA hierarchy is a sequence of semidefinite programs (SDP) with increasing relaxations which monotonically converge to the optimal value in Eq.~\eqref{poly_opt}.
In short, if $d\in\mathbb{N}$ is the level of the hierarchy\footnote{We note that in principle it is possible to optimise SDPs that lie between the discrete levels of the hierarchy, depending on the available computational power.} and $p_d\in\mathbb{R}$ is the optimum value of the SDP at level $d$ of the hierarchy, then 
\begin{align}
    &p_{d'} \geq p_d \quad \forall d'\geq d \\
    &\underset{d\rightarrow\infty}{\lim}p_d = p_\textup{opt}.
\end{align}
We also reference the \textbf{ncpol2sdpa} python library \cite{Wittek15} written by Wittek and maintained by Brown, which implements the NPA hierarchy and which we use throughout this section (see our python scripts in \cite{Cervero23}).

We begin with an analysis on the classical, quantum and no-signalling values of the tripartite $\msg$.
The classical strategy achieving $\omega_\textup{C}(\msgt)=\frac{8}{9}$ can be used to infer that $\omega_\textup{C}(\msg)=\frac{8}{9}$.
Further, we use the linear program in \cite[Section 3]{Toner08} to show $\omega_\textup{NS}(\msg)=1$.
Now, a general quantum strategy for $\msg$ is given by the tuple
\begin{align}
    \mathcal{S} = \Big\{\nu_{Q_AQ_BQ_C}, \big\{\{F^{xy}_c\}_{c\in\{0,1\}}\big\}_{(x,y)}, \big\{\{P^x_a\}_{a\in \mathcal{A}}\big\}_x, \big\{\{Q^y_b\}_{b\in \mathcal{B}}\big\}_y \Big\},
\end{align}
where $\nu_{Q_AQ_BQ_C}\in S(Q_AQ_BQ_C)$ is a tripartite state, $\{\{P^x_a\}_{a\in \mathcal{A}}\}_x,$ and $\{\{Q^y_b\}_{b\in \mathcal{B}}\}_y$ are POVMs in Alice and Bob's spaces indexed over respective alphabets $\mathcal{A} =\{000,011,101,110\}$ and $\mathcal{B} = \{001,010,100,111\}$, and $ \{\{F^{xy}_c\}_{c\in\{0,1\}}\}_{(x,y)}$ are POVMs in Charlie's space.

The quantum value of MSG$_3$ is then
\begin{align}\label{qvalue_msg}
    \omega_\textup{Q}(\msg) = \underset{\mathcal{S}}{\text{sup}} \sum_{x,y}\frac{1}{9}\text{Tr}\Big[\nu_{Q_AQ_BQ_C} \Pi^{xy}\Big],
\end{align}
where
\begin{align}\label{pi_xy}
    \Pi^{xy} := \sum_{c\in\{0,1\}} \bigg[ F_c^{xy} \otimes \sum_{a\in \mathcal{A}: a[y] = c} P^x_a \otimes \sum_{b\in \mathcal{B}: b[x] = c} Q^y_b \bigg].
\end{align}
Without loss of generality we can assume that all POVMs in $\mathcal{S}$ are projective by Naimark's Theorem, and further that $\nu_{Q_AQ_BQ_C}=\ketbratext{\phi}{\phi}_{Q_AQ_BQ_C}$ is pure since the dimension of $Q_AQ_BQ_C$ is arbitrary.
Eq.~\eqref{qvalue_msg} therefore becomes
\begin{align}\label{qvalue_pure}
    \omega_\textup{Q}(\msg) = \sup_{\mathcal{S}}\frac{1}{9} \sum_{x,y} \bra{\phi} \Pi^{xy}\ket{\phi},
\end{align}
and the optimization is over the Hilbert space $Q_AQ_BQ_C$, the state $\kettext{\phi}\in Q_AQ_BQ_C$ and the PVMs on Alice, Bob and Charlie's systems.

We use the NPA hierarchy to show (see Appendix \ref{app_msg_npa} and \cite{Cervero23}):
\begin{theorem}\thlabel{th_msg}
    For the $\msg$ game considered above, we have 
    \begin{align}
        \omega_\textup{Q}(\msg)\lesssim \frac{8.00077}{9}.
    \end{align}
\end{theorem}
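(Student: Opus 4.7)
The plan is to upper-bound $\omega_\textup{Q}(\msg)$ by formulating Eq.~\eqref{qvalue_pure} as a polynomial optimization problem in non-commuting variables and applying a level of the NPA hierarchy to it, yielding an SDP relaxation whose optimum value is an upper bound on $\omega_\textup{Q}(\msg)$. Concretely, the non-commuting variables are the projectors $\{P^x_a\}$, $\{Q^y_b\}$, $\{F^{xy}_c\}$ in Alice's, Bob's and the eavesdropper's systems (projective without loss of generality by Naimark), and the objective is the linear combination $\frac{1}{9}\sum_{x,y}\bra{\phi}\Pi^{xy}\ket{\phi}$ from Eq.~\eqref{pi_xy}.

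The constraints I would feed into the SDP are: (i) each $P^x_a$, $Q^y_b$, $F^{xy}_c$ is a projector (idempotency $P^2 = P$ and Hermiticity), (ii) completeness within each party, e.g.\ $\sum_a P^x_a = \mathbb{I}$ for each $x$, and analogously for Bob and the eavesdropper, (iii) tensor-product/commutation constraints between operators associated with different parties, so that any product of an Alice operator, a Bob operator and an eavesdropper operator commutes pairwise, (iv) the parity restriction on Alice's and Bob's outputs is already encoded in the indexing $a\in\mathcal{A}=\{000,011,101,110\}$ and $b\in\mathcal{B}=\{001,010,100,111\}$, so only those operators appear. Once these constraints are written, I would solve the resulting SDP at the first level of the NPA hierarchy that is computationally feasible and still separates the bound from the classical value $\tfrac{8}{9}$; the output of the solver is the reported upper bound $\tfrac{8.00077}{9}$.

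The main obstacle I anticipate is computational rather than conceptual. The NPA moment matrix grows rapidly with the level of the hierarchy, and with three parties each having multiple inputs and outputs the monomial set at level two is already very large. I would therefore work with a carefully chosen intermediate level (between levels one and two, built from a restricted set of monomials such as $\{\mathbb{I}\}\cup\{P^x_a\}\cup\{Q^y_b\}\cup\{F^{xy}_c\}\cup\{P^x_a Q^y_b\}\cup\{P^x_a F^{xy}_c\}\cup\{Q^y_b F^{xy}_c\}$, and possibly selected cubic monomials) so that the SDP remains tractable in \textbf{ncpol2sdpa} while still producing a tight bound. A secondary issue is numerical: the gap between $\tfrac{8.00077}{9}$ and the classical value $\tfrac{8}{9}$ is very small, so I would run the solver at high precision and include a post-hoc verification step (extracting a dual certificate, i.e.\ a feasible dual solution with matching objective) to ensure the reported bound is truly an upper bound on the relaxation and not a consequence of numerical slack.

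Finally, I would present the explicit monomial list, the dual certificate values, and the solver parameters in Appendix \ref{app_msg_npa} and reference the Python script in \cite{Cervero23} that reproduces the computation, so that the numerical bound claimed in the statement is independently checkable.
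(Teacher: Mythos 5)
Your approach is the same as the paper's: upper-bound $\omega_\textup{Q}(\msg)$ by relaxing the non-commutative polynomial optimization in Eq.~\eqref{qvalue_pure} to an SDP via the NPA hierarchy, using projectivity, completeness and inter-party commutation constraints, and a hand-picked intermediate monomial level to keep the problem tractable in \textbf{ncpol2sdpa}. Any such relaxation is a valid upper bound, so the logic is sound. However, to actually reach $\tfrac{8.00077}{9}$ the paper relies on two refinements that you omit and that it explicitly credits with significantly improving the output of the hierarchy: first, one operator per measurement is eliminated via completeness (writing $P^x_{000}$, $Q^y_{111}$ and $F^{xy}_1$ in terms of the remaining projectors) so that $\Pi^{xy}$ collapses to the smaller expression in Eq.~\eqref{pi_xy_simplified} with fewer variables; second, the redundant constraints of Eqs.~\eqref{constraint_one}--\eqref{constraint_three} are added, asserting that each \emph{pair} of parties wins with probability at least $\tfrac{8}{9}$ (justified because the classical value already achieves this, so the constraints do not change the optimum). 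Without these, your SDP at a comparable monomial level would most likely return a looser bound, so the specific numerical claim in the theorem would not be reproduced; your suggestion of extracting a dual certificate is a reasonable addition that the paper does not carry out, deferring instead to the published scripts.
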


{This is a numerical upper bound on the winning probability, supporting the conjecture that the value is in fact exactly $\frac89$.}
The intuition behind this result is that the quantum strategy yielding $\omega_\textup{Q}(\msgt)=1$ requires Alice and Bob to share maximally entangled states ---in fact, the optimal quantum bipartite strategy self-tests a pair of maximally entangled states \cite{WBMS16}.
Monogamy of entanglement then suggests that the correlations shared by Alice, Bob and Charlie in an instance of $\msg$ are not enough to obtain a value $\omega_\textup{Q}(\msg)$ larger than $\omega_\textup{C}(\msg)$.

\subsection{Bound on min-entropy}\label{sec_msg_min_ent}

The goal of this section is to compute the function $h:[8/9,1]\rightarrow\mathbb{R}$ in Eq.~\eqref{affine_constrained} to lower bound the min-entropy of Alice's bit $A[Y]$ conditioned on the third player's information, generated in one round of the MSG.
The function $h$ is used in the min-tradeoff function in Eq.~\eqref{affine_constrained} to provide the finite key length in Eq.~\eqref{l_key} of the MSG-DIQKD protocol.

To bound the min-entropy of Alice's bit conditioned on third player (now turned Eve), we are interested in the related optimization problem given by
\begin{align}\label{msg_constrained}
    &  \sup \frac{1}{9} \sum_{x,y} \bra{\phi} \Pi^{xy}\ket{\phi} \\
    &\text{s.t.: } \{\Pax\}_a, \{\Qby\}_b, \{\Fcxy\}_c \text{ are PVMs in different labs for all }x,y, \\
    &\quad\quad \bra{\phi}\frac{1}{9}\sum_{xy}\Pi^{xy}_{AB}\ket{\phi}\geq \oexp ,
\end{align}
where $\oexp\in[8/9,1]$ and 
\begin{align}
    \Pi^{xy}_{AB} := \sum_{c\in\{0,1\}} \mathbb{I}_{Q_C}\otimes \sum_{a\in \mathcal{A}: a[y] = c} P^x_a \otimes \sum_{b\in \mathcal{B}: b[x] = c} Q^y_b
\end{align}
is the projection yielding a winning outcome for Alice and Bob.
This is precisely the optimization problem resulting from fixing a bipartite strategy 
\begin{align}
    \mathcal{S}_2 = \Big\{\nu_{Q_AQ_B}, \big\{\{P^x_a\}_{a\in \mathcal{A}}\big\}_x, \big\{\{Q^y_b\}_{b\in \mathcal{B}}\big\}_y \Big\}
\end{align}
which allows Alice and Bob to win with probability at least $\oexp$ and subsequently considering a tripartite strategy 
\begin{align}
    \mathcal{S}_3 = \Big\{\nu_{Q_AQ_BE}, \big\{\{F^{xy}_c\}_{c\in\mathcal{C}}\big\}_{(x,y)}, \big\{\{P^x_a\}_{a\in \mathcal{A}}\big\}_x, \big\{\{Q^y_b\}_{b\in \mathcal{B}}\big\}_y \Big\},
\end{align}
which extends $\mathcal{S}_2$ ---that is $\mathcal{S}_3 = \mathcal{S}_2\cup\big\{\{\{F^{xy}_c\}_c\}_{(x,y)}\big\}$ and $\Tr_{Q_C}[\nu_{Q_AQ_BE}]=\nu_{Q_AQ_B}$.
We use the NPA hierarchy to bound these probabilities constrained on a range of values of $\oexp\in[8/9,1]$ and display the results in Figure \ref{fig:MSG_cons_w_upperbound}\footnote{
We remark that the values in the figure are obtained via relaxation of the NPA containing less extramonomials than the value in \thref{th_msg}.
For reference, each semidefinite program in Figure \ref{fig:MSG_cons_w_upperbound} takes roughly 12h on a desktop with 256GB of RAM using the scripts in \cite{Cervero23}, whereas the $8.00077/9$ bound obtained in \thref{th_msg} takes about $3$ days.
}.

We remark that the optimums in the optimization problem in Eq.~\eqref{msg_constrained} form a monotonically decreasing and concave curve.
Monotonically decreasing because increasing $\oexp\in[8/9,1]$ decreases the size of the feasible set of the optimization problem, and concave because the straight line connecting the optimums given by any pair of constraints $\omega_\textup{exp}^1 <\omega_\textup{exp}^2$ on Alice and Bob's winning probability is a lower bound on the optimal value for any $\oexp\in(\omega_\textup{exp}^1,\omega_\textup{exp}^2)$\footnote{Specifically, a convex combination of the strategies given by the pair of constraints $\omega_\textup{exp}^1 <\omega_\textup{exp}^2$ is a valid strategy, but not necessarily the optimal one}.

The monotonicity is a direct consequence of the monogamy-of-entanglement property.
Indeed, the magic square game self-tests a pair of maximally entangled states \cite{WBMS16}, so it is expected that the value $\sup \frac{1}{9} \sum_{x,y} \bra{\phi} \Pi^{xy}\kettext{\phi}$ decreases from $8/9$ when $\oexp=8/9$ to $1/2$ when $\oexp=1$.
Consequently, Eve's best strategy if Alice and Bob are required to win with unit probability is to make a uniform guess, which brings the tripartite winning probability to $1/2$ ---as can be verified by the rightmost datapoint in Figure \ref{fig:MSG_cons_w_upperbound}.

\begin{figure}[h]
    \centering
     \includegraphics[width=0.6\textwidth]{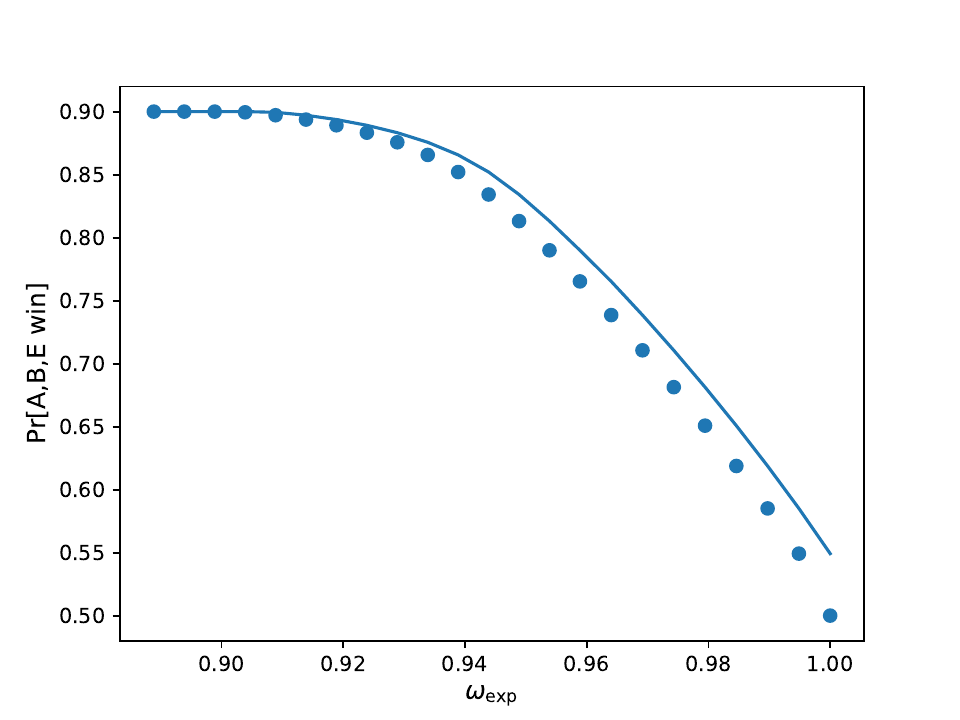}
     \caption{Upper bounds on the winning probability of Alice, Bob and Eve on non-local games constrained on the winning probability of Alice and Bob as given by the problem in Eq.~\eqref{msg_constrained}.
     The solid line corresponds to the upper bound defined in Eq.~\eqref{msg_prob_upper_bound} from the NPA points.
    See \cite{Cervero23} for Python script used to generate these figures.}
     \label{fig:MSG_cons_w_upperbound}
\end{figure}

The upper bound $h:[8/9,1]\rightarrow \mathbb{R}$ can be constructed from the $N$ points computed via the NPA hierarchy as follows.
Given a sequence of $N$ constraints $[x_k]_{k=0}^{N-1}$ on $\oexp$ with corresponding upper bounds $[y_k]_{k=0}^{N-1}$ to the optimization in Eq.~\eqref{msg_constrained} obtained via the NPA hierarchy, let $L_k$ be the straight line segment connecting points $(x_{k},y_{k})$ to $(x_{k+1},y_{k+1})$.
For any $x\in [x_k,x_{k+1}]$, $L_k$ is given by the following explicit form
\begin{align}
    f_k(x) = \frac{y_{k-1}-y_k}{x_{k-1}-x_k}(x-x_k) - y_k.
\end{align}
Hence, a continuous upper bound to the tripartite winning probability $h$ is defined as follows
\begin{align}\label{msg_prob_upper_bound}
    h(\oexp) := 
    \begin{cases}
        8/9 & \text{if } x_0 \leq \oexp \leq x_1 \\
        f_k(\oexp) & \text{if } x_{k+1} \leq \oexp \leq x_{k+2} \text{ for } k=0,...,N-3.
    \end{cases}
\end{align}
On a slight technicality, we remark that this upper bound is differentiable everywhere except at each point $x_k$. 
However, this is not an issue as one can simply choose the gradient of either the line segment preceding or succeeding it whilst guaranteeing that the resulting tangent is still an upper bound to the tripartite winning probability.

Intuitively, $h$ is constructed by placing the line segment $L_k$ connecting the pairs $(x_{k},y_{k})$ and $(x_{k+1},y_{k+1})$ above the pairs $(x_{k+1},y_{k+1})$ to $(x_{k+2},y_{k+2})$.
The resultant function is clearly an upper bound by the monotonicity and concavity of the optimums of Eq.~\eqref{msg_constrained} for constraints $\oexp\in[8/9,1]$ and further converges as $N\rightarrow\infty$.
The function $h$ is plotted in Figure \ref{fig:MSG_cons_w_upperbound}.

Using $h$ in the the bound for Eve's guessing probability in Eq.~\eqref{min_ent_bound} directly yields a bound to the min-entropy of Alice's key bit in one instance of the magic square game $H_{\min}(S_A|E)_\nu$ for $\nu\in\mathcal{S}_2$. 
An affine lower bound to this min-entropy is subsequently given by Eq.~\eqref{affine_constrained}.

\subsection{Bound on von-Neumann entropy}\label{sec_msg_vn}
In this section, we additionally employ the device independent bounds introduced in \cite{BFF23} to obtain a bound of the von Neumann entropy $H(S_A | E)_\nu$ of Alice's bit $S_A=A[Y]$ in one instance of the MSG over states $\nu$ belonging to a strategy of Alice and Bob $\mathcal{S}_2$ which achieves $\oexp\in[8/9,1]$.
This is done in order to obtain a tighter min-tradeoff function for improved key rates in our MSG-DIQKD protocol.

Firstly, note that it is not possible to use \cite[Lemma 2.3]{BFF23} directly for the magic square game.
This is because in the MSG, the key bit $S_A$ generated each round is not interchangeable with the register $A$ in the lemma, which is to be interpreted as the outcome of the non-local game.
Indeed, Alice's measurement output $A$ in the magic square non-local game does not correspond to Alice's key bit for that round. 
Nonetheless, it is possible to use \cite[Theorem 2.1]{BFF23} to overcome this subtlety for the MSG and prove a similar bound:

\begin{theorem}\thlabel{MSG_di_vn}
    Let $m\in\mathbb{N}$, and let $t_1,...,t_m$ and $w_1,...,w_m$ be the nodes and weights of an $m$-point Gauss-Radau quadrature on $[0,1]$ with endpoint $t_m=1$ (see \cite{BFF23} for relevant definitions). 
    Let $\nu_{Q_AQ_BE}$ be the initial quantum state shared between the devices of Alice, Bob and Eve in an instance of the MSG and let $\{P_a^x\}_a$ denote the measurement operators performed by Alice's device in the MSG on input $x\in\mathcal{X}$. 
    Further, for $i=1,...,m-1$ let $\alpha_i=\frac{3}{2}\max\{\frac{1}{t_i}, \frac{1}{t_1-1}\}$. 
    Then $H(S_A|XYE)_\nu$ is lower bounded by
    \begin{align}
         c_m + \sum_{i=1}^{m-1}\frac{w_i}{t_i\ln2}\sum_{s,x,y} \pi(x,y)\inf_{Z_{s,x,y}} &\Tr\Bigg[\nu_{Q_AE}\Bigg(\sum_{a: a[y]=s}P_a^x\otimes (Z_{s,x,y} + Z_{s,x,y}^\dagger + (1-t_i)Z^\dagger_{s,x,y}Z_{s,x,y})\Bigg)\Bigg]\\
        &\quad+ \Tr[\rho_{Q_AQ_E}t_i(\mathbb{I}_{S_A}\otimes Z_{s,x,y}Z^\dagger_{s,x,y})],
    \end{align}
    such that $\norm{Z_{s,x,y}}_\infty\leq \alpha_i$, where $c_m=\sum_{i=1}^{m-1}\frac{w_i}{t_i\ln 2}$, the infimum is over operators $Z_{s,x,y}$ in Eve's space and $\pi(x,y)$ is the probability that Alice and Bob select inputs $x,y$.
    Moreover, these bounds converge to $H(S_A|XYE)_\nu$ as $m\rightarrow\infty$.
\end{theorem}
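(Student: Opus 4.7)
The plan is to adapt [BFF23, Theorem 2.1] to the setting where the key bit $S_A$ is not Alice's raw measurement outcome but rather the deterministic function $S_A = A[Y]$ of both her outcome and Bob's input. The key observation is that for each fixed input pair $(x,y)$, the assignment $a \mapsto a[y]$ induces a valid two-outcome POVM on Alice's register, namely $\{M_s^{x,y}\}_{s\in\{0,1\}}$ with $M_s^{x,y} := \sum_{a\in\mathcal{A}: a[y]=s} P_a^x$. This reduces the problem to computing a standard measurement entropy after the appropriate grouping of POVM elements.

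First I would decompose the conditional entropy using the fact that $X,Y$ are classical and independent of $\nu_{Q_AQ_BE}$:
\begin{align}
H(S_A|XYE)_\nu = \sum_{x,y}\pi(x,y)\, H(S_A|E)_{\nu^{x,y}},
\end{align}
where $\nu^{x,y}_{S_AE}$ is the classical-quantum post-measurement state obtained by applying the effective POVM $\{M_s^{x,y}\}_s$ to $\nu_{Q_AE}$. Next I would invoke [BFF23, Theorem 2.1] individually on each $\nu^{x,y}$. That theorem rests on two ingredients: (i) a Gauss-Radau quadrature of the integral representation $-\log x = \frac{1}{\ln 2}\int_0^1 \frac{1-x}{t+(1-t)x}\,\frac{dt}{t}$ on $[0,1]$ with endpoint $t_m=1$, which being one-sided yields a rigorous lower bound with quadrature nodes $t_i$ and weights $w_i$; and (ii) a variational (Kosaki-type) formula expressing each integrand as an infimum over bounded operators $Z$ of an affine functional involving $Z+Z^\dagger+(1-t_i)Z^\dagger Z$ in the measurement-projector sandwich and $t_i Z Z^\dagger$ in the ambient trace. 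Applying this to the grouped POVM $\{M_s^{x,y}\}_s$ produces the bracketed expression in the statement with operators $Z_{s,x,y}$ acting on Eve's system.

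Summing over $(x,y)$ with weight $\pi(x,y)$ and noting that an infimum of a sum over independent variables equals the sum of infima, the optimizations over the individual $Z_{s,x,y}$ can be pulled inside the sum $\sum_{s,x,y}$ as written. The constant $c_m = \sum_{i=1}^{m-1} \frac{w_i}{t_i \ln 2}$ arises from the constant term in the Gauss-Radau bound for $-\log$ at $x=1$, and the norm constraint $\|Z_{s,x,y}\|_\infty \leq \alpha_i = \tfrac{3}{2}\max\{1/t_i,\,1/(t_1-1)\}$ together with convergence as $m\to\infty$ transfer verbatim from the BFF23 analysis, since grouping POVM elements preserves positivity and completeness and leaves untouched all the operator inequalities in their proof.

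The main obstacle is bookkeeping: making sure that (a) the $Z$-operators can be restricted to Eve's system rather than acting jointly on $Q_A E$ — this follows because the measurement tensor-factorises as $M_s^{x,y}\otimes \mathbb{I}_E$ and the variational inequality applies to each block independently — and (b) the $(x,y)$-dependence of the effective POVM does not break the BFF23 norm bound, which it does not because the bound $\alpha_i$ depends only on $t_i$ and not on the POVM. Once these points are verified, the claimed inequality and its convergence as $m\to\infty$ follow directly from the corresponding statements in [BFF23, Theorem 2.1] applied slicewise in $(x,y)$.
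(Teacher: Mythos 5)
Your proposal is correct and follows essentially the same route as the paper: the paper likewise observes that the key bit corresponds to the grouped POVM $P_s^{xy}=\sum_{a:a[y]=s}P_a^x$, and then applies the BFF23 variational/Gauss--Radau machinery with these grouped operators while carrying the classical inputs $(x,y)$ through the conditioning. Your write-up is in fact somewhat more explicit than the paper's (which compresses the argument to ``replacing Alice's projections with $\{P_s^{xy}\}$ and accounting for inputs $x,y$''), but the underlying ideas are identical.
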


The proof of this theorem is analogous to that of \cite[Lemma 2.3]{BFF23} albeit more involved by necessity of notation, but requires us to first establish some notation about the magic square game.

Suppose that $\nu_{Q_AQ_BE}\in S(Q_AQ_BE)$ is the state shared by Alice, Bob and Eve at the beginning of the game chosen such that Alice and Bob win $\msgt$ with probability $\oexp\in[8/9,1]$.
On input of $x\in\{0,1,2\}$ for Alice and $y\in\{0,1,2\}$ for Bob, they will perform measurements $\{P_a^x\}_a$ and $\{Q_b^y\}_b$ on their respective portions of $\nu$.
Including the input registers, the post-measurement state is $\nu_{ABXYE} = \sum_{a,b,x,y}\pi(x,y)\ketbratext{abxy}{abxy}\otimes \nu_{E}^{abxy}$, where $\nu_{E}^{abxy}=\Tr_{Q_AQ_B}[\nu_{Q_AQ_BE}(P_a^x\otimes Q_b^y\otimes\mathbb{I}_{E})]$ is Eve's post-measurement subnormalised state.
Now, to generate her key bit Alice applies the map 
\begin{align}
    &\mathcal{E}\in\CPTP(AY,SY), \\
    &\mathcal{E}:\ketbra{a_0y_0}{a_1y_1} \mapsto \ketbra{a_0[y_0]y_0}{a_1[y_1]y_1},
\end{align}
for any pairs $a_0,a_1\in\{000,011,101,110\}$ and $y_0,y_1\in\{0,1,2\}$.
We include the register $Y$ in the codomain of $\mathcal{E}$ to emphasize that Eve knows it. 
This results in the state
\begin{align}
    \nu_{S_ABXYE}
    &= \sum_{a,b,x,y}\pi(x,y)\ketbra{a[y]bxy}\otimes \nu_{E}^{abxy}\\
    &= \sum_{s,b,x,y}\pi(x,y)\ketbra{sbxy}\otimes \nu_{E}^{sbxy},
\end{align}
where we have introduced the subnormalized state $\nu_{E}^{sbxy}:=\sum_{a: a[y]=s}\nu_{E}^{abxy}$.
Note that 
\begin{align}
    \sum_{a: a[y]=s}\nu_{E}^{abxy} = \Tr_{Q_AQ_B}\Bigg[\nu_{Q_AQ_BE}\Bigg(\sum_{{a: a[y]=s}} P_a^x \otimes \Qby \otimes\mathbb{I}_{E}\Bigg)\Bigg],
\end{align}
so $\nu_{E}^{sbxy}$ is analogously obtained by using $P_s^{xy}=\sum_{{a: a[y]=s}} P_a^x$ instead.

We continue by exploring how this rewriting of $P_s^{xy}$ affects the expression for Alice's and Bob's winning probability constraint.
Recall that $\nu_{Q_AQ_BE}$ is chosen so that
\begin{align}
    \omega_\textup{exp} = \sum_{x,y}\frac{1}{9}\Tr[\nu_{Q_AQ_BE}\Pi^{xy}_{AB}],
\end{align}
where
\begin{align}
    \Pi^{xy}_{AB} 
    &= \frac{1}{9}\sum_{s\in\{0,1\}} \Bigg[\sum_{a: a[y] = s} P^x_a \otimes \sum_{b: b[x] = s} Q^y_b \otimes \mathbb{I}_{E}\Bigg] \\
    &= \frac{1}{9}\sum_{s\in\{0,1\}} \Bigg[P^{xy}_s \otimes \sum_{b: b[x] = s} Q^y_b \otimes \mathbb{I}_{E}\Bigg].
\end{align}
Replacing Alice's projections with $\{P_s^{xy}\}$ and accounting for inputs $x,y$ in the proof of \cite[Lemma 2.3]{BFF23} yields precisely the statement in \thref{MSG_di_vn}.

In practice, the expression in \thref{MSG_di_vn} bounding the entropy $H(S_A|XYE)$ in the MSG contains too many operators to feasibly optimize using the NPA hierarchy.
However, note $H(S_A|XYE)_\nu = \sum_{x,y}\pi(x,y)H(S_A|X=x,Y=y;E)_{\nu^{xy}}$ where $\nu^{xy}_{S_AXYE}= \sum_{s}\pi(x,y)[s]\otimes \nu_{E}^{sxy}$ is the final state with fixed inputs $(x,y)$ and $\pi(x,y)$ is uniform.
This implies $H(S_A|XYE)_\nu$ is lower bounded by $H(S_A|X\in\mathcal{X}',Y\in\mathcal{Y}';E)_\nu$ for any subsets $\mathcal{X}'\subset\mathcal{X}, \mathcal{Y}'\subset\mathcal{Y}$ which may be chosen to make the NPA optimization feasible.
In particular, $H(S_A|X\in\mathcal{X}',Y\in\mathcal{Y}';E)_\nu$ is lower bounded by replacing the inner summation in the expression of \thref{MSG_di_vn} with
\begin{align}\label{MSG_di_vn_smol}
    \sum_{\substack{s\\ x\in\mathcal{X}' \\ y\in\mathcal{Y}'}} \frac{1}{|\mathcal{X}'\times\mathcal{Y}'|}\inf_{Z_{s,x,y}} &\Tr\Bigg[\nu_{Q_AQ_E}\Bigg(\sum_{a: a[y]=s}P_a^x\otimes (Z_{s,x,y} + Z_{s,x,y}^\dagger + (1-t_i)Z_{s,x,y}^\dagger Z_{s,x,y})\Bigg)\Bigg]\\
    &\quad+ \Tr[\nu_{Q_AQ_E}t_i(\mathbb{I}_{S_A}\otimes Z_{s,x,y}Z_{s,x,y}^\dagger)].
\end{align}

Results from these optimizations choosing $\mathcal{X}'=\{0\}$ and $\mathcal{Y}'=\{0\}$, and $\mathcal{X}'=\{0,1,2\}$ and $\mathcal{Y}'=\{0\}$ are presented in Figures \ref{fig:MSG_asymptotics} (alongside the corresponding Devetak-Winter rates) and \ref{fig:MSG_ent_comparison}. 
In the latter, we also compare the min-entropy bound derived in the previous section: $-\log(1-\oexp+h(\oexp))$ for $\oexp\in[\frac{8}{9},1]$ and $h:[\frac{8}{9},1]\rightarrow\mathbb{R}$ the continuous upper bound defined in Eq.~\eqref{msg_prob_upper_bound}.

It is apparent in Figure \ref{fig:MSG_ent_comparison} that as a consequence of having to restrict the set of inputs in Eq.~\eqref{MSG_di_vn_smol} to make the numerical optimization feasible, the resultant von Neumann entropy bounds are not optimal ---in particular, for $\oexp\lesssim 0.97$ the min-entropy bound gives a higher value. 
When reporting the finite and asymptotic statistics for the MSG-DIQKD in Section \ref{sec_msg_keyrates} we use whichever entropy bound yields better key rates. 

Further, we remark that when Alice and Bob require a winning probability of $1$, $H_{\min}(S_A|XYE)=H(S_A|XYE)=1$ as opposed to the roughly $H_{\min}(S_A|XYE)\approx 0.85$ in Figure \ref{fig:MSG_ent_comparison}.
We attribute this error to the amount of points in Figure \ref{fig:MSG_cons_w_upperbound} from which $h$ is constructed ---indeed the function $h$ converges to the tripartite winning probability bound as the number of points increases. 

\begin{figure}[h]
    \centering
    \includegraphics[width=0.6\textwidth]{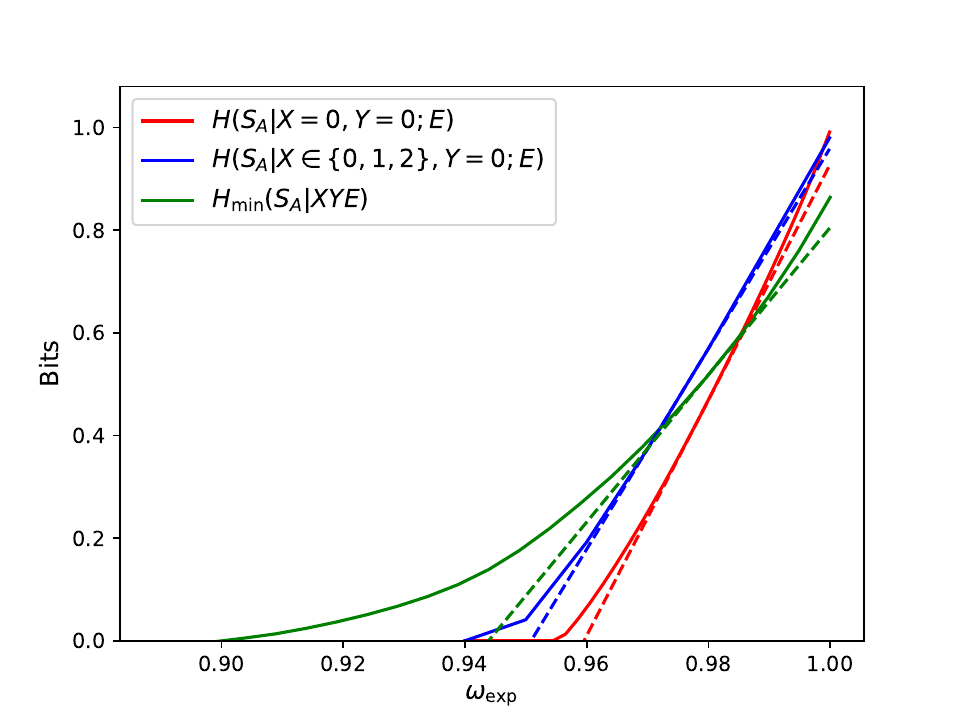}
    \caption{Comparison of bounds to the entropy of Alice's raw key conditioned on Eve's information. 
    Solid green line corresponds to the negative logarithm of Eq.~\eqref{min_ent_bound}, whilst the solid red line and solid blue line correspond to the bounds on $H(S|X\in\mathcal{X}',Y=0;Q_E)$ for $\mathcal{X}'=\{0\}, \{0,1,2\}$ respectively, found via \thref{MSG_di_vn} and Eq.~\eqref{MSG_di_vn_smol}.
    The corresponding dashed lines are affine lower bounds to each entropy at the point $0.98$, computed using Eq.~\eqref{affine_constrained} for the min entropy (green) and Eq.~\eqref{msg_vn_lower_bound} for the von Neumann entropies (red, blue).}
    \label{fig:MSG_ent_comparison}
\end{figure}

In the remainder of this section, we use the points computed using the optimization from \thref{MSG_di_vn} to construct an affine lower bound to the von Neumann entropy $H(S_A|XYE)$ in one instance of the magic square game. 
Analogous to the previous section, given a sequence of constraints $[x_k]_{k=0}^{N-1}\subset[8/9,1]$ with corresponding optimums $[y_k]_{k=0}^{N-1}\subset[0,1]$ of the optimization problem in \thref{MSG_di_vn}, then a lower bound to $H(S_A|XYE)$ is given by the continuous function $h_\textup{vN}:[8/9,1]\rightarrow\mathbb{R}$ defined by

\begin{align}\label{msg_vn_lower_bound}
    h_\textup{vN}(\oexp) := 
    \begin{cases}
        0 & \text{if } x_0 \leq \oexp \leq x_1 \\
        \frac{y_{k-1}-y_k}{x_{k-1}-x_k}(\oexp-x_k) - y_k & \text{if } x_{k+1} \leq \oexp \leq x_{k+2} \text{ for } k=0,...,N-3.
    \end{cases}
\end{align}

Since the von Neumann entropy is convex and monotically increasing in the constraint $\oexp\in[8/9,1]$, the function $h_\textup{vN}$ obtained via the lower bounds of \thref{MSG_di_vn} is indeed a lower bound which converges as the number of points $N$ goes to infinity.
Hence, any tangent $g_\textup{vN}:[8/9,1]\rightarrow\mathbb{R}$ to $h_\textup{vN}$ is a suitable affine lower bound to the von Neumann entropy which may replace the affine function $g$ in the finite key length of Eq.~\eqref{l_key}.
This is because $g_\textup{vN}$ is a lower bound to 
\begin{align}
    \min_{\omega_{REF}}\{H(S_A|E)_{\mathcal{N}_i(\omega_{REF})}: \Tr_{S_AREF}[\mathcal{N}_i^\textup{test}(\omega_{REF})] = (1-\oexp)\ketbra{0}_{C_i}+\oexp\ketbra{1}_{C_i}\}
\end{align}
for $\mathcal{N}_i$ as in Eq.~\eqref{infrequent_sampling_channel}, following the same logic as in Eqs.~\eqref{pseudo_mintradeoff1}-\eqref{pseudo_mintradeoff2} and the remarks thereafter.

\section{Conclusions}
In this work we introduce a framework for device independent {quantum key distribution} {leveraging on} any two party non-local game {with a winning condition that} requires Alice and Bob to produce a matching bit. 
The security of this protocol is derived from a monogamy-of-entanglement property which states that the winning probability of the non-local game played solely by Alice and Bob is higher than the winning probability of the same game extended to three parties, in which the third party receives all inputs and is required to guess the bit produced by Alice and Bob. 
Using the gap in the winning probabilities in the two and three party games we provide a very simple affine bound to the min-entropy of each round which is sufficient to bound an eavesdropper's knowledge of the sifted keys in the quantum key distribution protocol proposed in algorithm \ref{prot_qkd_actual}.

We exemplify our general framework with the magic square game, for which we provide a numerical bound on the maximal tripartite winning probability of about $\frac{8.00077}{9}$ which is close to the maximal classical bipartite winning probability of $\frac{8}{9}$.
We use the NPA hierarchy to compute bounds on the min-entropy of Alice's bit using the tripartite winning probabilities (Section \ref{sec_msg_min_ent}) of the magic square game.
Further, we also use the the techniques from \cite{BFF23} to compute bounds on the respective von Neumann entropy (Section \ref{sec_msg_vn}).
Using a combination of the min- and von Neumann entropy bounds we provide a simple numerical construction for min-tradeoff functions suitable for the EAT/GEAT.
We use these bounds to numerically optimize the finite and asymptotic key rates of our protocol using the magic square game as primitive, obtaining robustness up to roughly $2.88\%$ in depolarizing noise ---the highest tolerance reported for MSG based DIQKD.

Lastly, we remark that our proof of secrecy is mostly independent of our monogamy-of-entanglement framework and the subsequent choice of min-tradeoff function $g$. 
In fact, as can be inferred from Sections \ref{sec_min_tradeoff} and \ref{sec_final_bounds}, any affine function $g:\mathcal{P}(\{0,1\})\rightarrow\mathbb{R}$ satisfying Eq.~\eqref{affine_g} derived for a specific primitive may be substituted into \thref{th_secrecy} to yield a secrecy guarantee.
In particular, the secrecy proof remains valid when considering games where the set of inputs in testing rounds differs from those in generation rounds, provided $g$ is chosen appropriately.
For example, \cite[Eq.~(28)]{TSB+22} may be directly used in Eq.~\eqref{l_key} to bound $\lkey$ when considering a DIQKD protocol based on the CHSH game where Alice always produces inputs in $\{0,1\}$ whereas Bob produces $\{0,1\}$ in generation rounds but $\{2,3\}$ in testing rounds.

\section*{Acknowledgements}
We would like to thank Armando Bellante for early discussions on this work, as well as Mario Berta, Christian Boghiu, Peter Brown and Javier Rivera-Dean for tips on the usage of the ncpol2sdpa python library.
We are additionally grateful to Ernest Y.-Z. Tan for valuable discussions and tips on entropy accumulation based proofs, and also Roberto Rubboli for useful comments on earlier versions of this manuscript.
Lastly we thank Acharya Tejas and Ian George for pointing out a mistake in the original manuscript's calculation of the magic square game's QBER.
E.C.M.\ and M.T.\ are funded by the National Research Foundation, Singapore and A*STAR under its CQT Bridging Grant. This research is also supported by the National Research Foundation, Singapore and A*STAR under its Quantum Engineering Programme (NRF2021-QEP2-01-P06).

\bibliographystyle{quantum}
\bibliography{refs}

\newpage
\appendix
\section{Simplifying $\omega_\textup{Q}(\msg)$}\label{app_msg_npa}
In this appendix we provide a simplification to the objective function in the optimization problems for the tripartite magic square game in Eq.~\eqref{qvalue_pure} which we used to compute the numerical results in \thref{th_msg} and Figure \ref{fig:MSG_cons_w_upperbound}.

Recall the optimization problem
\begin{align}\label{app_qvalue_pure}
    \omega_\textup{Q}(\msg) = \sup_{\mathcal{S}}\frac{1}{9} \sum_{x,y} \bra{\phi} \Pi^{xy}\ket{\phi},
\end{align}
where
\begin{align}\label{pi_xy_app}
    \Pi^{xy} := \sum_{c\in\{0,1\}} \bigg[ F_c^{xy} \otimes \sum_{a\in \mathcal{A}: a[y] = c} P^x_a \otimes \sum_{b\in \mathcal{B}: b[x] = c} Q^y_b \bigg].
\end{align}
We begin by providing a simplified form of Eq.~\eqref{pi_xy_app}.
Let 
\begin{align}
    P^x_{000}&=\mathbb{I}_{Q_A}-P^x_{011}-P^x_{101}-P^x_{110},\\
    Q^y_{111}&=\mathbb{I}_{Q_B}-Q^y_{001}-Q^y_{010}-Q^y_{100},\\
    F^{xy}_1&=\mathbb{I}_{Q_C}-F^{xy}_0
\end{align}
for all $x,y\in\{0,1,2\}$.
A straightforward calculation then shows that Eq.~\eqref{pi_xy_app} reduces to
\begin{align}\label{pi_xy_simplified}
    \Pxy = \sum_{a:a[y]=1}\sum_{b:b[x]=0}\Big[\frac{1}{2} F_0^{xy}\otimes\mathbb{I}_{Q_A}\otimes\Qby + \frac{1}{2}(1-F_0^{xy})\otimes\Pax\otimes\mathbb{I}_{Q_B}-\mathbb{I}_{Q_C}\otimes\Pax\otimes\Qby\Big].
\end{align}

Now, let
\begin{align}
    \Pi^{xy}_{AB} &:= \frac{1}{9}\sum_{c\in\{0,1\}} \mathbb{I}_{Q_C}\otimes \sum_{a\in \mathcal{A}: a[y] = c} P^x_a \otimes \sum_{b\in \mathcal{B}: b[x] = c} Q^y_b,\\
    \Pi^{xy}_{AC} &:= \frac{1}{9}\sum_{c\in\{0,1\}} \Fcxy \otimes \sum_{a\in \mathcal{A}: a[y] = c} P^x_a  \otimes \mathbb{I}_{Q_B},\\
    \Pi^{xy}_{BC} &:= \frac{1}{9}\sum_{c\in\{0,1\}} \Fcxy\otimes \mathbb{I}_{Q_A} \otimes\sum_{b\in \mathcal{B}: b[x] = c} Q^y_b.
\end{align}

Without changing the optimum we add the following constraints to the optimization problem in Eq.~\eqref{app_qvalue_pure}:
\begin{align}\label{constraint_one}
    &\bra{\phi}\sum_{xy}\Pi^{xy}_{AB}\ket{\phi}\geq \frac{8}{9},\\
    &\bra{\phi}\sum_{xy}\Pi^{xy}_{AC}\ket{\phi}\geq \frac{8}{9}, \label{constraint_two} \\ 
    &\bra{\phi}\sum_{xy}\Pi^{xy}_{BC}\ket{\phi}\geq \frac{8}{9}.\label{constraint_three}
\end{align}
Indeed, since the classical value of $\msg$ is $\frac{8}{9}$, we expect each pair of parties to succeed with probability at least $\frac{8}{9}$ when using a quantum strategy.

Applying the NPA hierarchy to the optimization problem in Eq.~\eqref{app_qvalue_pure} with the simplified expression in Eq.~\eqref{pi_xy_app} and the additional constraints from Eq.~\eqref{constraint_one}-\eqref{constraint_three} yields the approximate value of $\frac{8.00077}{9}$ reported in \thref{th_msg} (see the code in supplemental material \cite{Cervero23}).
We found that these heuristical changes significantly improved the value output by the NPA hierarchy.
We used a mixed relaxation of the hierarchy, combining level $2$ and two thirds of the monomials of the form $P\otimes Q\otimes F$ (which were themselves picked arbitrarily by selecting two out of every three in the list $[P,Q,F]_{P,Q,F}$ in order).

Lastly, we remark that the constraints in Eq.~\eqref{constraint_two} and \eqref{constraint_three} need to be removed when optimizing the constrained problem in Eq.~\eqref{msg_constrained} to generate Figure \ref{fig:MSG_cons_w_upperbound}.

\end{document}